\numberwithin{equation}{section}
\DeclareMathOperator{\E}{\mathbb{E}}
\DeclareMathOperator{\dist}{dist}
\DeclareMathOperator{\Span}{span}
\DeclareMathOperator{\supp}{supp}
\DeclareMathOperator{\argmin}{argmin}
\renewcommand{\Pr}[2][]{\mathbb{P}_{#1} \left\{ #2 \rule{0mm}{3mm}\right\}}
\newcommand{\ip}[2]{\langle#1,#2\rangle}
\newcommand{\Bigip}[2]{\Big\langle#1,#2\Big\rangle}
\def \N {\mathbb{N}}
\def \R {\mathbb{R}}
\def \a {\alpha}
\def \b {\beta}
\def \g {\gamma}
\def \e {\varepsilon}
\def \d {\delta}
\def \l {\lambda}
\def \one {{\textbf 1}}
\def \Id {\mathrm{Id}}
\newtheorem{theorem}{Theorem}[section]
\newtheorem{proposition}[theorem]{Proposition}
\newtheorem{lemma}[theorem]{Lemma}
\newtheorem{definition}[theorem]{Definition}
\theoremstyle{remark}
\newtheorem{remark}[theorem]{Remark}
\begin{document}

\title[Private sampling and synthetic data]{Private sampling: a noiseless approach for generating differentially private synthetic data}

\author{March Boedihardjo}
\address{Department of Mathematics, University of California Irvine}
\email{marchb@uci.edu}
\author{Thomas Strohmer}
\address{Center of Data Science and Artificial Intelligence Research University of California, Davis \\ and Department of Mathematics, University of California Davis}
\email{strohmer@math.ucdavis.edu}
\author{Roman Vershynin}
\address{Department of Mathematics, University of California Irvine}
\email{rvershyn@uci.edu}

\maketitle

\begin{abstract}
In a world where artificial intelligence and data science become omnipresent, data sharing is increasingly locking horns with data-privacy concerns.
Differential privacy has emerged as a rigorous framework for protecting individual privacy in a statistical database, while releasing  useful  statistical  information  about  the  database. The standard way to implement differential privacy is to inject a sufficient amount of noise into the data. However, in addition to other limitations of differential privacy, this process of adding noise will affect data accuracy and utility.  Another approach to enable privacy in data sharing is based on the concept of synthetic data. The goal of synthetic data is  to create an as-realistic-as-possible dataset, one that not only maintains the nuances of the original data, but does so without risk of exposing sensitive information.
The combination of differential privacy with synthetic data has been suggested as a  best-of-both-worlds solutions. In this work, we propose the first noisefree method to construct differentially private synthetic data; we do this through a mechanism called ``private sampling''.  Using the Boolean cube as benchmark data model,  we derive explicit bounds on accuracy and privacy of the constructed synthetic data. The key mathematical tools are hypercontractivity, duality, and empirical processes. A core ingredient of our private sampling mechanism is a rigorous  ``marginal correction'' method, which has the remarkable property that importance reweighting can be utilized to exactly match the marginals of the sample to the marginals of the population.

\end{abstract}

\section{Introduction}\label{s:intro}

In a world where artificial intelligence and data science are penetrating more and more aspects of our life, data sharing is increasingly locking horns with data-privacy concerns. This conflict  is playing out around the globe, as private and public organizations are trying to find ways to share data without compromising sensitive personal information.

There exist various attempts to protect sensitive information in data.
Historically the way to share private information without betraying privacy was through {\em anonymization}~\cite{zhou2008brief},
i.e., by stripping away enough identifying information from a dataset, so that the so-modified data could be shared freely.  Anonymization, however, proved to be a fragile means to protect data privacy.
In actuality, identifying individuals using seemingly non-unique identifiers is far easier than proponents of data anonymization expected.
For instance, Netflix and AOL customers were all accurately identified from purportedly anonymized data.
De-identification requires precise definitions of ``unique identifiers''. Furthermore, de-identification suffers from an aging problem: it is already quite  difficult enough to determine exactly what data identifies information that needs to be protected (say, the identity of individuals), but it is even more difficult to accurately predict what potential auxiliary information could be available in the
future.  This leads to an arms race between  de-identification and re-identification.

The well-documented failures of anonymization have prompted aggressive research on data sanitization, ranging from $k$-anonymity~\cite{sweeney2002k,bayardo2005data} to today’s highly acclaimed differential privacy~\cite{dwork2014algorithmic}. The concept of k-anonymity  was introduced to address the risk of re-identification of anonymized data through linkage to other datasets. The idea behind {\em $k$-anonymity} is to maintain privacy by guaranteeing that for every record in a database there  are $k$ of indistinguishable copies.

{\em Differential privacy} is a framework to quantify the extent to which individual privacy in a statistical database is preserved while releasing  useful  statistical  information  about  the  database~\cite{dwork2014algorithmic}.  Differential privacy is a popular and robust method that  comes with a rigorous mathematical framework and provable guarantees.
Differential privacy can protect aggregate information, but not sensitive information in general. Also, if enough identical queries are asked, the protection provided by differential privacy is diluted. Additionally, if the query being asked requires high specificity, then it is more difficult to uphold differential privacy. In any case,  in all the aforementioned methods the basic tradeoff between utility and privacy represents a serious limitation.

{\em Synthetic data}  provide a promising concept  to solve this conundrum~\cite{bellovin2019privacy}. The goal of synthetic data
is  to create an as-realistic-as-possible dataset, one that not only maintains the nuances of the original data, but does so without risk of exposing sensitive information. Synthetic datasets are generated from existing datasets and maintain the statistical properties of the original dataset. Since (ideally) synthetic data contain no protected information, the datasets can be shared freely
among investigators in academia or industry, without security and privacy concerns.

It has been frequently recommended that synthetic data may be combined with differential privacy to achieve a best-of-both-worlds scenario~\cite{hardt2012simple,bellovin2019privacy,kearnsroth2020,liu2021leveraging,BSV2021a}. As observed in~\cite{bellovin2019privacy}, ``The most ideal data to use in any analysis will always be original data. But when that option is not available, synthetic data plus differential privacy offers a great compromise.''  Synthetic data  are not only a succinct way of representing the
answers to large numbers of queries, but they also permit one to carry out other data analysis tasks, such as visualization or regression.

The standard way to achieve differential privacy is to add noise, either to the data queries, the data themselves, or in case of synthetic data during the data generation process, for a small sample of work see e.g.~\cite{dwork2014algorithmic,hardt2012simple,hardt2010multiplicative,aydore2021differentially,liu2021leveraging,de2021compressive}.
Unfortunately, noise will negatively affect utility and can inject systematic errors---hence bias---into the data~\cite{pujol2020fair,zhu2020bias,fioretto2021decision}. To illustrate these issues, assume the dataset under consideration consists of images, each depicting the face of a person.
We can attempt to generate a differentially private synthetic dataset by adding a sufficient amount of noise to each image (e.g., by adding random noise~\cite{newton2005preserving} or by distorting or blurring the images~\cite{ren2018learning,yang2021study}), such that the persons in the images can no longer be identified. Ignoring for the moment the possibility of re-identifying a person by applying denoising or deblurring techniques to the distorted images, it is clear that utility of this dataset can decrease significantly during this process of adding noise, perhaps to the point that many of the nuances one might be interested in are no longer present.

To illuminate the effect of introducing systematic error when adding noise to ensure differential privacy, we just need to look at the issues reported with  differentially private US Census 2020 demonstration data, which have resulted in diminished quality of statistics for small populations such as tribal nations~\cite{wezerek,pujol2020fair,fioretto2021decision}.

These considerations raise a fundamental question:

\vspace*{2mm}
\centerline{\em Can we  generate differentially private synthetic data without adding noise?}

\vspace*{2mm}
In this paper, we give a positive and constructive answer. Using the Boolean cube as our data model, we will develop a noiseless method to generate synthetic data, which approximately preserve low-dimensional marginals of the original dataset. Our method is based on a {\em private sampling} framework and comes with explicit bounds on privacy and accuracy.  The key mathematical tools are hypercontractivity, duality, and empirical processes.
 A core ingredient of our private sampling framework is a rigorous  ``marginal correction'' method, which has the remarkable property that importance reweighting can be utilized to {\em exactly} match the marginals of the sample to the marginals of the population.

There exist other methods to generate differentially private synthetic data without adding noise, such as those based on generative adversarial networks~\cite{lu2017poster,abay2018privacy,brock2018large,zhu2019electrocardiogram,delaney2019synthesis}. However, these methods are just empirical and do not come with any rigorous bounds regarding accuracy or privacy. Those deep learning based methods that do come with privacy guarantees---but still without any accuracy guarantees---require injecting noise into the synthetic data generation process~\cite{xie2018differentially,jordon2018pate,beaulieu2019privacy}.

\if 0
\subsection{Organization of the paper}
The rest of this paper is organized as follows. Section~\ref{s:dp} reviews the concept of differential privacy and related relevant results. Section~\ref{s:main} presents our proposed algorithm and the main theoretical results providing explicit bounds on privacy and accuracy of our algorithm. The proofs of these results are rather involved and take up most of the remaining paper.  Specifically, Section~\ref{s:fourier}  is devoted to Fourier analysis  and hypercontractivity of functions on the Boolean cube, Sections~\ref{s:empirical1}
-\ref{s:empiricalsparse} deal with empirical measures, while the actual

....
\fi

\section{Synthetic data and differential privacy}\label{s:dp}

Differential  privacy  has  emerged as  the  de  facto  standard  for  guaranteeing  privacy  in data sharing. Recall the definition of differential privacy:

\begin{definition}[Differential Privacy~\cite{dwork2014algorithmic}]  {\em A randomized mechanism ${\mathcal M}:  {\mathcal S}^N \to {\mathcal R}$ satisfies $\e$-differential privacy
if for any two adjacent datasets $X_1, X_2 \in {\mathcal S}^N$ differing by one element,  and any output subset  ${\mathcal O} \in {\mathcal R}$ it holds that}
$${\mathbb P} [{\mathcal M}(X_1) \in {\mathcal O}] \le e^\e \cdot {\mathbb P} [{\mathcal M}(X_2) \in {\mathcal O}].$$
\end{definition}

Numerous techniques have been proposed for generating privacy-preserving synthetic data (e.g.~\cite{abowd2001disclosure,burridge2003information,abay2018privacy,dahmen2019synsys,mendelevitch2021fidelity}), but without providing formal privacy guarantees.
Almost all existing mechanisms to implement differential privacy inject some sort of noise 
into the data or the data queries, see e.g.\ the Laplacian mechanism~\cite{dwork2006calibrating}. This is also the case for differentially private synthetic data, see for instance~\cite{li2014differentially,barak2007privacy}.

Obviously, we want our synthetic data to be similar to the original data. To that end we need some metrics to measure similarity. A common and natural choice is to try to (approximately) preserve low-dimensional marginals~\cite{barak2007privacy,thaler2012faster}. A marginal of the data $X$ is the fraction of the elements $x_i$ with specified values of specified parameters. On the one hand, marginals are important in their own right as a tool of statistical analysis.
On the other hand, if the synthetic data preserve e.g.\ two-dimensional marginals (i.e., covariance matrices) with
sufficient accuracy,  the synthetic dataset is expected to inherit other significant properties from the original dataset,
such as similar  behavior with respect to clustering, classification or regression\footnote{So far this expectation has only
been verified empirically in various papers, while a rigorous mathematical verification is an important open problem.}.

However, we are immediately met with a remarkable {\em no-go} theorem  due to Ullman and Vadhan~\cite{ullman2011pcps}. They proved the surprising result that (under standard cryptographic assumptions) there is no polynomial-time differentially private algorithm that takes a dataset $X \in (\{0,1\}^p)^n$ and outputs a synthetic dataset $Y \in (\{0,1\}^p)^k$ such that all two-dimensional marginals of $Y$  are approximately equal to those of $X$.

There is an extensive literature on privately releasing answers to linear queries, but without producing synthetic data, see e.g.~\cite{hardttalwar,nikolov2013geometry,dwork2015efficient} for a small sample. Another line of important work deals with
with privacy-preserving data analysis in a statistical framework~\cite{duchi2018minimax,cai2019cost}; but they also are not concerned with synthetic data. The papers~\cite{barak2007privacy,hardt2010multiplicative,hardt2012simple,blum2013learning} propose a range of interesting methods for producing approximately accurate private synthetic data. However, the associated algorithms have running time
that is at least exponential in $p$.

Luckily, already a slightly relaxed formulation of the worst-case no-go result in~\cite{ullman2011pcps} already leads to positive results. For example, if we relax ``all marginals'' to ``most marginals'',  it is shown in~\cite{BSV2021a}  that there exists a polynomial-time differentially private algorithm generating synthetic data $Y \in (\{0,1\}^p)^k$ such that the error between the marginals of $Y$ and $X$ is small. Remarkably, the result does not only hold for two-dimensional marginals, but for marginals of {\em all dimensions}. If we relax ``worst data'' to ``typical data'', generating accurate differentially private  synthetic Boolean (or other domain constrained) data becomes tractable~\cite{BSV2021b}.

Yet,  in all the aforementioned papers differential privacy is achieved by adding noise during the data generation process. In this paper we propose an alternative, {\em noise-free}, mechanism called {\em private sampling}.

\section{Main result}\label{s:main}

We model the true data $X=(x_1,\ldots,x_n)$ as a sequence of $n$ points from the Boolean cube $\{0,1\}^p$,  which is a standard benchmark data model~\cite{barak2007privacy,ullman2011pcps,hardt2012simple,ping2017datasynthesizer,liu2021leveraging,benaim2020analyzing}. For example, $X$
might represent the health records of $n$ patients, where each health record consists of $p$ parameters. These parameters are $0/1$ numbers that represent the answers to the standard health history questionnaire, such as ``does the patient smoke?'', ``does the patient have diabetes?". We can also represent categorical data
(gender, occupation, etc.) or numerical data (by splitting them into intervals)  on the Boolean cube via binary or one-hot encoding.

We would like to manufacture a synthetic dataset $Y=(y_1,\ldots,y_k)$, another sequence of $k$ elements of the cube.
Our two desiderata are {\em privacy} and {\em accuracy}.
Specifically, we would like the synthetic data to be differentially private,
and all low-dimensional marginals of $Y$ to exactly or approximately match those of  $X$.

We recall that on the Boolean cube, a marginal of a function $f:\{-1,1\}^p \to \R$ is defined as a sum of values of $f$ on the points of the cube that have specified values of specified parameters. For example, a two-dimensional marginal of $f$ is $\sum_{x \in \{-1,1\}^p} f(x) \one_{\{x(1)=x(2)=1\}}(x)$. If $f$ is a density, a marginal can be interpreted as the probability that a random point $Z$ drawn from the cube according to $f$ has specified values of specified parameters; in the example below it is $\Pr{Z(1)=Z(2)=1}$. Marginals of the data $X=(x_1,\ldots,x_n)$ can be interpreted as marginals of the uniform density $f_n=\frac{1}{n} \sum_{i=1}^n \one_{x_i}$ on $X$.
An example of a two-dimensional marginal is the fraction of elements $x_i$ whose first and second parameters equal $1$, i.e.
$\frac{1}{n} \sum_{i=1}^n \one_{\{x_i(1)=x_i(2)=1\}}$. This could represent for example the number of  patients who smoke and have diabetes.

\medskip
Here we explore a new {\em noiseless} approach: take a new sample $S = (s_1,\ldots,s_m)$ uniformly from the cube, reweight $S$ to make the marginals match those of the true data $X$, and resample from the weighted sample $S$.

But is this even possible? Let us assume the dataset $X = (x_1,\ldots,x_n)$ is drawn from the cube independently and according to some unknown density. Draw a new sample $S = (s_1,\ldots,s_m)$ according to some known density, for example uniformly from the cube\footnote{Since the cardinality of $S$ will be chosen to be smaller than that of the dataset $X$, we call $S$ also the reduced space. 
}. Can we reweight $S$ so that the reweighted sample has approximately the same marginals as $X$? Note that there are precisely $\binom{p}{\le d}$ marginals of degree at most $d$, where $\binom{p}{\le d} := \binom{p}{0}+\binom{p}{1}+\cdots + \binom{p}{d}$.
Surprisingly, we can even match all marginals {\em exactly}.

Let us state it this result informally; a rigorous, non-asymptotic and more general statement is given in Theorem~\ref{thm: master}.

\begin{theorem}[Matching marginals]		\label{thm: two measures}
  Consider two regularly varying densities\footnote{A density $f$ is {\em regularly varying} if
	$\sup f(x)/f(y) = O(1)$
	where the supremum is over all points $x$ and $y$ in the cube.
	Our results are more general; as we will see shortly, the regularity assumption can be relaxed.}
  on the cube $\{0,1\}^p$,
  and draw two independent samples $X$ and $S$
  from the cube according to these two distributions.
  If $\min(\abs{X},\abs{S}) \gg e^{2d} \binom{p}{\le d}$, then with probability $1-o(1)$
  there exists a density on $S$ that has exactly the same
  marginals up to dimension $d$ as the uniform distribution on $X$.
\end{theorem}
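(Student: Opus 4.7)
The plan is to reformulate exact-matching as a linear-feasibility problem on the probability simplex and solve it by a two-stage procedure: guess a good initial weight vector, then add a small correction. Write $\chi_\alpha(x)=\prod_{i\in\alpha}x_i$ for the Walsh basis of the space $V_d$ of polynomials of degree $\le d$ on $\{-1,1\}^p$, so that $\dim V_d = D:=\binom{p}{\le d}$. A probability measure $\mu=\sum_{j=1}^m w_j\,\delta_{s_j}$ supported on the reduced sample matches the degree-$\le d$ marginals of the empirical measure $\mu_X=\tfrac{1}{n}\sum_i\delta_{x_i}$ precisely when $\Phi w = b$, where $\Phi\in\R^{D\times m}$ is defined by $\Phi_{\alpha,j}=\chi_\alpha(s_j)$ and $b\in\R^D$ by $b_\alpha=\tfrac{1}{n}\sum_i\chi_\alpha(x_i)$, subject to the positivity constraint $w\ge 0$. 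Since $\chi_\emptyset\equiv 1$, the $\alpha=\emptyset$ row already enforces $\sum_j w_j=1$, so non-negativity is the only remaining structural requirement.

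For the initial point I would take the importance-reweighted uniform weights
\[
w^0_j:=\frac{1}{m}\cdot\frac{f(s_j)}{g(s_j)},
\]
where $f$ and $g$ are the densities of the populations of $X$ and $S$ with respect to the uniform measure on the cube. Regular variation gives $w^0_j\in[c/m,\,C/m]$ deterministically, providing the positive margin that any later correction must respect. By construction $\E[\Phi w^0]$ equals the Fourier vector of $f$ restricted to $V_d$, and so does $\E[b]$; hence the residual $r:=b-\Phi w^0$ splits as the sum of two zero-mean empirical averages, one over $n$ draws from $f$ and one over $m$ draws from $g$, and each of its $D$ coordinates is the empirical mean of values of a single degree-$\le d$ polynomial. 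A Bernstein-type bound whose moment input is the hypercontractive inequality $\|h\|_q\le(q-1)^{d/2}\|h\|_2$ for $h\in V_d$ then yields $\|r\|_2^2\lesssim D/\min(n,m)$ with probability $1-o(1)$.

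The correction step is $w:=w^0+\Phi^\tran(\Phi\Phi^\tran)^{-1}r$, which satisfies $\Phi w=b$ by construction. I need to verify (i) that $\tfrac{1}{m}\Phi\Phi^\tran$ is invertible with a good condition number and (ii) the $\ell_\infty$-bound $\|\Phi^\tran(\Phi\Phi^\tran)^{-1}r\|_\infty\ll 1/m$, so that $w_j=w^0_j+o(1/m)>0$ uniformly in $j$. For (i) note that $\E[\tfrac{1}{m}\Phi\Phi^\tran]_{\alpha\beta}=\E_g[\chi_\alpha\chi_\beta]$ is close to $\delta_{\alpha\beta}$ by regularity; a matrix Bernstein inequality, with the size of each rank-one summand $\chi(s_j)\chi(s_j)^\tran$ (where $\chi(s_j):=(\chi_\alpha(s_j))_{|\alpha|\le d}$) controlled by hypercontractivity, yields $\|\tfrac{1}{m}\Phi\Phi^\tran-\Id\|\le\tfrac{1}{2}$ with high probability once $m\gg e^{2d}D$. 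This is the threshold that generates the exponential factor in the theorem. Given (i), the $\ell_2$-bound $\|\Phi^\tran(\Phi\Phi^\tran)^{-1}r\|_2\lesssim\|r\|_2/\sqrt{m}$ is then automatic.

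The main obstacle will be (ii): upgrading the $\ell_2$ control to $\ell_\infty$ control. Writing the $j$-th coordinate of the correction as $\langle\chi(s_j),\,(\tfrac{1}{m}\Phi\Phi^\tran)^{-1}(r/m)\rangle$ exhibits it as a degree-$\le d$ polynomial evaluated at $s_j$ itself. The route I would take is a moment estimate $\|\cdot\|_{2q}\le(2q-1)^{d/2}\|\cdot\|_2$ on $V_d$, optimized at $q\asymp\log m$, combined with a union bound over $j=1,\ldots,m$; the hypercontractive gain must beat the logarithmic penalty from the union bound, and this is the tight point that forces precisely the hypothesis $m\gg e^{2d}D$. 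Once this $\ell_\infty$ bound is secured, the deterministic lower bound $w^0_j\ge c/m$ guarantees $w_j>0$ for every $j$, and the construction delivers a probability density on $S$ whose degree-$\le d$ marginals exactly agree with those of $\mu_X$.
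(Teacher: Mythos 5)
Your overall strategy---start from the importance-reweighted weights $w^0_j=\frac{1}{m}\frac{f(s_j)}{g(s_j)}$ and then make a small correction so the low-degree marginals match exactly---is the same one the paper uses. In the paper's notation your $w^0$ is exactly the function $\tilde g_m=(f/g)g_m$, and the existence statement proved there is precisely that one can correct $\tilde g_m$ (without touching its low-degree Fourier content) while staying $O(\delta/m)$-close in $\ell_\infty$. Where you diverge, however, is in \emph{how} the correction is constructed and bounded, and this is where your argument has a real gap.

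You take the minimum-$\ell_2$ correction $\Delta w=\Phi^{\tran}(\Phi\Phi^{\tran})^{-1}r$ and propose to bound $\|\Delta w\|_\infty$ by viewing $(\Delta w)_j=P(s_j)$ for a degree-$\le d$ polynomial $P$ with small $L^2$ norm, and then applying a hypercontractive moment bound plus a union bound over $j$. The trouble is that $P$ is a function of $S$: its Fourier coefficient vector $a=(\tfrac1m\Phi\Phi^{\tran})^{-1}(r/m)$ depends on $(s_1,\ldots,s_m)$ through both $\Phi$ and (part of) $r$, and you are evaluating $P$ at the very points $s_j$ it was built from. So $\Pr\{|P(s_j)|>t\}\le\E|P(s_j)|^{2q}/t^{2q}$ is not available: the expectation factorization you need requires $P$ and $s_j$ independent, and they are not. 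You would need a leave-one-out or decoupling device, and you do not supply one. Note also that even if this could be fixed, the Markov-plus-union-bound step optimized at $q\asymp\log m$ costs a factor $(\log m)^{d/2}$, which leads to a threshold of the form $\min(n,m)\gg(\log m)^d\binom{p}{\le d}$; in the interesting regime $m\asymp e^{2d}\binom{p}{\le d}$ one has $\log m\asymp d\log p$, so $(\log m)^d\gg e^{2d}$ and you lose relative to the claimed $e^{2d}$ factor. (A smaller issue: $\E_g[\chi_\alpha\chi_\beta]$ is not $\delta_{\alpha\beta}$ for a general regularly varying $g$---only for $g$ uniform---so your invertibility argument for $\tfrac1m\Phi\Phi^{\tran}$ as stated needs repair, and matrix Bernstein with summands of operator norm $D$ would in any case introduce an extraneous $\log D$.)

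The paper avoids all of this by never pinning down a specific correction. Instead it proves (Lemma~\ref{lem: 1weighted 2}, then Theorem~\ref{thm: transfer}) that with high probability over $S$ the identity map $(W^{\le d},\|\cdot\|_{L^1_\delta})\to L^2$ has norm at most $\frac{2e^d2^p}{\alpha m}$---a statement driven by the $(1,2)$-hypercontractive inequality $\|F\|_{L^2}\le e^d\|F\|_{L^1}$ and the empirical-$L^1$ deviation bound of Proposition~\ref{prop: sampling in L1}---and then \emph{dualizes}. The dual statement is: for \emph{every} $F$, there exists $w\in W^{>d}$ with $F-w$ supported on $S$ and $\|F-w\|_\infty\le\frac{2e^d2^p}{\alpha m}\|F^{\le d}\|_{L^2}$. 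This is an $\ell_\infty$ bound \emph{uniform in $F$}, so it can be applied to $F=f_n-\tilde g_m$ even though $\tilde g_m$ depends on $S$; the circular dependence you would otherwise face simply does not arise, and the $e^{2d}$ factor falls out of the hypercontractivity without any union bound. In exchange, the existence is non-constructive, whereas your pseudoinverse formula is explicit---but the paper recovers constructivity separately by noting that $h^*$ is computable by a linear program.
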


\begin{remark}
  To match all $\binom{p}{\le d}$ marginals of dimension at most $d$, it makes sense to have at least as many
  data points. This explains the requirement on $n$ in the theorem heuristically
  (but not rigorously).
  The prefactor $e^{2d}$ is negligible compared to $\binom{p}{\le d}$ if $d \ll p$.
\end{remark}

As a ``non-example'' for Theorem~\ref{thm: two measures}, consider a probability measure supported
on the set of patients whose first parameter equals $0$, and a different probability measure supported on the set of patients whose first parameter equals $1$.
Then even a one-dimensional marginal -- the distribution of the first parameter -- will be different for $X$ and $Y$, no matter how $Y$ is reweighted.
This example shows that some form of regularity assumption will be required in the theorem.

The density $h^*$ on $S$ that is guaranteed by Theorem~\ref{thm: two measures} can be {\em computed efficiently}. Indeed, this task can be set up as a linear program with $\abs{S}$ variables (the values of the density on $S$), $\binom{p}{\le d}$ linear equations (to match the marginals to those of $X$), and $\abs{S}$ linear inequalities (to ensure the density is nonnegative on $S$).

Once this density $h^*$ is computed, we can generate synthetic data $Y = (y_1,\ldots,y_k)$ by drawing independent points from $S$ according to the density $h^*$.

\subsection{Private sampling}

Is such synthetic data $Y$ private? Here is a general tool that basically says: yes, $Y$ is private as long as the density $h^*$ has bounded sensitivity.

\begin{lemma}[Private sampling]		\label{lem: private sampling}
  Let $\Omega$ be a finite set.
  Let $f$ be a mapping that takes a dataset $X$ as input
  and returns a probability mass function $f(X)$ on $\Omega$.
  Suppose $\e>0$ and $k \in \N$ are chosen so that
  $$
  \norm{f(X_1)/f(X_2)}_\infty \le \exp(\e/k)
  $$
  for all datasets $X_1$ and $X_2$ that differ on a single element.
  Then the algorithm that takes $X$ as input and returns
  a sample of $k$ points drawn from $\Omega$ independently
  and according to the distribution $f(X)$ is $\e$-differentially private.
\end{lemma}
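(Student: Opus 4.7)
The plan is to prove $\e$-differential privacy by directly estimating the ratio of output probabilities pointwise in $\Omega^k$, then summing over the output set. Since the mechanism returns $k$ i.i.d.\ draws from $\Omega$ with distribution $f(X)$, the probability of observing any particular ordered tuple $(y_1,\ldots,y_k) \in \Omega^k$ factorizes as $\prod_{i=1}^k f(X)(y_i)$. This product form is what makes the hypothesis on $\|f(X_1)/f(X_2)\|_\infty$ match up naturally with the $\e/k$ budget per coordinate.

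Concretely, fix adjacent datasets $X_1, X_2$. For any $y = (y_1,\ldots,y_k) \in \Omega^k$, I would write
\[
\frac{\mathbb{P}\{\MM(X_1) = y\}}{\mathbb{P}\{\MM(X_2) = y\}}
= \prod_{i=1}^k \frac{f(X_1)(y_i)}{f(X_2)(y_i)}
\le \prod_{i=1}^k \exp(\e/k) = e^\e,
\]
where the inequality is the pointwise hypothesis applied at each coordinate. Here I should note a small support issue: the finiteness of $\|f(X_1)/f(X_2)\|_\infty$ forces $\mathrm{supp}(f(X_1)) \subseteq \mathrm{supp}(f(X_2))$, so the ratio is well-defined wherever the numerator is positive, and both probabilities vanish simultaneously when any $y_i$ lies outside the common support, making the inequality trivially valid there.

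To upgrade from tuples to arbitrary output events $\OO \subseteq \Omega^k$, I would sum the pointwise bound:
\[
\mathbb{P}\{\MM(X_1) \in \OO\} = \sum_{y \in \OO} \mathbb{P}\{\MM(X_1) = y\}
\le e^\e \sum_{y \in \OO} \mathbb{P}\{\MM(X_2) = y\}
= e^\e \, \mathbb{P}\{\MM(X_2) \in \OO\}.
\]
This is exactly the definition of $\e$-differential privacy.

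There is no real obstacle here; the proof is a one-line product computation once one recognizes that independence converts the per-coordinate multiplicative perturbation $e^{\e/k}$ into the desired overall budget $e^\e$. The only care required is the bookkeeping around supports and the fact that the hypothesis is symmetric in $X_1, X_2$, so the same bound holds with the roles exchanged, giving the full differential privacy guarantee.
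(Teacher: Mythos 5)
Your proof is correct and takes the same approach as the paper: factor the output probability of a $k$-tuple into a product over coordinates, bound each factor by $e^{\e/k}$ using the hypothesis, and sum over the output event. You are slightly more careful than the paper in spelling out the support condition and the summation step, but the argument is identical.
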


\begin{proof}
The probability that a given $k$-tuple of points $\omega_1,\ldots,\omega_k \in \Omega$ is drawn when sampled from distribution $f(X_1)$
equals $\prod_{i=1}^k f(X_1)(\omega_i)$.
Similarly, the probability that this same tuple is drawn when sampled
from distribution $f(X_2)$ equals $\prod_{i=1}^k f(X_2)(\omega_i)$.
If the databases $X_1$ and $X_2$ differ on a single element, the assumption implies
that the ratio of these probabilities is bounded by
$\prod_{i=1}^k \exp(\e/k) = \exp(\e)$. This means that the sampling mechanism
is $\e$-differentially private.
\end{proof}

\subsection{Difficulties and their resolution}

Unfortunately, the density $h^*$ guaranteed by Theorem~\ref{thm: two measures}
is too sensitive. Indeed, the sensitivity bound in Lemma~\ref{lem: private sampling} needs to be proved for {\em arbitrary} input data, while Theorem~\ref{thm: two measures} only works with high probability. For some input data $X$, a suitable density exists, and for another input data $Z$, no suitable density exists. Moving from $X$ toward $Z$ by changing one data point at a time, we can find a pair of datasets $X_1$ and $X_2$ that differ in a single data point so that the algorithm succeeds to find a density for $X_1$ and fails for $X_2$. This means that the algorithm is non-private.

The other issue is that there can be (and usually are) many suitable densities $h^*$. Which one to chose? How to devise a selection rule that upholds privacy?

In other words, we need to work around the possible non-existence and non-uniqueness of the solution. We resolve both issues here. To ensure existence, we employ {\em shrinking}: we move the solution space (the set of all functions on $S$, possibly negative-valued, that have the same marginals as $X$) toward the uniform density on $S$ until the resulting set contains a nonnegative function (thus a density).
For the selection rule, we choose the closest solution to the uniform density on $S$ in the $L^2$ metric.

Furthermore, while $S$ is chosen randomly, we do need $S$ to be  {\em well-conditioned}
in a sense that will be discussed in detail in Section~\ref{s:solutionspace}. At this point suffice it to say that (i)~the well-conditionedness of $S$ can be expressed in terms of a  bound on the smallest singular value $\sigma_{\min}(M)$ of the $m \times \binom{p}{\le d}$ matrix $M$ with entries $w(s)$, where $s \in S$ and $w$ is a Walsh function\footnote{See Section~\ref{s:fourier} for basic definitions related to Fourier analysis of the Boolean cube.} of degree at most $d$;
(ii)~the well-conditionedness of $M$ can be easily achieved and easily verified.

This leads us to the algorithm outlined in the next subsection.

\subsection{Algorithm}

We provide a high-level description of our proposed method in Algorithm~\ref{mainalgorithm1}.

\begin{algorithm}[h!]
\caption{Private sampling synthetic data algorithm}
\label{mainalgorithm1}
\begin{algorithmic}

\State {\bf Input:} a sequence $X$ of $n$ points in  $\{-1,1\}^p$ (true data); $m$:  cardinality of $S$;
$d$: the degree of the marginals to be matched; parameters $\delta, \Delta$ with $\Delta>  \delta > 0$.

\begin{enumerate}[1.]

\item Draw $m$ points from $\{-1,1\}^p$ independently and uniformly,
and call this set $S$ (reduced space).

\item Form the $m \times \binom{p}{\le d}$ matrix $M$ with entries $w(s)$, where $s \in S$ and $w$ is a Walsh function of degree at most $d$.
If the smallest  singular value of $M$ is bounded below by $\sqrt{m}/2e^d$, call $S$ well conditioned and proceed. Otherwise return ``Failure'' and stop.

\item Consider the affine space $H$ consisting of all densities on $S$ that have exactly the same marginals up to dimension $d$ as the true data $X$.

\item If necessary, shrink $H$ toward the uniform density on $S$ just so the resulting affine space $\tilde{H}$ contains a density that is lower bounded by $2\d/m$ and upper bounded by $(\Delta-\delta)/m$.

\item Among all densities in $\tilde{H}$ that are lower bounded by $\d/m$ and upper bounded by $\Delta/m$, pick one closest to the uniform density in the $L^2$ norm.

\end{enumerate}

\State {\bf Output:} a sequence $Y$ of $k$ points from $S$ according to this density.

\end{algorithmic}

\end{algorithm}

The well-conditionedness of $S$ in Algorithm~\ref{mainalgorithm1} defined via the condition $\sigma_{\min}(M) > \sqrt{m}/2e^d$ essentially says that the subsampled Walsh basis is almost orthogonal. The scaling $\sqrt{m}$ is natural: the entries of $M$ all have  absolute value $1$, hence the columns of $M$ have Euclidean norm $\sqrt{m}$. If we had $\sigma_{\min}(M) = \sqrt{m}$,  this would imply that  the columns of $M$  (the subsampled Walsh functions) are mutually orthogonal. We require a relaxed (by a factor $2e^d$) version of this orthogonality. 

What if $S$ fails the desired condition? We can simply resample $S$ until it is well conditioned. But this is only a useful strategy if the chances of success are sufficiently high. Under some mild conditions (see Section~\ref{s:solutionspace})  success happens with probability $>1/2$, hence the expected number or trials until success is $\le 2$. 
This way Algorithm~\ref{mainalgorithm1} succeeds deterministically, but its running time becomes random (albeit with the rather modest expected overhead time $\le 2$).

\bigskip

\begin{definition}\label{def:accuracy}
We say that the synthetic dataset $Y$ is {\em $\d$-accurate} if each of its marginals
up to degree (or dimension) $d$ is within $\d$ from the corresponding
marginal of the true dataset $X$.
\end{definition}

The following  theorem guarantees the accuracy and privacy of the algorithm. We state it informally here, and more accurately in Theorems~\ref{thm: privacy} and \ref{thm: accuracy}.

\begin{theorem}[Privacy and accuracy]		\label{thm: privacy-accuracy}
  Let the size of the reduced space $S$ satisfy $m \asymp e^{2d} \binom{p}{\le d}$.
  \begin{enumerate}[(a)]
    \item Algorithm~\ref{mainalgorithm1} succeeds (i.e. does not return ``Failure'') with high probability. 
    \item If the size of the synthetic data satisfies
       $k \ll \sqrt{n} /m$,
  	then Algorithm~\ref{mainalgorithm1} is $o(1)$-differentially private.
    \item Suppose $n \gg e^{2d} \binom{p}{\le d}$, $k \gg \log \binom{p}{\le d}$, and the true data points $X$ are sampled independently from some density that is upper bounded by $\Delta/2^p$. Then, with high probability, the synthetic data generated via Algorithm~\ref{mainalgorithm1} is $o(1)$-accurate up to dimension $d$.
  \end{enumerate}
\end{theorem}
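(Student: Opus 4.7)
The plan is to handle the three conclusions separately, each relying on a different ingredient: matrix concentration via hypercontractivity for (a), a sensitivity analysis of a constrained $L^2$-projection for (b), and Hoeffding plus a ``no-shrinkage'' argument for (c).

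For part~(a), success requires both that $M$ be well-conditioned and that the shrunken affine space $\tilde{H}$ contain a density in the prescribed box. I would write $\frac{1}{m}M^{\tran}M=\frac{1}{m}\sum_{i=1}^m u(s_i)u(s_i)^{\tran}$, where $u(s_i)\in\R^{\binom{p}{\le d}}$ collects the values of all Walsh functions of degree at most $d$ at $s_i$. Orthonormality of the Walsh basis under the uniform measure gives $\E[\frac{1}{m}M^{\tran}M]=I$; a crude entrywise bound is too weak, so I would control the moments of $u(s_i)u(s_i)^{\tran}$ via hypercontractivity for degree-$d$ polynomials on the cube (Bonami) and feed the resulting tail bound into a matrix Bernstein/Rudelson-type inequality. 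This yields $\sigma_{\min}(M)\ge\sqrt{m}/2e^d$ with high probability once $m\asymp e^{2d}\binom{p}{\le d}$, explaining the appearance of $e^{2d}$ in the hypothesis. Existence of an admissible density in $\tilde H$ then follows from Theorem~\ref{thm: two measures}, with the shrinking step guaranteeing that we can move strictly into the interior of the box.

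For part~(b), I would apply Lemma~\ref{lem: private sampling} to the map $X\mapsto h^{*}$. On the interior of the box, $h^{*}=\bar h+M\alpha$ with $\bar h\equiv 1/m$ and $\alpha=(M^{\tran}M)^{-1}(\mu_X-M^{\tran}\bar h)$, where $\mu_X$ is the vector of $X$-marginals. Changing one point of $X$ perturbs $\mu_X$ by $\Delta\mu$ with $\|\Delta\mu\|_2\le 2\sqrt{\binom{p}{\le d}}/n$, so
\[
\|h^{*}(X_1)-h^{*}(X_2)\|_\infty \le \max_i \|M^{\tran}e_i\|_2\,\|(M^{\tran}M)^{-1}\|\,\|\Delta\mu\|_2 \lesssim \frac{e^{2d}\binom{p}{\le d}}{mn}\asymp\frac{1}{n},
\]
using $\|M^{\tran}e_i\|_2=\sqrt{\binom{p}{\le d}}$ and $m\asymp e^{2d}\binom{p}{\le d}$. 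Since $h^{*}\ge\d/m$, the per-sample log ratio is $O(m/(\d n))$, so Lemma~\ref{lem: private sampling} gives $\e=O(km/n)$, and the hypothesis $k\ll\sqrt{n}/m$ delivers $\e=o(1)$. When the box constraints are active I would invoke the KKT conditions to argue that the constrained $L^2$-projection remains nonexpansive in $\mu_X$, so the same bound survives.

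For part~(c), decompose the error for any Walsh function $w$ of degree at most $d$ as
\[
|\widehat{\mu}_Y(w)-\mu_X(w)| \le |\widehat{\mu}_Y(w)-\mu_{h^{*}}(w)| + |\mu_{h^{*}}(w)-\mu_X(w)|.
\]
The first term is handled by Hoeffding ($|w|\le 1$) and a union bound over the $\binom{p}{\le d}$ Walsh functions, giving $O(\sqrt{\log\binom{p}{\le d}/k})=o(1)$ since $k\gg\log\binom{p}{\le d}$. The second term measures the distortion caused by the shrinking in Step~4; I would show that under the stated data-generating assumption, with high probability the unshrunken affine space $H$ already contains a density strictly between $2\d/m$ and $(\Delta-\d)/m$, so $\tilde H=H$ and this term vanishes. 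This rests on concentration of the empirical marginals of $X$ (via the density bound $\Delta/2^p$) combined with a quantitative version of Theorem~\ref{thm: two measures} valid because $n\gg e^{2d}\binom{p}{\le d}$. The main obstacle, in my view, is the constrained-projection sensitivity bound in part~(b): when the box constraints bind, one needs a careful convex-analytic argument to confirm that the Lipschitz constant stays governed by $\sigma_{\min}(M)^{-1}$. A secondary challenge is the zero-shrinkage claim in part~(c), which fuses the hypercontractive concentration of part~(a) with a quantitative rendering of the matching-marginals theorem.
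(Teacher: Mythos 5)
Your part~(a) outline is a genuinely different route: you propose matrix Bernstein/Rudelson with hypercontractive moment bounds on $u(s_i)u(s_i)^{\tran}$, whereas the paper never invokes matrix concentration. Instead it proves an $L^1$ empirical-process bound for degree-$d$ polynomials (Proposition~\ref{prop: sampling in L1}, via symmetrization and Talagrand contraction), upgrades it to $L^2$ with hypercontractivity, and then shows by a duality argument (Lemma~\ref{lem: well conditioned}) that this is \emph{equivalent} to $\sigma_{\min}(M)\ge\sqrt{m}/2e^d$. Both strategies are defensible; the paper's has the advantage of reusing the same $L^1$ bound already needed for Theorem~\ref{thm: transfer}. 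One small misreading: ``success'' in Algorithm~\ref{mainalgorithm1} is only about Step~2, since the shrinkage in Step~4 always terminates (as $\l\to1$ the shrunken space collapses to $u_m$, which lies in the target box whenever $\d\le 1/2\le \Delta-\d$). You do not need a separate existence argument there.

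The serious gap is in part~(b). The formula $h^{*}=\bar h+M(M^{\tran}M)^{-1}(\mu_X-M^{\tran}\bar h)$ describes the \emph{unconstrained, unshrunken} $L^2$-proximal point in the affine space $H(f_n)$. The actual output of the algorithm is a proximal point in $\tilde H\cap[\d/m,\Delta/m]^S$, after a data-dependent shrinkage $\l(H)$. You assert that when the box constraints are active the projection ``remains nonexpansive in $\mu_X$,'' but this is not true: projecting a fixed reference point onto a \emph{moving} closed convex set is not Lipschitz in the Hausdorff distance, only H\"older-$1/2$. That is exactly why the paper proves Lemma~\ref{lem:proximalinL2}, which gives $\norm{x_r(K_1)-x_r(K_2)}^2\le 4\max_i\dist(r,K_i)\cdot d(K_1,K_2)$. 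Chaining this with the Lipschitz stability of the box restriction (Lemma~\ref{lem: restriction}) and of the shrinkage (Lemmas~\ref{lem: shrinkage}--\ref{lem: shrinkage onto cube}) is the heart of the privacy proof, and the square root is not an artifact: it is why the resulting sensitivity is of order $n^{-1/2}$ (Lemma~\ref{lem: sensitivity of density}), not $n^{-1}$ as you claim, and why the threshold $k\ll\sqrt n/m$ appears. Your derivation reaches the stated conclusion only because $O(km/n)\le O(1/\sqrt n)$ under that hypothesis, but the $1/n$ sensitivity estimate itself is unjustified and stronger than anything the paper establishes.

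In part~(c) you want to show $\tilde H=H$ (zero shrinkage) and make the $|\mu_{h^*}(w)-\mu_X(w)|$ term vanish. That argument needs a two-sided regularity bound on the density $f$: the paper's Remark~\ref{rem: no shrinkage} shows $\l=0$ only under $3\d/2^p\le f\le(\Delta-2\d)/2^p$. The hypothesis in part~(c) is one-sided ($f\le\Delta/2^p$), so $f$ may vanish, $h$ may dip negative, and shrinkage can be strictly positive. The paper instead proves the weaker but sufficient claim $\l\le 3\d$ with high probability, giving $|P(h^*)-P(f_n)|\le 3\d$, and closes with Bernstein for the sampling error. You should replace the zero-shrinkage claim with this quantitative small-shrinkage bound.
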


For a more formal presentation of Algorithm~\ref{mainalgorithm1}, see Algorithm~\ref{mainalgorithm2} below. A formal version of part (a) of Theorem~\ref{thm: privacy-accuracy} is shown in Proposition~\ref{prop: success}; part (b) is shown in Theorem~\ref{thm: privacy} and Remark~\ref{rem: DP simplified}; part (c) is shown in Theorem~\ref{thm: accuracy}. The mathematical techniques to prove these results revolve around Fourier analysis of Boolean functions and empirical processes, see Sections~\ref{s:fourier}--\ref{s:lowdegree}.

In case the true data $X$ is sampled form a regular density, the algorithm will not apply any shrinkage, since in this case Theorem~\ref{thm: two measures} guarantees the existence of a solution. (We make this rigorous in Remark~\ref{rem: no shrinkage}.) In this case, the private synthetic data $Y$ will be sampled in an {\em unbiased way} from the density $h^*$ that has {\em exactly} the same marginals as the true data $X$.


\subsection{Further remarks}

There is a one-sample version of Theorem~\ref{thm: two measures}. Let us state it here informally; a more accurate statement is given in Theorem~\ref{thm: one measure full}.

\begin{theorem}[Marginal correction]		\label{thm: one measure}
  Consider a regularly varying density $f$ on the cube $\{0,1\}^p$ 
  and draw an independent sample $S$
  from the cube according to this distribution.
  If $\abs{S} \gg e^{2d} \binom{p}{\le d}$, then with probability $1-o(1)$
  there exists a density $h$ on $S$ that has exactly the same
  marginals as $f$ up to dimension $d$. Moreover, $h$ is within a $1+o(1)$ factor
  of the uniform density on $S$.
\end{theorem}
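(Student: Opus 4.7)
The plan is to parallel the argument for the two-sample version of Theorem~\ref{thm: two measures} while targeting the marginals of $f$ directly. I parameterize the candidate density by
$$
h(s) \;=\; \frac{1+g(s)}{m}, \qquad s\in S,
$$
where $g$ is to be chosen in the subspace $P_d := \mathrm{span}\{w_\alpha : |\alpha|\le d\}$ of Walsh polynomials of degree at most $d$. Writing $g=\sum_{|\beta|\le d} c_\beta w_\beta$ and forming the sampled Walsh matrix $M\in\R^{m\times \binom{p}{\le d}}$ with entries $M_{i\alpha}=w_\alpha(s_i)$, the requirement that $h$ match the degree-$\le d$ marginals of $f$ exactly collapses to the square linear system
$$
\tfrac{1}{m}M^\tran M\, c \;=\; b,
\qquad
b_\alpha \;:=\; \hat f(\alpha) - \tfrac{1}{m}\sum_{i=1}^m w_\alpha(s_i),
$$
the $\alpha=\emptyset$ row encoding $\sum_s h(s)=1$. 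Any solution yields an $h$ with the correct marginals, and the remaining conclusions of the theorem both follow from the pointwise estimate $\max_i|g(s_i)|=o(1)$, which guarantees $h(s_i)=(1+o(1))/m$.

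To solve and bound this system I would establish two ingredients. First, the Gram matrix $\Sigma := m^{-1}M^\tran M$ is well-conditioned with high probability: the regular variation of $f$ makes $\E\Sigma$ spectrally bounded below by a positive constant, and matrix Bernstein combined with Bonami--Beckner $L^q$-bounds for degree-$\le 2d$ Walsh products shows that in the regime $m\gg e^{2d}\binom{p}{\le d}$ one has $\|\Sigma-\E\Sigma\|_{\mathrm{op}}\le \tfrac14$ with probability $1-o(1)$, so $\Sigma\succeq c I$ for some $c=\Omega(1)$ depending only on $f$. Second, the right-hand side is small: each $b_\alpha$ is an average of $m$ iid mean-zero variables bounded by $1$, giving $\E|b_\alpha|^2\le 1/m$ and hence $\E\|b\|_2^2\le \binom{p}{\le d}/m = o(e^{-2d})$; Markov then yields $\|b\|_2=o(e^{-d})$ with high probability. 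Combining these, the unique solution satisfies $\|g\|_{L^2}=\|c\|_2 \lesssim \|b\|_2 = o(e^{-d})$.

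The step I anticipate as the main obstacle is upgrading this $L^2$ control to the pointwise bound $\max_i|g(s_i)|=o(1)$, because the polynomial $g$ is constructed from the very sample $S$ on which it must be evaluated. My plan is to prove a uniform deviation inequality on the finite-dimensional subspace $P_d$ of the form
$$
\max_{i\le m}|p(s_i)| \;\le\; C(\log m)^{d/2}\|p\|_{L^2} \qquad \text{for every }p\in P_d,
$$
valid with probability $1-o(1)$. The natural route is a chaining argument on the unit sphere of $P_d$, using the Bonami--Beckner tail estimate $\P_{s\sim f}[|p(s)|>t\|p\|_2]\le 2\exp(-c_d t^{2/d})$ at each scale; the hypercontractive structure of the Walsh basis simultaneously controls covering numbers and pointwise tails. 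Applying this uniform comparison to our specific $g$, together with $\|g\|_{L^2}=o(e^{-d})$ and the regime $\log m \asymp d + \log\binom{p}{\le d}$, yields $\max_i|g(s_i)|=o(1)$; from this both the existence of a density $h$ with the required marginals and its $1+o(1)$ closeness to the uniform density on $S$ follow immediately.
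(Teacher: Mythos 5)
Your high-level plan — solve a linear system to hit the degree-$\le d$ Fourier coefficients and then argue the solution is a small pointwise perturbation of $1/m$ on $S$ — is natural, and the $L^2$-level estimates (conditioning of the sampled Gram matrix $\Sigma$, Markov bound for $\|b\|_2$) are broadly in the same spirit as the paper's tools. But there are two genuine problems, and the second is fatal as stated.

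First, your ansatz is strictly more restrictive than what the paper constructs. In the paper (Theorem~\ref{thm: transfer} $\Rightarrow$ Theorems~\ref{thm: master}, \ref{thm: one measure full}) the candidate is $h = f - w$ with the correction $w$ taken in the \emph{high}-frequency space $W^{>d}$, and the solution space of admissible $h$ supported on $S$ with the right low-degree coefficients has dimension about $m - \binom{p}{\le d}$. You instead force $m\,h - \one_S = g|_S$ to be the restriction of a \emph{low}-degree polynomial $g \in W^{\le d}$, which pins down a single point in that affine space. Nothing a priori says this particular point is close to the uniform density; the paper proves closeness only for some (carefully chosen via duality) point, not for your restricted one.

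Second, and this is the decisive gap, the uniform pointwise estimate you plan to prove,
\[
\max_{i\le m}|p(s_i)| \;\le\; C(\log m)^{d/2}\,\|p\|_{L^2} \qquad \text{for every } p\in W^{\le d},
\]
is deterministically false in the regime of interest. Let $N := \binom{p}{\le d}$ and take $p(\cdot) = \frac{1}{\sqrt{N}}\sum_{|\alpha|\le d} w_\alpha(s_1)\,w_\alpha(\cdot)$. Then $\|p\|_{L^2}=1$ by Parseval, while $p(s_1) = \frac{1}{\sqrt N}\sum_\alpha w_\alpha(s_1)^2 = \sqrt{N}$. So
\[
\sup_{\|p\|_{L^2}\le 1,\; p\in W^{\le d}}\ \max_{i\le m}|p(s_i)| \;\ge\; \sqrt{N},
\]
and in fact equals $\max_i\|w(s_i)\|_2=\sqrt N$, for \emph{any} realization of the sample. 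When $m \asymp e^{2d}N$ one has $(\log m)^{d/2} \ll \sqrt{N}$ (unless $N$ is bounded), so no hypercontractive chaining can rescue the claim: the failure is not probabilistic but structural, caused by degree-$\le d$ polynomials built from the sample points themselves — exactly what your $g$ is. The remaining bound $\|g\|_{L^2}=o(e^{-d})$ together with Cauchy--Schwarz at the worst sample point only yields $\max_i|g(s_i)| \le o(e^{-d})\sqrt N$, which is not $o(1)$ when $N\gg e^{2d}$.

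The paper avoids this entirely. Hypercontractivity and the $L^1$ empirical-process bound (Proposition~\ref{prop: sampling in L1}) give a lower bound $\|F\|_{L^2}\lesssim \frac{e^d 2^p}{\alpha m}\|F\|_{L^1_\delta}$ for $F\in W^{\le d}$; dualizing this $L^1\!\to L^2$ embedding yields an $L^2\!\to L^\infty_{1/\delta}/W^{>d}$ bound on the quotient. This produces, for the target function $F=f-f_m$, a correction $w\in W^{>d}$ with $f-w$ supported on $S$ and $\|f-w-f_m\|_\infty \le \delta/m$, directly in $L^\infty$ and with no attempt to evaluate a data-dependent low-degree polynomial on the data. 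If you wish to pursue a constructive/linear-algebraic route, you would need to either move the correction to $W^{>d}$ (recovering the duality structure), or carry out a genuinely leave-one-out/decoupling analysis of $g(s_i) = b^\tran\Sigma^{-1}w(s_i)$ that exploits the weak dependence between $s_i$ and the rest of the sample, rather than any uniform supremum bound over $W^{\le d}$.
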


The law of large numbers tells us that the sample $S$ must have {\em approximately} the same marginals as the density $f$ from which $S$ was drawn. Theorem~\ref{thm: one measure} tells us that we can make the marginals {\em exactly} the same by a slight reweighting of $S$, i.e. by weights that are all $1+o(1)$.

\section{Fourier analysis}\label{s:fourier}

The proof of Theorem~\ref{thm: two measures} is based on hypercontractivity, duality, and empirical processes.

Let us start by recalling the basic Fourier analysis on the Boolean cube~\cite{o2014analysis}. It is more convenient to work on $\{-1,1\}^p$ than on $\{0,1\}^p$; all results are easily translatable from one cube to the other.

The {\em Walsh functions} $w_J : \{-1,1\}^p \to \{-1,1\}$ are
indexed by subsets $J \subset [p]$ and are defined as
\begin{equation}	\label{eq: Walsh}
w_J (x) = \prod_{j \in J} x(j),
\end{equation}
with the convention $w_\emptyset = 1$.

The canonical inner product on the space of real-valued functions on $\{-1,1\}^p$
is defined as
$$
\ip{f}{g}_{L^2} = \frac{1}{2^p} \sum_{x \in \{-1,1\}^p} f(x) \, g(x).
$$
This inner product defines the space $L^2 = L^2(\{-1,1\}^p)$.
More generally, for $1 \le q < \infty$, the $L^q = L^q(\{-1,1\}^p)$ is the space of
real-valued functions on the cube with the norm
$$
\norm{f}_{L^q} = \Big( \frac{1}{2^p} \sum_{x \in \{-1,1\}^p} \abs{f(x)}^q \Big)^{1/q}.
$$

Walsh functions form an orthonormal basis of $L^2$, so any function $f:\{-1,1\}^p \to \R$ admits a Fourier expansion
$$
f = \sum_{J \in [p]} \hat{f}_J w_J,
\quad \text{where } \hat{f}_J = \ip{f}{w_J}
\text{ are Fourier coefficients}.
$$
Thus, any function $f$ on the cube can be orthogonally decomposed into low and high frequencies:
$$
f = f^{\le d} + f^{>d},
$$
where
$$
f^{\le d} = \sum_{J \in [p],\, \abs{J} \le d} \ip{f}{w_J} w_J
\quad \text{and} \quad
f^{> d} = \sum_{J \in [p],\, \abs{J} > d} \ip{f}{w_J} w_J.
$$
Clearly, the function $f^{\le d}$ is determined by the Fourier coefficients of $f$ up to dimension $d$, and vice versa.

We say that a function $f$ on the cube has {\em degree at most $d$}
if $f = f^{\le d}$. Such functions form the ``low-frequency'' space
$$
W^{\le d} = \left\{ f:\; f = f^{\le d} \right\} = \Span\{w_J: \; \abs{J} \le d\},
$$
and it has dimension $\binom{p}{\le d}$.
The orthogonal complement to this subspace in $L^2$ is the
``high-frequency'' subspace
$$
W^{>d} = \left\{ f:\; f = f^{>d} \right\} = \Span\{w_J: \; \abs{J} > d\}.
$$

The following result is well known, see \cite[Theorem~9.22]{o2014analysis}:

\begin{theorem}[Hypercontractivity]		\label{thm: 1 2 hypercontractivity}
  For any $d \le p$ and any function $f: \{-1,1\}^p \to \R$ of degree at most $d$,
  we have
  $$
  \norm{f}_{L^2} \le e^d \norm{f}_{L^1}.
  $$
\end{theorem}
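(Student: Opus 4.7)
The plan is to deduce this $L^1 \to L^2$ bound from the more standard $(2 \to q)$-hypercontractive inequality for $q > 2$, combined with log-convexity of the $L^p$ norms and a limiting argument $q \downarrow 2$ to extract the sharp constant $e^d$.

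First I would invoke the Bonami--Beckner inequality: for every $q \ge 2$ and every $g : \{-1,1\}^p \to \R$ of degree at most $d$,
\begin{equation*}
\|g\|_{L^q} \le (q-1)^{d/2}\, \|g\|_{L^2}.
\end{equation*}
This is Theorem~9.21 in \cite{o2014analysis} and is proved by first establishing the two-point inequality $\|a + bY\|_{L^q} \le \|a + \sqrt{q-1}\,bY\|_{L^2}$ for a uniform $\pm 1$ random variable $Y$, and then tensorizing across the $p$ coordinates. Next, I would apply the log-convexity of $p \mapsto \log \|f\|_{L^p}$ to the triple $(1,2,q)$: writing $\tfrac{1}{2} = (1-\theta)\cdot 1 + \theta/q$ with $\theta = q/(2(q-1))$, we have
\begin{equation*}
\|f\|_{L^2} \le \|f\|_{L^1}^{\,1-\theta}\, \|f\|_{L^q}^{\,\theta}.
\end{equation*}
Substituting the Bonami--Beckner bound on $\|f\|_{L^q}$ into the right-hand side produces a factor of $\|f\|_{L^2}^{\,\theta}$, which I would absorb onto the left; using $\theta/(1-\theta) = q/(q-2)$, rearrangement yields
\begin{equation*}
\|f\|_{L^2} \le (q-1)^{\,qd/(2(q-2))}\, \|f\|_{L^1}.
\end{equation*}

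Finally, using the elementary estimate $\log(q-1) \le q-2$ (valid for $q > 2$, by concavity of the logarithm), the prefactor is bounded by $e^{qd/2}$; letting $q \downarrow 2$ produces the claimed constant $e^d$. The only non-routine ingredient is the Bonami--Beckner hypercontractivity itself, so in practice the proof of Theorem~\ref{thm: 1 2 hypercontractivity} would amount to a citation of \cite{o2014analysis} followed by the short interpolation/optimization sketched above; the main obstacle, were one to write it out from scratch, would be the two-point hypercontractive inequality, a one-variable calculus exercise that is well documented in the literature.
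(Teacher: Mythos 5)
The paper does not prove Theorem~\ref{thm: 1 2 hypercontractivity}; it cites it directly as \cite[Theorem~9.22]{o2014analysis}, so there is no in-paper argument to compare against. Your derivation is correct, and it is essentially the standard argument underlying the cited result: the $(2\to q)$-Bonami--Beckner inequality for degree-$d$ polynomials, log-convexity of $p\mapsto\log\|f\|_{L^p}$ to interpolate $\|f\|_{L^2}\le\|f\|_{L^1}^{1-\theta}\|f\|_{L^q}^\theta$ with $\theta=q/(2(q-1))$, absorbing the resulting $\|f\|_{L^2}^\theta$, and then using $\log(q-1)\le q-2$ and $q\downarrow 2$ to extract the constant $e^d$. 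The exponent arithmetic ($\theta/(1-\theta)=q/(q-2)$, yielding $(q-1)^{qd/(2(q-2))}$) and the limiting step (the bound holds for every $q>2$, so one may pass to the infimum $e^d$) are both sound.
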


\subsection{Connection to marginals}

The low-degree Fourier coefficients of $f:\{-1,1\}^p \to \R$ determine the low-dimensional marginals of $f$. More precisely, $f^{\le d}$ determines the values of all marginals of $f$ up to dimension (or degree) $d$.

To see this, consider the example of the two-dimensional marginal in which the first parameter is set to $1$ and the second is set fo $-1$. The value of such marginal of $f$ is $\sum_{x \in \{-1,1\}^p} f(x) \one_{\{x(1)=1,\,x(2)=-1\}}$. Now,
$$
\one_{\{x(1)=1,\,x(2)=-1\}}(x)
= \one_{\{x(1)=1\}}(x) \one_{\{x(2)=-1\}}
= \Big( \frac{1+x(1)}{2} \Big) \Big( \frac{1-x(2)}{2} \Big),
$$
so expanding the right hand side and using the definition of Walsh functions, we see that
$$
\one_{\{x(1)=1,\,x(2)=-1\}}
= \frac{1}{4} \left( w_\emptyset + w_{\{1\}} - w_{\{2\}} - w_{\{1,2\}} \right).
$$
Thus, the marginal can be written as
$$
\sum_{x \in \{-1,1\}^p} f(x) \one_{\{x(1)=1,\,x(2)=-1\}}
= \frac{1}{4} \left( \hat{f}_\emptyset + \hat{f}_{\{1\}} - \hat{f}_{\{2\}} - \hat{f}_{\{1,2\}} \right),
$$
and so it depends only on the Fourier coefficients on $f$ up to degree $2$,
or equivalently only on $f^{\le 2}$.

\section{Empirical processes}\label{s:empirical}

Let $\mu$ be a probability measure on $\{-1,1\}^p$, and let
$$
\mu_m = \frac{1}{m} \sum_{i=1}^m \d_{\theta_i}
$$
be the corresponding (random) {\em empirical measure}, i.e., the uniform probability measure on the sample $\{\theta_1,\ldots,\theta_m\}$ of points drawn from the cube independently according to the distribution $\mu$.
These two measures define the population and empirical $L^q$ norms of functions on the cube:
\begin{equation}	\label{eq: Lq norms}
\norm{F}_{L^q(\mu)}^q \coloneqq \E \abs{F(\theta_1)}^q; \qquad
\norm{F}_{L^q(\mu_m)}^q \coloneqq \frac{1}{m} \sum_{i=1}^m \abs{F(\theta_i)}^q.
\end{equation}
We clearly have
$\E \norm{F}_{L^1(\mu_m)} = \norm{F}_{L^1(\mu)}$.
The following result provides a uniform deviation inequality.

\begin{proposition}[Deviation of the empirical $L^1$ norm]		\label{prop: sampling in L1}
  Let $\mu$ be a probability measure on $\{-1,1\}^p$
  and $\mu_m$ be the empirical counterpart.
  Then
  $$
  \E \sup_{F \in W^{\le d}, \; \norm{F}_{L^2}=1}
  \abs{\norm{F}_{L^1(\mu_m)} - \norm{F}_{L^1(\mu)}}
  \le 2 \sqrt{\frac{1}{m} \binom{p}{\le d}}.
  $$
\end{proposition}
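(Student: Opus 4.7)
I would follow the standard empirical-process route: symmetrize, then contract, then reduce to a sum over the Walsh basis. The Fourier structure of $W^{\le d}$ is what ultimately makes the numerics clean.

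First, rewrite the deviation as
\[
\|F\|_{L^1(\mu_m)} - \|F\|_{L^1(\mu)} \;=\; \frac{1}{m}\sum_{i=1}^{m}\bigl(|F(\theta_i)| - \E|F(\theta_1)|\bigr),
\]
and apply the classical Gin\'e--Zinn symmetrization inequality, introducing i.i.d.\ Rademacher signs $\varepsilon_1,\ldots,\varepsilon_m$ independent of $(\theta_i)$. Conditionally on $(\theta_i)$, the map $t\mapsto|t|$ is $1$-Lipschitz and fixes the origin, so the Ledoux--Talagrand contraction principle lets me drop the inner absolute value; moreover the index set $\{F\in W^{\le d}:\|F\|_{L^2}=1\}$ is symmetric under $F\mapsto -F$, so the outer absolute value is also harmless. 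Tracking the one-sided constants through both reductions yields
\[
\E\sup_{F}\bigl|\|F\|_{L^1(\mu_m)} - \|F\|_{L^1(\mu)}\bigr| \;\le\; 2\,\E\sup_{F}\Bigl|\frac{1}{m}\sum_{i=1}^{m}\varepsilon_i F(\theta_i)\Bigr|.
\]

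For the resulting Rademacher average, expand $F=\sum_{|J|\le d}\hat F_J\,w_J$; by Parseval on the uniform cube, $\sum_{|J|\le d}\hat F_J^{2}=\|F\|_{L^2}^{2}=1$. Self-duality of the unit ball of $\ell^2$ then gives the identity
\[
\sup_{F}\Bigl|\frac{1}{m}\sum_{i}\varepsilon_i F(\theta_i)\Bigr| \;=\; \Bigl(\sum_{|J|\le d}\Bigl(\frac{1}{m}\sum_{i}\varepsilon_i w_J(\theta_i)\Bigr)^{2}\Bigr)^{1/2},
\]
and Jensen's inequality combined with the one-line computation
\[
\E\Bigl(\frac{1}{m}\sum_{i}\varepsilon_i w_J(\theta_i)\Bigr)^{2} \;=\; \frac{1}{m^{2}}\sum_{i}\E\bigl[\varepsilon_i^{2}\,w_J(\theta_i)^{2}\bigr] \;=\; \frac{1}{m}
\]
(valid for \emph{any} $\mu$, since $\varepsilon_i^{2}=w_J(\theta_i)^{2}=1$ and the $\varepsilon_i,\theta_i$ are independent across $i$) shows that the expectation of this sup is at most $\sqrt{\binom{p}{\le d}/m}$. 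Chaining with the preceding display produces the claimed bound.

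The only delicate point will be the numerical bookkeeping: arranging the symmetrization and contraction reductions so that the final prefactor is exactly $2$ rather than a larger absolute constant. The three ingredients (symmetrization, contraction, Walsh orthogonality) are each individually standard. What makes this estimate so sharp---with no logarithmic entropy overhead, unlike a naive net argument---is the rigidity of the Walsh basis, namely $w_J^{2}\equiv 1$, which pins down the variance of every Rademacher--Walsh coordinate to be exactly $1/m$.
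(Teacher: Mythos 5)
Your argument follows the same route the paper takes: symmetrize (paying the factor $2$), drop the inner absolute value by Talagrand's contraction principle, then expand $F$ in the Walsh basis so that the Rademacher average becomes the $\ell^2$ norm of $\frac{1}{m}\sum_i \varepsilon_i w(\theta_i)$, and finish with Jensen and the deterministic identity $w_J(\theta)^2 = 1$, which gives $\norm{w(\theta)}_2^2 = \binom{p}{\le d}$ for every $\theta$. The only cosmetic difference is that you carry out the last computation coordinate-by-coordinate over the Walsh indices $J$, whereas the paper computes $\E\norm[1]{\sum_i \e_i w(\theta_i)}_2^2$ directly by conditioning on $(\theta_i)$; these are the same calculation. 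Both your writeup and the paper's are slightly informal at the point where the \emph{outer} absolute value is discarded after symmetrization — the standard form of the contraction inequality cited (without outer absolute values) combines with the central symmetry of the linear index set $\{a: \norm{a}_2=1\}$ only after the inner absolute value is gone, so the order of the two reductions deserves a sentence of care — but this is a shared presentational issue, not a gap in your proposal relative to the paper.
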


The $L^2$ norm on the left side is with respect to the uniform probability measure on the cube.

\begin{proof}
Any function $F \in W^{\le d}$ is a linear combination of low-degree Walsh functions,
$$
F = \sum_{\abs{J} \le d} a_J w_J.
$$
Without loss of generality (by rescaling) we can assume that
\begin{equation}	\label{eq: a unit}
\norm{F}_{L^2}^2 = \sum_{\abs{J} \le d} a_J^2 = 1.
\end{equation}
By definition of the $L^1(\mu)$ norm in \eqref{eq: Lq norms}, we have
$$
\norm{F}_{L^1(\mu)}
= \E \abs{\sum_{\abs{J} \le d} a_J w_J(\theta_1)}
= \E \abs{\ip{w(\theta_1)}{a}},
$$
where, for every $\theta$ in the cube,
$w(\theta) \coloneqq \left( w_J(\theta) \right)_{\abs{J} \le d}$
is a vector in $\R^{\binom{p}{\le d}}$, and similarly $a = \left( a_J \right)_{\abs{J} \le d}$ denotes the coefficient vector in $\R^{\binom{p}{\le d}}$.
By \eqref{eq: a unit}, $a$ is a unit vector, i.e. $a \in S^{\binom{p}{\le d}-1}$.
In a similar way, the definition of the empirical $L^1$ norm in \eqref{eq: Lq norms} yields
$$
\norm{F}_{L^1(\mu_m)}
= \frac{1}{m} \sum_{i=1}^m \abs{\sum_{\abs{J} \le d} a_J w_J(\theta_i)}
= \frac{1}{m} \sum_{i=1}^m \abs{\ip{w(\theta_i)}{a}}.
$$
Then
\begin{align*}
E
  &\coloneqq \E \sup_{F \in W^{\le d}, \; \norm{F}_{L^2}=1} \abs{\norm{F}_{L^1(\mu_m)} - \norm{F}_{L^1(\mu)}} \\
  &= \E \sup_{a \in S^{\binom{p}{\le d}-1}} \abs{ \frac{1}{m} \sum_{i=1}^m \abs{\ip{w(\theta_i)}{a}} - \E \abs{\ip{w(\theta_1)}{a}} }.
\end{align*}
Applying a symmetrization inequality for empirical processes (see e.g. \cite[Exercise~8.3.24]{vershyninbook}), we get
$$
E \le 2 \E \sup_{a \in S^{\binom{p}{\le d}-1}} \abs{ \frac{1}{m} \sum_{i=1}^m \e_i \abs{\ip{w(\theta_i)}{a}} },
$$
where $(\e_i)_{i=1}^m$ denote i.i.d. Rademacher random variables, which are independent of the sample points $(\theta_i)_{i=1}^m$.

The exterior absolute value can be removed using the symmetry of the Rademacher random variables, and the interior absolute values can be removed using Talagrand's contraction principle, see \cite[Exercise~6.7.7]{vershyninbook}, thus continuing our bound as
\begin{align*}
E  &\le 2 \E \sup_{a \in S^{\binom{p}{\le d}-1}} \frac{1}{m} \sum_{i=1}^m \e_i \ip{w(\theta_i)}{a} \\
  &= 2\E \norm[3]{\frac{1}{m} \sum_{i=1}^m \e_i w(\theta_i)}_2
  \le \frac{2}{m} \Bigg( \E \norm[3]{\sum_{i=1}^m \e_i w(\theta_i)}_2^2 \Bigg)^{1/2}
  =\frac{2}{m} \Bigg( \sum_{i=1}^m \E \norm{w(\theta_i)}_2^2 \Bigg)^{1/2}
\end{align*}
where the last step follows by conditioning on $(\theta_i)$.
Since all $\binom{p}{\le d}$ coordinates of all vectors $w(\theta_i)$ equal $\pm 1$,
we have $\norm{w(\theta_i)}_2^2 = \binom{p}{\le d}$ deterministically. Substituting
this bound, we complete the proof.
\end{proof}

\section{Enforcing a uniform bound and sparsity}

We will now prove that for any function $F$ on the Boolean cube,
there is another function that simultaneously satisfies the three desiderata:
(a) it has the same marginals (or Fourier coefficients) as $F$ up to dimension $d$;
(b) it is very sparse -- in fact, it is supported on a random set of a given cardinality; and
(c) it is uniformly bounded. The following result guarantees the existence of such function $F-w$.

\begin{theorem}				\label{thm: transfer}
  Let $\mu$ be a probability measure on the cube $\{-1,1\}^p$ whose density is bounded below by $\alpha/2^p$,
  and let $\mu_m$ be the empirical counterpart.
  If $m \ge 16 (\a\g)^{-2} e^{2d} \binom{p}{\le d}$, then the following holds with
  probability at least $1-\g$. For any function $F: \{-1,1\}^p \to \R$, we have
  $$
  \inf \left\{ \norm{F-w}_\infty :\; w \in W^{>d}, \, F-w \subset S_{\mu_m} \right\}
  \le \frac{2e^d 2^p}{\alpha m} \norm{F^{\le d}}_{L^2}
  $$
  where $S_{\mu_m}$ denotes the set of the functions supported on $\supp(\mu_m)$.\end{theorem}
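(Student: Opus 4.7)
The plan is to set $g := F - w$, which turns the problem into
\[
\mathrm{OPT} \;:=\; \min\bigl\{\|g\|_\infty \,:\, \supp(g) \subseteq T,\ \hat g_J = \hat F_J \text{ for } |J| \le d\bigr\},
\]
where $T := \supp(\mu_m)$; indeed $w \in W^{>d}$ is equivalent to $g^{\le d} = F^{\le d}$, and $F-w \in S_{\mu_m}$ is equivalent to $\supp(g)\subseteq T$. This is a linear program in the $|T|$ variables $\{g(t)\}_{t\in T}$. By $\ell^\infty/\ell^1$ duality applied to the affine subspace of feasible $g$'s, the annihilator (inside $\R^T$) of the linearized Fourier constraint is the restriction to $T$ of $W^{\le d}$, and a direct calculation with Walsh functions yields
\[
\mathrm{OPT} \;=\; \sup\Bigl\{\ip{h}{F^{\le d}}_{L^2} \,:\, h \in W^{\le d},\ \sum_{t\in T}|h(t)| \le 2^p\Bigr\}.
\]

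Applying Cauchy--Schwarz to the dual objective reduces the theorem to proving that, with probability $\ge 1-\gamma$,
\[
\sum_{t\in T}|h(t)| \;\ge\; \frac{\alpha m}{2 e^d}\,\|h\|_{L^2} \quad\text{for every } h \in W^{\le d}. \qquad (\star)
\]
To establish $(\star)$, I would chain three ingredients on a unit-$L^2$ function $h$: hypercontractivity (Theorem~\ref{thm: 1 2 hypercontractivity}) gives $\|h\|_{L^1} \ge e^{-d}$; the density lower bound $f \ge \alpha$ gives $\|h\|_{L^1(\mu)} \ge \alpha\|h\|_{L^1} \ge \alpha/e^d$; and Proposition~\ref{prop: sampling in L1} combined with Markov's inequality, using the hypothesis $m \ge 16(\alpha\gamma)^{-2} e^{2d}\binom{p}{\le d}$, yields with probability $\ge 1-\gamma$ the uniform deviation $\sup_{\|h\|_{L^2}=1}|\|h\|_{L^1(\mu_m)} - \|h\|_{L^1(\mu)}| \le \alpha/(2 e^d)$. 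Subtracting produces $\sum_i|h(\theta_i)| = m\|h\|_{L^1(\mu_m)} \ge \alpha m/(2 e^d)$ uniformly over such $h$.

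The main obstacle is the final conversion from $\sum_i|h(\theta_i)| = \sum_{t\in T} k_t |h(t)|$, which involves the sample multiplicities $k_t \ge 1$, to the unweighted sum $\sum_{t\in T}|h(t)|$ that the dual constraint demands: the crude bound $\sum_i \ge \sum_{t\in T}$ runs in the wrong direction. I expect one must argue that under the sampling hypotheses and the density bound the multiplicities are uniformly $O(1)$ with overwhelming probability (by a Binomial concentration estimate on $k_t$), so that $m\|h\|_{L^1(\mu_m)}$ and $\sum_{t\in T}|h(t)|$ agree up to a constant that can be absorbed by a slight tightening of the constants; alternatively, one may refine Proposition~\ref{prop: sampling in L1} so as to control the unweighted sum $\sum_{t\in T}|h(t)|$ directly via symmetrization and contraction. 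Apart from this multiplicity bookkeeping, the proof is a clean application of LP duality, hypercontractivity, and the already-established uniform empirical process bound.
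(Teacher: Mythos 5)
Your LP dual is the paper's argument in different clothing. The paper proves Lemma~\ref{lem: 1weighted 2} (the estimate $\|F\|_{L^2}\le \tfrac{2e^d 2^p}{\alpha m}\|F\|_{L^1_\delta}$ for $F\in W^{\le d}$ and all $\delta>0$, where $L^1_\delta$ is a $\delta$-weighted $L^1$ norm that discounts mass off $S$) and then dualizes the embedding $(W^{\le d},\|\cdot\|_{L^1_\delta})\hookrightarrow L^2$ to $L^2\to L^\infty_{1/\delta}/W^{>d}$, sending $\delta\to 0$ to enforce the hard support constraint. Unwinding Parseval, this is exactly your finite LP duality followed by Cauchy--Schwarz and your inequality $(\star)$; the $\delta$-weighted spaces are simply the device for encoding the support constraint inside Banach-space duality.

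The multiplicity obstruction you flagged is a genuine gap, and it is worth pointing out that the paper's own proof has it too. In the proof of Lemma~\ref{lem: 1weighted 2}, the chain
\[
\frac{1}{2 e^d}\|F\|_{L^2}\le \frac{1}{\alpha}\|F\|_{L^1(\mu_m)}\;=\;\frac{2^p}{\alpha m}\|F\one_S\|_{L^1}
\]
claims the middle step as an equality, but with $k_x$ the multiplicity of $x$ in the sample one has $\|F\|_{L^1(\mu_m)}=\tfrac{1}{m}\sum_{x\in S}k_x|F(x)|\ge\tfrac{1}{m}\sum_{x\in S}|F(x)|=\tfrac{2^p}{m}\|F\one_S\|_{L^1}$ — the wrong direction, exactly as you observed from the dual side. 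The theorem is in fact false in corner cases: taking $F\equiv 1$ gives $\mathrm{OPT}\ge 2^p/|\supp(\mu_m)|$, so if the sample has heavy repetitions (e.g.\ $m\gg 2^p$, or $\mu$ with a large atom while still satisfying the lower bound $\alpha/2^p$) the claimed upper bound $2e^d 2^p/(\alpha m)$ falls below $\mathrm{OPT}$ deterministically. The clean fix is not the pointwise ``multiplicities are $O(1)$'' route you sketch (with $m$ near $2^p$ the maximum multiplicity grows), but the birthday bound: add a mild hypothesis ensuring $\binom{m}{2}\|2^p f\|_\infty^2/2^p\le\gamma'$, condition on the collision-free event, run the proof there, and fold $\gamma'$ into the failure probability. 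This is in fact how the paper treats the identical issue in Theorem~\ref{thm: accuracy} (where $m\le 2^{p/4}$ is assumed and the no-repetition event is explicitly used), but the hypothesis is missing from Theorem~\ref{thm: transfer}. Your alternative of refining Proposition~\ref{prop: sampling in L1} to control the unweighted $\sum_{t\in T}|h(t)|$ directly does not work easily: the symmetrization and contraction arguments there operate on the i.i.d.\ multiset sample and are blind to $T$. Go with the collision-free conditioning.
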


Throughout the proof, let us denote
$$
S \coloneqq \supp(\mu_m).
$$
The $L^1$ norm of any function $F:\{-1,1\}^p \to \R$ naturally decomposes as
$$
\norm{F}_{L^1} = \norm{F \one_S}_{L^1} + \norm{F \one_{S^c}}_{L^1},
$$
where $\one_S$ denotes the indicator function of $S$.
Given $\d>0$, consider the weighted space $L^1_\d$ where the norm is defined by
$$
\norm{F}_{L^1_\d}
\coloneqq \norm{F \one_S}_{L^1} + \d \norm{F \one_{S^c}}_{L^1}.
$$

\begin{lemma}		\label{lem: 1weighted 2}
  Consider the subspace $(W^{\leq d},\|\,\|_{L_{\delta}^{1}})$ of $L_{\delta}^{1}$. With probability at least $1-\g$, for every $\d>0$ we have
  $$
  \norm{\Id:\; (W^{\leq d},\|\,\|_{L_{\delta}^{1}}) \to L^2}
  \le \frac{2e^d 2^p}{\alpha m}.
  $$
\end{lemma}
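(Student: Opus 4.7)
The plan is to unpack the operator norm into a lower bound on $\|F\|_{L_\d^1}$ for every $F \in W^{\le d}$, and to strip out the $\d$-dependence before doing any work. The operator norm in the statement equals $\sup_{F \in W^{\le d},\, F \ne 0} \|F\|_{L^2}/\|F\|_{L_\d^1}$, so the claim is equivalent to
\[
\|F\|_{L_\d^1} \;\ge\; \frac{\a m}{2 e^d 2^p}\|F\|_{L^2}, \qquad F \in W^{\le d}.
\]
Because $\|F \one_{S^c}\|_{L^1} \ge 0$, one has $\|F\|_{L_\d^1} \ge \|F \one_S\|_{L^1}$ for every $\d > 0$; so it is enough to prove the $\d$-free inequality $\|F \one_S\|_{L^1} \ge \frac{\a m}{2 e^d 2^p}\|F\|_{L^2}$ on a single high-probability event. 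That event will then deliver the bound uniformly in $\d > 0$, as the statement demands.

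The next step is to translate $\|F \one_S\|_{L^1}$ into an empirical $L^1$ norm. Writing $\|F \one_S\|_{L^1} = 2^{-p}\sum_{x \in S}|F(x)|$ and noting that on the likely event that the samples $\theta_1,\dots,\theta_m$ are distinct this equals $(m/2^p)\|F\|_{L^1(\mu_m)}$, the target becomes $\|F\|_{L^1(\mu_m)} \ge \a e^{-d}\|F\|_{L^2}/2$. To prove this I would chain three ingredients: (i)~Proposition~\ref{prop: sampling in L1} with Markov's inequality, giving with probability at least $1-\g$
\[
\sup_{\substack{F \in W^{\le d}\\ \|F\|_{L^2}=1}}\bigl|\,\|F\|_{L^1(\mu_m)} - \|F\|_{L^1(\mu)}\,\bigr| \;\le\; \frac{2}{\g}\sqrt{\binom{p}{\le d}/m};
\]
(ii)~the density lower bound $\mu(\{x\}) \ge \a/2^p$, which yields $\|F\|_{L^1(\mu)} \ge \a\|F\|_{L^1}$; and (iii)~hypercontractivity (Theorem~\ref{thm: 1 2 hypercontractivity}), which gives $\|F\|_{L^1} \ge e^{-d}\|F\|_{L^2}$. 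Under the sample-size hypothesis $m \ge 16(\a\g)^{-2}e^{2d}\binom{p}{\le d}$ the deviation in (i) is at most $\a e^{-d}/2$; combining with $\|F\|_{L^1(\mu)} \ge \a e^{-d}\|F\|_{L^2}$ from (ii)--(iii) gives $\|F\|_{L^1(\mu_m)} \ge \a e^{-d}\|F\|_{L^2}/2$, whence the target follows by a single rescaling.

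The main obstacle I expect is the identification $\|F \one_S\|_{L^1} = (m/2^p)\|F\|_{L^1(\mu_m)}$: strictly, the left side weights each $x \in \supp(\mu_m)$ by one, whereas the empirical norm weights it by $\mathrm{mult}(x)/m$. These agree exactly when the $\theta_i$ are pairwise distinct, which is the dominant event when $m \ll 2^p$, but a fully rigorous proof must either verify a deterministic bound on the sample multiplicities on the good event or carry the multiplicity-aware $L^1$ functional throughout. Setting this bookkeeping aside, the structure of the argument is clean and the constants align: the factor $16$ in the sample-size hypothesis is calibrated so that the concentration error in (i) absorbs exactly half of the hypercontractive lower bound on $\|F\|_{L^1(\mu)}$.
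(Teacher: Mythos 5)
Your proposal follows essentially the same route as the paper's proof: unfold the operator norm into a lower bound on $\|F\|_{L^1_\delta}$, drop the nonnegative $\delta$-weighted term to reduce to a bound on $\|F\one_S\|_{L^1}$, convert to the empirical $L^1(\mu_m)$ norm, and then chain Proposition~\ref{prop: sampling in L1} (via Markov's inequality), the density lower bound $\mu(\{x\}) \ge \alpha/2^p$, and hypercontractivity (Theorem~\ref{thm: 1 2 hypercontractivity}), with the sample-size hypothesis calibrated so the concentration error absorbs exactly half the hypercontractive lower bound on $\|F\|_{L^1(\mu)}$. The constants and the order of operations match the paper's proof exactly.

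The obstacle you flagged — the identification $\|F\one_S\|_{L^1} = (m/2^p)\|F\|_{L^1(\mu_m)}$ — is a real issue, and in fact the paper's own proof contains precisely this step as an unqualified equality: $\frac{1}{\alpha}\|F\|_{L^1(\mu_m)} = \frac{2^p}{\alpha m}\|F\one_S\|_{L^1}$. When the sample points $\theta_1,\ldots,\theta_m$ are not distinct, one only has $\|F\|_{L^1(\mu_m)} = \frac{1}{m}\sum_{x\in S}\mathrm{mult}(x)|F(x)| \ge \frac{2^p}{m}\|F\one_S\|_{L^1}$, and this inequality points the wrong way for the chain: the lower bound on $\|F\|_{L^1(\mu_m)}$ no longer transfers to a lower bound on $\|F\one_S\|_{L^1}$. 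So the paper's proof implicitly assumes the sample is repetition-free, and the event from Proposition~\ref{prop: sampling in L1} does not by itself guarantee this. You are therefore right to treat it as a genuine bookkeeping gap rather than a triviality; the repair is to either intersect the good event with the (high-probability, when $m \ll 2^{p/2}$) distinctness event — as the paper later does in Theorem~\ref{thm: accuracy} — or to define $L^1_\delta$ using the multiplicity-aware empirical $L^1(\mu_m)$ functional on $S$ and track the matching dual weights through the rest of Theorem~\ref{thm: transfer}.
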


\begin{proof}
Proposition~\ref{prop: sampling in L1} combined with Markov's inequality and rescaling
implies that, with probability $1-\g$, the following holds for all $F \in W^{\le d}$:
$$
\abs{\norm{F}_{L^1(\mu)} - \norm{F}_{L^1(\mu_m)}}
\le \frac{2}{\gamma} \sqrt{\frac{1}{m} \binom{p}{\le d}} \; \norm{F}_{L^2}
\le \frac{\alpha}{2 e^d} \norm{F}_{L^2},
$$
where in the last step we used the assumption on $m$.

Applying hypercontractivity (Theorem~\ref{thm: 1 2 hypercontractivity}), the regularity assumption of $\mu$, and the bound above, we obtain
$$
\frac{1}{e^d} \norm{F}_{L^2}
\le \norm{F}_{L^1}
\le \frac{1}{\a} \norm{F}_{L^1(\mu)}
\le \frac{1}{\a} \norm{F}_{L^1(\mu_m)} + \frac{1}{2 e^d} \norm{F}_{L^2}.
$$
Rearranging the terms, we obtain
$$
\frac{1}{2 e^d} \norm{F}_{L^2}
\le \frac{1}{\a} \norm{F}_{L^1(\mu_m)}
= \frac{2^p}{\a m} \norm{F \one_S}_{L^1}
\le \frac{2^p}{\a m} \norm{F}_{L^1_\d}
$$
where in the middle step we used the definitions of $S$ and of the norms in $L^1(\mu)$ and  $L^1(\mu_m)$.
Multiplying both sides by $2e^d$ completes the proof.
\end{proof}

\medskip

\begin{proof}[Proof of Theorem~\ref{thm: transfer}]
Let us dualize Lemma~\ref{lem: 1weighted 2} with respect to the inner product on $L^2$. The identity operator is self-adjoint, and the adjoint operator has the same norm. So, with  probability at least $1-\g$, for every $\d>0$ we have
$$
\norm{\Id:\; \big( L^2 \big)^* \to \big( W^{\leq d},\|\,\|_{L_{\delta}^{1}} \big)^*}
\le \frac{2e^d 2^p}{\alpha m}
\eqqcolon B.
$$
The Hilbert space $L^2$ is self-dual.
The dual to the weighted space $L^1_\d$ is the weighted space $L^\infty_{1/\d}$ defined as
\begin{equation}	\label{eq: Linfty weighted}
\norm{F}_{L^\infty_{1/\d}}
\coloneqq \norm{F \one_S}_{L^\infty} \vee \frac{1}{\d} \norm{F \one_{S^c}}_{L^\infty}.
\end{equation}
The dual of a subspace is a quotient space of the dual:
$$
\big(W^{\leq d},\|\,\|_{L_{\delta}^{1}} \big)^*
= \big( L^1_\d \big)^* /(W^{\le d})^\perp = L^\infty_\d / W^{>d}.
$$
Putting these considerations together, we get
$$
\norm{\Id:\; L^2 \to L^\infty_\d / W^{> d}}
\le B.
$$

By definition of the quotient norm, this bound means that for every function $F:\{-1,1\}^p \to \R$ there exists $w \in W^{>d}$ such that
$$
\norm{F-w}_{L^\infty_\d} \le B \norm{F}_{L^2}.
$$
By definition \eqref{eq: Linfty weighted} of the weighted norm, this means that
\begin{equation}	\label{eq: F-w weighted}
\norm{(F-w) \one_S}_\infty \le B \norm{F}_{L^2}
\quad \text{and} \quad
\norm{(F-w) \one_{S^c}}_\infty \le \d B \norm{F}_{L^2}.
\end{equation}
Since the second bound holds for arbitrary $\d>0$, it follows that $\norm{(F-w) \one_{S^c}}_\infty = 0$, i.e.
$$
\supp(F-w) \subset S
$$
as claimed in the theorem. Together with the first bound in \eqref{eq: F-w weighted}, this proves that
$$
\norm{F-w}_\infty \le B \norm{F}_{L^2}.
$$
Thus, we showed every function $F:\{-1,1\}^p \to \R$ satisfies
$$
\inf \left\{ \norm{F-w}_\infty :\; w \in W^{>d}, \, F-w \subset S_{\mu_m} \right\}
\le B \norm{F}_{L^2}
$$
Finally, note that the term $\norm{F}_{L^2}$ on the right hand side
can automatically be improved to $\norm[1]{F^{\le d}}_{L^2}$.
To see this, apply the above bound for $F^{\le d}$
and absorb the term $F^{>d}$ into $w$.
Theorem~\ref{thm: transfer} is proved.
\end{proof}

\section{Low-degree projections of empirical measures}\label{s:lowdegree}

Consider two probability measures $\nu$ and $\mu$ on $\{-1,1\}^p$,
and let $f$ and $g$ denote their densities (or probability mass functions):
$$
f(z) = \nu(\{z\})
\quad \text{and} \quad
g(z) = \mu(\{z\}),
\quad z \in \{-1,1\}^p.
$$
The densities of the empirical probability measures $\nu_n$ and $\mu_m$ are
\begin{equation}	\label{eq: fn gn}
f_n = \frac{1}{n} \sum_{i=1}^n \one_{x_i}
\quad \text{and} \quad
g_m = \frac{1}{m} \sum_{i=1}^m \one_{y_i}
\end{equation}
where $x_1,\ldots,x_n$ and $y_1,\ldots,y_m$ are i.i.d.\ points drawn from the cube according to the densities $f$ and $g$, respectively.
The functions $f_n$ and $g_m$ provide unbiased estimators of $f$ and $g$:
$$
\E f_n = f, \quad \E g_m = g.
$$

Assume that $f(z)=0$ whenever $g(z)=0$. Consider the function
\begin{equation}	\label{eq: gn tilde}
\tilde{g}_m \coloneqq (f/g) g_m.
\end{equation}
Although $\tilde{g}_m$ is supported on the sample drawn from density $g$,
it provides an unbiased estimator of $f$:
$$
\E \tilde{g}_m = (f/g) \E g_m = (f/g)g = f.
$$
This property will be crucial in the proof of Theorem~\ref{thm: two measures}.

Let us look at the low-degree projections of $f_n$ and $\tilde{g}_m$ and try to bound their mean magnitude and deviation from the mean. Toward this end, note that
\begin{equation}	\label{eq: 1x norm}
\forall x \in \{-1,1\}^p, \quad
\norm[1]{(\one_x)^{\le d}}_{L^2} = \binom{p}{\le d}^{1/2} \frac{1}{2^p}.
\end{equation}
Indeed, to see this, use Parseval's identity
$$
\norm[1]{(\one_x)^{\le d}}_{L^2}^2
= \sum_{\abs{J} \le d} \ip{\one_x}{w_J}_{L^2}^2
= \sum_{\abs{J} \le d} \Big( \frac{1}{2^p} w_J(x) \Big)^2
$$
and recall that the Walsh function $w_J$ takes $\pm 1$ values.
Furthermore, by definition of $f_n$ and the triangle inequality, \eqref{eq: 1x norm} yields
\begin{equation}	\label{eq: fn norm}
\norm[1]{(f_n)^{\le d}}_{L^2} \le \binom{p}{\le d}^{1/2} \frac{1}{2^p}
\quad \text{deterministically}.
\end{equation}

\begin{lemma}[Deviation]	\label{lem: deviation fn gm}
  We have
  $$
  \left( \E \norm[1]{(f_n-f)^{\le d}}_{L^2}^2 \right)^{1/2}
  \le \binom{p}{\le d}^{1/2} \frac{1}{\sqrt{n} 2^p}.
  $$
  Moreover, if $\norm{f/g}_{L^2} \le \kappa$ then we have
  $$
  \left( \E \norm[1]{(\tilde{g}_m-f)^{\le d}}_{L^2} \right)^{1/2}
  \le \binom{p}{\le d}^{1/2} \frac{\kappa}{\sqrt{m} 2^p}.
  $$
\end{lemma}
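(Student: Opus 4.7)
My plan is to prove both bounds by the same two-line method: view each quantity as the standard deviation of an empirical mean of i.i.d.\ mean-zero random functions in $L^2$, and then evaluate the second moment of a single summand via the Parseval identity \eqref{eq: 1x norm}. The only other inputs needed are independence (to kill cross terms) and the contractivity of the projection $(\cdot)^{\le d}$ in $L^2$.

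For part~(a), I would decompose
\[
(f_n - f)^{\le d} = \frac{1}{n}\sum_{i=1}^n V_i, \qquad V_i \coloneqq (\one_{x_i})^{\le d} - f^{\le d},
\]
where the $V_i$ are i.i.d.\ and mean-zero in $L^2$. By independence,
\[
\E \norm{(f_n - f)^{\le d}}_{L^2}^2 = \frac{1}{n}\,\E \norm{V_1}_{L^2}^2 \le \frac{1}{n}\,\E \norm{(\one_{x_1})^{\le d}}_{L^2}^2,
\]
and \eqref{eq: 1x norm} says the right-hand expectation equals $\binom{p}{\le d}/4^p$ deterministically. Taking a square root gives the first bound.

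For part~(b), the importance-reweighted measure is still an average of i.i.d.\ summands: $\tilde g_m = \frac{1}{m}\sum_i Y_i$ with $Y_i \coloneqq (f/g)(y_i)\,\one_{y_i}$, and the unbiasedness $\E Y_i = f$ has already been recorded in the paper. The same variance identity yields
\[
\E \norm{(\tilde g_m - f)^{\le d}}_{L^2}^2 \le \frac{1}{m}\,\E\norm{Y_1^{\le d}}_{L^2}^2 = \frac{1}{m}\,\binom{p}{\le d}\frac{1}{4^p}\,\E_{y\sim g}\!\left[(f/g)^2(y)\right],
\]
where the equality uses that $Y_1$ is a scalar multiple of $\one_{y_1}$ together with \eqref{eq: 1x norm}. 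The inner expectation equals $\sum_z f(z)^2/g(z)$; the hypothesis on $\norm{f/g}_{L^2}$ bounds this by $\kappa^2$, and taking a square root delivers the claimed bound.

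The main thing to pin down, and the only mildly delicate point in the plan, is the reading of the norm in the hypothesis for part~(b): the variance calculation for the importance-weighted estimator naturally produces $\sum_z f^2/g$, which is the $g$-weighted $L^2$-norm of the ratio $f/g$ rather than its uniform $L^2$-norm. Once this convention is fixed, the remainder of the argument is a mechanical combination of independence, contractivity of projection onto $W^{\le d}$, and the deterministic identity \eqref{eq: 1x norm}.
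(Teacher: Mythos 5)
Your argument is correct and is essentially the paper's own proof, repackaged. Both reduce $(f_n-f)^{\le d}$ (resp.\ $(\tilde g_m-f)^{\le d}$) to an average of i.i.d.\ mean-zero terms, use independence to pass to a single summand, bound variance by second moment, and invoke \eqref{eq: 1x norm}. The paper carries this out coordinatewise via Parseval, bounding $\E\,\ip{\,\cdot\,}{w_J}_{L^2}^2$ for each $|J|\le d$ and then summing; you work directly with the $L^2$-valued random variables $V_i$ and $Y_i$ and use $\E\norm{Z-\E Z}^2\le\E\norm{Z}^2$. These are the same computation in two equivalent presentations.

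Your observation about the hypothesis in part (b) is well taken and, if anything, flags a small imprecision in the paper: the variance calculation really produces $\sum_z f(z)^2/g(z)=\E_{y\sim g}\big[(f/g)^2(y)\big]=\norm{f/g}_{L^2(\mu)}^2$, where $\mu$ is the measure with density $g$, not the uniform-measure $L^2$-norm introduced in Section~\ref{s:fourier}. The two agree exactly when $g$ is uniform, which is the case in all applications of the lemma in the paper (Proposition~\ref{prop: success}, Theorem~\ref{thm: accuracy}), and they agree up to constants under the regularity assumption $f/g\asymp 1$ used for Theorems~\ref{thm: two measures} and \ref{thm: master}. So the slip is harmless in context, but the hypothesis should strictly be read as a bound on $\norm{f/g}_{L^2(\mu)}$, exactly as you concluded.
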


\begin{proof}
By Parseval's identity,
\begin{equation}	\label{eq: Parseval for deviation}
\norm[1]{(f_n-f)^{\le d}}_{L^2}^2
= \sum_{\abs{J} \le d} \ip{f_n-f}{w_J}_{L^2}^2.
\end{equation}
By definition \eqref{eq: fn gn} of $f_n$, each term of this sum can be expressed as
$$
\ip{f_n-f}{w_J}_{L^2} = \frac{1}{n} \sum_{i=1}^n \ip{\one_{x_i}-f}{w_J}_{L^2}.
$$
The terms on the right hand side are i.i.d. mean zero random variables, so
\begin{align*}
\E \ip{f_n-f}{w_J}_{L^2}^2
  &= \frac{1}{n} \E \ip{\one_{x_1}-f}{w_J}_{L^2}^2 \\
  &\le \frac{1}{n} \E \ip{\one_{x_1}}{w_J}_{L^2}^2
  	\quad \text{(the variance is bounded by the second moment)} \\
  &= \frac{1}{n} \E \Big( \frac{1}{2^p} w_J(x_1) \Big)^2
  = \frac{1}{n 2^{2p}},
\end{align*}
since the Walsh function $w_J$ takes $\pm 1$ values.
Substitute this bound into Parseval's identity \eqref{eq: Parseval for deviation} to get
$$
\E \norm[1]{(f_n-f)^{\le d}}_{L^2}^2
\le \binom{p}{\le d} \cdot \frac{1}{n 2^{2p}}.
$$
This proves the first part of the lemma.

The second part of the lemma can be derived similarly. Indeed,
\begin{equation}	\label{eq: Parseval for deviation g}
\norm[1]{(\tilde{g}_m-f)^{\le d}}_{L^2}^2
= \sum_{\abs{J} \le d} \ip{\tilde{g}_m-f}{w_J}_{L^2}^2.
\end{equation}
By definition \eqref{eq: fn gn} of $g_m$ and \eqref{eq: gn tilde} of $\tilde{g}_m$,
each term of this sum can be expressed as
$$
\ip{\tilde{g}_m-f}{w_J}_{L^2} = \frac{1}{m} \sum_{i=1}^m \Bigip{\frac{f(y_i)}{g(y_i)} \cdot \one_{y_i}-f}{w_J}_{L^2}.
$$
The terms on the right hand side are i.i.d. mean zero random variables, so
\begin{align*}
\E \ip{\tilde{g}_m-f}{w_J}_{L^2}^2
  &= \frac{1}{m} \E \Bigip{\frac{f(y_1)}{g(y_1)} \cdot \one_{y_1}-f}{w_J}_{L^2}^2 \\
  &\le \frac{1}{m} \E \Bigip{\frac{f(y_1)}{g(y_1)} \cdot \one_{y_1}}{w_J}_{L^2}^2
  	\quad \text{(the variance is bounded by the second moment)} \\
  &= \frac{1}{m} \E \Big( \frac{1}{2^p} \frac{f(y_1)}{g(y_1)} w_J(y_1) \Big)^2 \\
  &= \frac{1}{m 2^{2p}} \norm{f/g}_{L^2}^2
  \le \frac{\kappa^2}{m 2^{2p}},
\end{align*}
where in the last line we used the fact that the Walsh function $w_J$ takes $\pm 1$ values
and the assumption on $f/g$.
Substitute this bound into Parseval's identity \eqref{eq: Parseval for deviation g} to get
$$
\E \norm[1]{(\tilde{g}_m-f)^{\le d}}_{L^2}^2
\le \binom{p}{\le d} \cdot \frac{\kappa^2}{m 2^{2p}}.
$$
This proves the second part of the lemma.
\end{proof}

\section{Proof of Theorem~\ref{thm: two measures}}

The following master theorem is a more general version of Theorem~\ref{thm: two measures}, as we will see shortly. Recall that $g_m,\mu_m,\tilde{g}_m$ are defined in (\ref{eq: fn gn}).

\begin{theorem}		\label{thm: master}
  Let $f$ and $g$ be densities on the cube $\{-1,1\}^p$,
  and let $f_n$ and $g_m$ be their empirical counterparts.
  Assume that $\norm{f/g}_{L^2} \le \kappa$ for some $\kappa \ge 1$ and
  that $g$ is bounded below by $\alpha/2^p$.
  If $n \ge 16 (\a\d)^{-2}\g^{-1} e^{2d} \binom{p}{\le d}$ and
  $m \ge 16 (\a\d)^{-2}\g^{-1} \kappa^{2} e^{2d} \binom{p}{\le d}$
  then the following holds with probability $1-2\gamma$.
  There exists $h: \{-1,1\}^p \to \R$ that satisfies
  $$
  h^{\le d} = f_n^{\le d},
  \quad
  \supp(h) \subset \supp(g_m),
  \quad
  \norm{h - (f/g) g_m}_\infty \le \frac{\d}{m}.
  $$
\end{theorem}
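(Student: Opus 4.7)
The plan is to construct $h$ as a low-degree correction of the importance-reweighted empirical measure $\tilde{g}_m := (f/g)\,g_m$. Write $F := f_n - \tilde{g}_m$; since $\E f_n = \E\tilde{g}_m = f$, this random function is mean-zero. The key observation is that any choice
\[
h = \tilde{g}_m + (F - w) = f_n - w, \qquad w \in W^{>d},
\]
automatically satisfies $h^{\le d} = f_n^{\le d}$. So it suffices to produce a single $w \in W^{>d}$ such that $\supp(F - w) \subset \supp(g_m)$ (which then forces $\supp(h) \subset \supp(g_m)$, since $\tilde{g}_m$ is itself supported there) and $\|F - w\|_\infty = \|h - \tilde{g}_m\|_\infty \le \delta/m$.

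For the existence and $L^\infty$-control of such a $w$ I would invoke Theorem~\ref{thm: transfer} with $\mu = g$, $\mu_m = g_m$: since $g \ge \alpha/2^p$, once $m$ is large enough the conclusion holds with probability at least $1 - \gamma$ over $g_m$, yielding some $w \in W^{>d}$ with
\[
\|F - w\|_\infty \le \frac{2 e^d\, 2^p}{\alpha\, m}\,\|F^{\le d}\|_{L^2}.
\]
It remains to estimate $\|F^{\le d}\|_{L^2}$. Decompose
\[
F^{\le d} = (f_n - f)^{\le d} - (\tilde{g}_m - f)^{\le d},
\]
and apply Lemma~\ref{lem: deviation fn gm}: the second moments of the two pieces are at most $\binom{p}{\le d}/(n\, 2^{2p})$ and $\kappa^2\binom{p}{\le d}/(m\, 2^{2p})$ respectively. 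A Markov inequality on each gives, with probability at least $1 - 2\gamma$,
\[
\|(f_n-f)^{\le d}\|_{L^2} \le \frac{\binom{p}{\le d}^{1/2}}{\sqrt{\gamma n}\,2^p},
\qquad
\|(\tilde{g}_m-f)^{\le d}\|_{L^2} \le \frac{\kappa\binom{p}{\le d}^{1/2}}{\sqrt{\gamma m}\,2^p}.
\]
Plugging in the hypotheses on $n$ and $m$ makes each of these at most $\alpha\delta/(4 e^d\, 2^p)$, so the triangle inequality gives $\|F^{\le d}\|_{L^2} \le \alpha\delta/(2 e^d\, 2^p)$. Feeding this back into the transfer bound yields exactly $\|F - w\|_\infty \le \delta/m$, and setting $h = f_n - w$ completes the construction.

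The only subtle point is the probability bookkeeping. Three bad events are in play: the failure of Theorem~\ref{thm: transfer}, and the two Markov events for $(f_n-f)^{\le d}$ and $(\tilde{g}_m-f)^{\le d}$. The first and the third both depend on the randomness of $g_m$ (the second only on $f_n$), so one must verify that the stated hypothesis $m \ge 16(\alpha\delta)^{-2}\gamma^{-1}\kappa^2 e^{2d}\binom{p}{\le d}$, together with $\kappa \ge 1$, is strong enough to imply both the sample-size requirement of Theorem~\ref{thm: transfer} and the Markov-derived control on $\tilde{g}_m - f$, and then to allocate the failure events across the $2\gamma$ budget claimed in the statement. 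This is the main accounting step to verify; once the constants line up, the algebraic argument above is straightforward.
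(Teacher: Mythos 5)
Your construction is identical to the paper's: set $\tilde{g}_m = (f/g)g_m$, apply Theorem~\ref{thm: transfer} to $F = f_n - \tilde{g}_m$ (with $\mu = g$, $\mu_m = g_m$), take $h = f_n - w$, and control $\|F^{\le d}\|_{L^2}$ via Lemma~\ref{lem: deviation fn gm}. The algebra with the hypotheses on $n$ and $m$ closes exactly as you say.

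The gap you flag at the end is a real one, and you do not resolve it. Applying Markov separately to $\|(f_n-f)^{\le d}\|_{L^2}$ and $\|(\tilde{g}_m-f)^{\le d}\|_{L^2}$ costs $\gamma$ each; together with the $\gamma$ failure probability of Theorem~\ref{thm: transfer}, the union bound gives $3\gamma$, not the claimed $2\gamma$. The fact that the transfer event and the second Markov event both depend on $g_m$ does not by itself let you merge them: they are genuinely different constraints (one is a uniform well-conditionedness statement over all functions, the other is a deviation bound for one specific low-degree projection), and neither implies the other. The paper's fix is to combine \emph{before} taking a tail bound: by the triangle inequality $\|(f_n-\tilde{g}_m)^{\le d}\|_{L^2} \le \|(f_n-f)^{\le d}\|_{L^2} + \|(\tilde{g}_m-f)^{\le d}\|_{L^2}$ and Minkowski's inequality in $L^2(\mathbb{P})$,
\[
\Big(\E\big\|(f_n-\tilde{g}_m)^{\le d}\big\|_{L^2}^2\Big)^{1/2}
\le \binom{p}{\le d}^{1/2}\Big(\frac{1}{\sqrt{n}} + \frac{\kappa}{\sqrt{m}}\Big)\frac{1}{2^p},
\]
and then Chebyshev (i.e., $L^2$-Markov) is applied \emph{once} to this single random quantity, costing only one $\gamma$. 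The resulting bound $\gamma^{-1/2}\binom{p}{\le d}^{1/2}(1/\sqrt{n}+\kappa/\sqrt{m})/2^p \le \alpha\delta/(2e^d 2^p)$ is exactly the bound your two-event version produces, so the algebra is unchanged, but the probability budget is now $2\gamma$ as stated. If you prefer to keep two Markov events, you must either accept $1-3\gamma$ or strengthen the sample-size hypotheses by constant factors; there is no way to squeeze your three events into $2\gamma$ for free.
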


\begin{proof}
Let $\tilde{g}_m = (f/g) g_m$ and apply Theorem~\ref{thm: transfer} for the function
$F = f_n-\tilde{g}_m$. With probability $1-\g$, there exists $w \in W^{>d}$ such that
\begin{equation}	\label{eq: fn gm w}
f_n-\tilde{g}_m-w \in S_{\mu_m}
\quad \text{and} \quad
\norm{f_n-\tilde{g}_m-w}_\infty
\le \frac{2e^d 2^p}{\alpha m} \norm[1]{(f_n-\tilde{g}_m)^{\le d}}_{L^2}.
\end{equation}
Set
$$
h = f_n-w.
$$
Since $w \in W^{>d}$, we have $h^{\le d} = f_n^{\le d}$ as claimed.
Since both $\tilde{g}_m$ and $h-\tilde{g}_m = f_n-\tilde{g}_m-w$ lie in $S_{\mu_m}$, so does $h$, as claimed.

Furthermore, combining both bounds of Lemma~\ref{lem: deviation fn gm} via the Minkowski inequality, we get
$$
\left( \E \norm[1]{(f_n - \tilde{g}_m)^{\le d}}_{L^2}^2 \right)^{1/2}
\le \binom{p}{\le d}^{1/2} \Big( \frac{1}{\sqrt{n}} + \frac{\kappa}{\sqrt{m}} \Big) \frac{1}{2^p}.
$$
By Chebyshev's inequality, with probability at least $1-\g$ we have
$$
\norm[1]{(f_n - \tilde{g}_m)^{\le d}}_{L^2}
\le \g^{-1/2} \binom{p}{\le d}^{1/2} \Big( \frac{1}{\sqrt{n}} + \frac{\kappa}{\sqrt{m}} \Big) \frac{1}{2^p}.
$$
We substitute this into~\eqref{eq: fn gm w} and get
$$
\norm{h - \tilde{g}_m}_{L^\infty(\nu_m)}
\le \frac{2e^d 2^p}{\alpha m} \cdot \g^{-1/2} \binom{p}{\le d}^{1/2} \Big( \frac{1}{\sqrt{n}} + \frac{\kappa}{\sqrt{m}} \Big) \frac{1}{2^p}
\le \frac{\d}{m},
$$
where we used the assumption on $n$ and $m$ in the last bound.
\end{proof}

\subsection{Proof of Theorem~\ref{thm: two measures}}

Let us explain how Theorem~\ref{thm: master} is a more general form of Theorem~\ref{thm: two measures}.
Let $f$ and $g$ be the densities of the two distributions in the statement of Theorem~\ref{thm: two measures},
$X = (x_1,\ldots,x_n)$ and $S = (y_1,\ldots,y_m)$
be the samples drawn according to these densities,
and $f_n = \frac{1}{n} \sum_{i=1}^n \one_{x_i}$ and
$g_m = \frac{1}{m} \sum_{i=1}^m \one_{y_i}$ be the empirical densities.
The regularity assumption implies that
\begin{equation}	\label{eq: f/g constant}
f/g \asymp 1 \quad \text{pointwise},
\end{equation}
and in particular the requirement $\norm{f/g}_{L^2} = O(1)$ holds in Theorem~\ref{thm: master}. The function $h$ we obtain from that result is supported on
$S = \supp(g_m)$ and satisfies
$$
h
\ge (f/g)g_m - \frac{\d}{m}
\gtrsim \frac{1}{m}
\quad \text{everywhere on $S$}.
$$
(In the last step we used \eqref{eq: f/g constant} that $g_m = \frac{1}{m} \sum_{i=1}^m \one_{y_i}$ is lower bounded by $1/m$ on $S$.)
In particular, $h$ is positive on $S$. The condition $h^{\le d} = f_n^{\le d}$ means that $h$ has exactly the same marginals up to dimension $d$ as $f_n$, the uniform probability distribution on $X$. Since $f_n$ is a density, the sum of all of its values equals $1$. The same must be true for $h$, since the sum of the values can be expressed as the zero-dimensional marginal, which must be the same for $h$ and $f_n$. In other words, $h$ must be a density, too. Theorem~\ref{thm: two measures} is proved. \qed

\subsection{A one-sample version}

Here is a one-sample version of Theorem~\ref{thm: master}.
It is a rigorous version of Theorem~\ref{thm: one measure} we stated informally in the introduction.

\begin{theorem}		\label{thm: one measure full}
  Let $f$ be a density on the cube $\{-1,1\}^p$ that is bounded below by $\alpha/2^p$,
  and let $f_m$ be its empirical counterpart.
  If $m \ge 16 (\a\d)^{-2}\g^{-1} e^{2d} \binom{p}{\le d}$
  then the following holds with probability $1-2\gamma$.
  There exists a density $h$ on $\supp(f_m)$ that satisfies
  $$
  h^{\le d} = f^{\le d},
  \quad
  \norm{h - f_m}_\infty \le \frac{\d}{m}.
  $$
\end{theorem}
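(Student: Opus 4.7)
The plan is to run the single-sample analogue of the proof of Theorem~\ref{thm: master}: since we no longer need to reconcile two distributions, there is no importance-reweighted object $\tilde{g}_m = (f/g) g_m$, and the argument collapses to a single application of Theorem~\ref{thm: transfer} combined with the deviation estimate of Lemma~\ref{lem: deviation fn gm}.

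First I would apply Theorem~\ref{thm: transfer} with $\mu$ set equal to the probability measure whose density is $f$. The lower bound hypothesis $f \ge \alpha/2^p$ is exactly the condition Theorem~\ref{thm: transfer} imposes on $\mu$, and, possibly after adjusting absolute constants, the sample-size hypothesis of the present theorem supplies the required lower bound on $m$. The outcome is an event $E_1$ of probability $\ge 1-\gamma$ on which, for every function $F:\{-1,1\}^p \to \R$, there exists $w \in W^{>d}$ with $F-w$ supported on $\supp(f_m)$ and
$$
\|F-w\|_\infty \;\le\; \frac{2 e^d 2^p}{\alpha m} \|F^{\le d}\|_{L^2}.
$$

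Second, I would control $\|(f_m - f)^{\le d}\|_{L^2}$ by the first half of Lemma~\ref{lem: deviation fn gm} (applied with its ``$n$'' taken to be $m$), which bounds the second moment by $\binom{p}{\le d}/(m\,2^{2p})$. Chebyshev's inequality converts this into a high-probability estimate: on an event $E_2$ of probability $\ge 1-\gamma$,
$$
\|(f_m - f)^{\le d}\|_{L^2} \;\le\; \gamma^{-1/2} \binom{p}{\le d}^{1/2} \frac{1}{\sqrt{m}\,2^p}.
$$

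On the intersection $E_1 \cap E_2$, which has probability $\ge 1-2\gamma$, I specialize Theorem~\ref{thm: transfer} to $F = f - f_m$. This furnishes $w \in W^{>d}$ such that $f - f_m - w$ is supported on $\supp(f_m)$, and combining the two displayed bounds with the sample-size hypothesis gives $\|f - f_m - w\|_\infty \le \delta/m$. Setting $h \coloneqq f - w$, the three required properties fall out in order: since $w \in W^{>d}$ we get $h^{\le d} = f^{\le d}$; since $f_m$ and $h - f_m = f - f_m - w$ are both supported on $\supp(f_m)$, so is $h$; and the $L^\infty$ bound just derived is exactly $\|h - f_m\|_\infty \le \delta/m$. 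To confirm that $h$ is an honest density, I note that the equality of empty-set Fourier coefficients forces $\sum_z h(z) = \sum_z f(z) = 1$, and nonnegativity follows in the natural regime $\delta \le 1$ because $f_m \ge 1/m$ on its support while the perturbation is at most $\delta/m$.

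I do not expect any substantive obstacle: the heavy lifting (hypercontractivity, duality against the weighted $L^1_\delta$ norm, and empirical-process symmetrization) is already packaged in Theorem~\ref{thm: transfer} and Lemma~\ref{lem: deviation fn gm}. The only care needed is bookkeeping of the parameter $\gamma$, since it appears both in Theorem~\ref{thm: transfer}'s hypothesis on $m$ and in Chebyshev's bound, and the constants must be tuned so that both high-probability events are available under the single sample-size hypothesis stated here.
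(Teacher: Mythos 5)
Your proposal is correct and follows essentially the same route as the paper's (very terse) proof: set $F = f - f_m$, invoke Theorem~\ref{thm: transfer} for the measure with density $f$, bound $\norm[1]{(f - f_m)^{\le d}}_{L^2}$ via the first part of Lemma~\ref{lem: deviation fn gm} (with $n$ replaced by $m$) plus Chebyshev, put $h = f - w$, and verify positivity from $f_m \ge 1/m$ on its support together with $\norm{h - f_m}_\infty \le \d/m$. Your closing caveat about reconciling the $\g^{-2}$ in Theorem~\ref{thm: transfer}'s hypothesis with the $\d^{-2}\g^{-1}$ appearing here is a fair one (it tacitly needs $\g \gtrsim \d^2$), but that bookkeeping wrinkle is inherent to the paper's statements of Theorems~\ref{thm: master} and~\ref{thm: one measure full}, not a defect in your argument.
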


\begin{proof}
The proof is similar to that of Theorem~\ref{thm: two measures} above.
Choose $g=f$, $n=m$, hence $\tilde{g}_m = (f/g)g_m = f_m$,
and use $F = f-\tilde{g}_m$.
Apply only the first bound in Lemma~\ref{lem: deviation fn gm}.

Note that the bound in the conclusion and the fact that $f_m=1/m$ on its support implies that $h \ge 1/m-\d/m > 0$ on $\supp(f_m)$, and thus $h$ is a density.

We leave the details to the reader.
\end{proof}

\section{Solution space}\label{s:solutionspace}

Our next focus is on proving Theorem~\ref{thm: privacy-accuracy},
which gives guarantees for privacy and accuracy of the synthetic data created by  Algorithm~\ref{mainalgorithm1}.

Let us formally introduce the solution space -- the space of all functions on the reduced sample space $S$ that have the same marginals as a given function $u$.

\begin{definition}[Solution space]
  Let $\mu$ be a probability measure on the cube $\{-1,1\}^p$,
  and $\mu_m$ be its empirical counterpart.
  For any function $u: \{-1,1\}^p \to \R$, consider the affine subspace $H(u)$ of all functions supported on $\supp(\mu_m)$ and that have the same marginals up to dimension $d$ as the function $u$, i.e.
  $$
  H(u) \coloneqq \left\{ h \in S_{\mu_m} :\; h^{\le d} = u^{\le d} \right\}
  = \left( u-W^{>d} \right) \cap S_{\mu_m},
  $$
  where $S_{\mu_m}$, as before, denotes the linear space of all functions supported on
  the reduced space $S = \supp(\mu_m)$.
\end{definition}

\subsection{Success with high probability}

The Algorithm~\ref{mainalgorithm1} succeeds, i.e. does not return ``Failure'', when the reduced space $S = \{\theta_1,\ldots,\theta_m\}$ is well conditioned. By definition, this happens if
\begin{equation}	\label{eq: smin}
s_{\min}(M) \ge \frac{\sqrt{m}}{2e^d}
\end{equation}
where $s_{\min}$ denotes the smallest singular value, and $M$ is the $m \times \binom{p}{\le d}$ matrix whose entries are $w_J(\theta_i)$ for $\abs{J} \le d$, i.e. the matrix whose rows are indexed by the points $\theta_i \in S$, and whose columns are indexed by Walsh functions $w_J$ of degree at most $d$.

Let us reformulate the condition \eqref{eq: smin} in the dual form,
and then deduce from Theorem~\ref{thm: transfer} that that it holds with high probability.

\begin{lemma}[Well conditioned reduced space]		\label{lem: well conditioned}
  The reduced space $S$ is well conditioned if and only if
  any function $F: \{-1,1\}^p \to \R$ satisfies
  \begin{equation}	\label{eq: well conditioned}
  \inf \left\{ \norm{F-w}_{L^2(\mu_m)} :\; w \in W^{>d}, \, F-w \in S_{\mu_m} \right\}
  \le \frac{2e^d 2^p}{m} \norm{F^{\le d}}_{L^2}.
  \end{equation}
\end{lemma}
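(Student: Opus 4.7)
The plan is to reformulate both the well-conditionedness of $S$ and the bound \eqref{eq: well conditioned} as the same lower bound on the smallest (nonzero) singular value of a single linear map, and then invoke a short adjoint identity that equates the two bounds. For the first reformulation, write any $F = \sum_{|J| \le d} a_J w_J \in W^{\le d}$ and observe that $\norm{Ma}_2^2 = \sum_{i=1}^m F(\theta_i)^2 = m\norm{F}_{L^2(\mu_m)}^2$ and $\norm{a}_2 = \norm{F}_{L^2}$ by Parseval on the Walsh basis. Hence $s_{\min}(M) \ge \sqrt{m}/(2e^d)$ is equivalent to the restriction estimate
\[
\norm{F}_{L^2(\mu_m)} \ge \frac{1}{2e^d}\norm{F}_{L^2} \qquad (F \in W^{\le d}),
\]
i.e.\ to $s_{\min}(R_0) \ge 1/(2e^d)$, where $R_0 : W^{\le d} \to L^2(\mu_m)$ is the restriction map.

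Next I would rewrite \eqref{eq: well conditioned} similarly. The substitution $h = F - w$ (with $w \in W^{>d}$, so $w^{\le d}=0$) turns $\{w \in W^{>d} : F - w \in S_{\mu_m}\}$ into $\{h \in S_{\mu_m} : h^{\le d} = F^{\le d}\}$ and converts $\norm{F - w}_{L^2(\mu_m)}$ into $\norm{h}_{L^2(\mu_m)}$. So \eqref{eq: well conditioned} for every $F$ is equivalent to the statement that the low-degree projection $\Pi : S_{\mu_m} \to W^{\le d}$, $h \mapsto h^{\le d}$, is surjective and admits preimages of controlled $L^2(\mu_m)$-norm: $\norm{h}_{L^2(\mu_m)} \le (2e^d 2^p/m)\norm{u}_{L^2}$ for every target $u = F^{\le d} \in W^{\le d}$. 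Equivalently, $\norm{\Pi^+} \le 2e^d 2^p/m$, i.e.\ $s_{\min}(\Pi) \ge m/(2e^d 2^p)$.

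Finally I would connect these two singular-value bounds by duality. A direct inner-product calculation gives, for $f \in W^{\le d}$ and $g \in S_{\mu_m}$,
\[
\ip{R_0 f}{g}_{L^2(\mu_m)} = \frac{1}{m}\sum_{s \in S} f(s)g(s) = \frac{2^p}{m}\ip{f}{g}_{L^2} = \frac{2^p}{m}\ip{f}{\Pi g}_{L^2},
\]
using that $g$ is supported on $S$ and that $f \in W^{\le d}$ is $L^2$-orthogonal to $g - g^{\le d} \in W^{>d}$. Hence $\Pi = (m/2^p)\,R_0^*$, so the nonzero singular values of $\Pi$ and $R_0$ differ exactly by the factor $m/2^p$, and $s_{\min}(R_0) \ge 1/(2e^d)$ if and only if $s_{\min}(\Pi) \ge m/(2e^d 2^p)$. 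Combining the three steps yields the claimed equivalence; in the forward direction one also uses that $s_{\min}(M) > 0$ forces $\Pi$ to be surjective, so the infimum in \eqref{eq: well conditioned} is automatically attained. The main obstacle is just bookkeeping: the factors of $m$ (from $\mu_m$ placing mass $1/m$ on each support point) and $2^p$ (from the uniform measure defining $L^2$ on the cube) must be tracked carefully when relating $R_0$ to $\Pi$; beyond that the argument is structural linear algebra.
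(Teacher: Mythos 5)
Your proof is correct, and it takes a cleaner, more elementary route than the paper's. Both arguments are at bottom duality arguments: the identity $\Pi = (m/2^p)\,R_0^*$ relating the low-degree projection $\Pi : S_{\mu_m} \to W^{\le d}$ to the adjoint of the restriction $R_0 : W^{\le d} \to L^2(\mu_m)$ is the heart of the paper's proof as well, just packaged differently. The paper reaches it by reusing the machinery of Theorem~\ref{thm: transfer}: it introduces the weighted norms $\norm{\cdot}_{L^2_\delta}$, interprets the left-hand side of \eqref{eq: well conditioned} as an operator-norm bound on a quotient map $Q : L^2 \to L^2_{1/\delta}/W^{>d}$, dualizes (dual of a quotient is a subspace), and then sends $\delta \to 0^+$ to recover the hard support constraint $F - w \in S_{\mu_m}$. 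You bypass the $\delta$-limit and the quotient formalism entirely by substituting $h = F - w$ at the outset, which turns the infimum into a pseudoinverse bound $\norm{\Pi^+} \le 2e^d 2^p/m$, and then computing the adjoint of $R_0$ directly. In this Hilbert-space setting your route is arguably the more natural one; the weighted-norm/quotient device is really forced only in the $L^1/L^\infty$ setting of Theorem~\ref{thm: transfer}, where there is no pseudoinverse, and the authors simply reuse that template here for uniformity of exposition. One minor caveat, applying to both proofs equally: the identity $\ip{R_0 f}{g}_{L^2(\mu_m)} = (2^p/m)\ip{f}{g}_{L^2}$ (and the paper's corresponding $\norm{f}_{L^2(\mu_m)} = \sqrt{2^p/m}\,\norm{f}_{L^2}$ for $f \in S_{\mu_m}$) silently assumes the sample points $\theta_1,\ldots,\theta_m$ are distinct; this is harmless in context but worth being aware of.
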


\begin{proof}
Decomposing $F = F^{\le d} + F^{>d}$ we see that $F^{\le d}$ in the right hand side of \eqref{eq: well conditioned} may be replaced by $F$ without loss of generality.
Furthermore, since $\norm{f}_{L^2(\mu_m)} = \sqrt{2^p/m}\norm{f}_{L^2}$ for any $f \in S_{\mu_m}$, we can rewrite condition \eqref{eq: well conditioned} equivalently as
\begin{equation}	\label{eq: well conditioned L2}
\inf \left\{ \norm{F-w}_{L^2} :\; w \in W^{>d}, \, F-w \in S_{\mu_m} \right\}
\le B \norm{F}_{L^2}
\end{equation}
where
$$
B = 2e^d \sqrt{\frac{2^p}{m}}.
$$

We will employ a duality argument similar to the one we used in the proof of Theorem~\ref{thm: transfer}.
Given $\d>0$, consider the weighted Hilbert space $L^2_\d$ where the norm is defined by
$$
\norm{F}_{L^2_\d}^2
\coloneqq \norm{F \one_S}_{L^2}^2 + \d \norm{F \one_{S^c}}_{L^2}^2.
$$
where $\one_S$ denotes the indicator function of $S$.
Then \eqref{eq: well conditioned L2} is equivalent to
$$
\inf \left\{ \norm{F-w}_{L^2_{1/\d}} :\; w \in W^{>d} \right\}
\le B \norm{F}_{L^2}
\quad \forall \d>0.
$$
(To see this, note that taking $\d \to 0_+$ enforces $F-w \one_{S^c} = 0$, or equivalently $F-w \in S_{\mu_m}$.)
This can be interpreted as a bound on the norm of the quotient map $Q$:
$$
\norm{Q:\; L^2 \to L^2_{1/\d}/W^{>d}} \le B
\quad \forall \d>0.
$$
Let us dualize this bound. The adjoint operator has the same norm, so
$$
\norm{Q^*:\; \big( L^2 \big)^* \to \big( L^2_{1/\d}/W^{>d} \big)^*} \le B
\quad \forall \d>0.
$$
The adjoint of the quotient map is the canonical (identity) embedding;
the Hilbert space $L^2$ is self-dual, and the dual of a quotient space is a subspace of the dual, i.e.
$$
\big( L^2_{1/\d}/W^{>d} \big)^*
= ((W^{>d})^\perp,\|\,\|_{( L^2_{1/\d})^*})
= (W^{\le d},\|\,\|_{L^2_\d}).
$$
Thus, the bound is equivalent to
$$
\norm{\Id:\; (W^{\le d},\|\,\|_{L^2_\d}) \to L^2} \le B
\quad \forall \d>0.
$$

By definition of the operator norm and the norm in $L^2_\d$, this bound is equivalent to
saying that
$$
\norm{F}_{L^2}^2 \le B^2 \left( \norm{F \one_S}_{L^2}^2 + \d \norm{F \one_{S^c}}_{L^2}^2 \right)
\quad \forall F \in W^{\le d}, \, \forall \d>0.
$$
Taking $\d \to 0_+$, we see that this is equivalent to
$$
\norm{F}_{L^2}^2 \le B^2 \norm{F \one_S}_{L^2}^2
= \frac{B^2}{2^p} \norm{F \one_S}_{\ell^2}^2
= \frac{4e^{2d}}{m} \norm{F \one_S}_{\ell^2}^2
\quad \forall F \in W^{\le d}.
$$
Expressing $F$ through its orthogonal decomposition
$F = \sum_{\abs{J} \le d} a_J w_J$, we can rewrite the latter condition as
$$
\sum_{\abs{J} \le d} a_J^2
\le \frac{4e^{2d}}{m} \norm[3]{\sum_{\abs{J} \le d} a_J w_J \one_S}_{\ell^2}^2
= \frac{4e^{2d}}{m} \sum_{i=1}^m \Big( \sum_{\abs{J} \le d} a_J w_J(\theta_i) \Big)^2
\quad \forall \text{ choice of coefficients } a_J.
$$
This in turn is equivalent to
$$
\norm{a}_{\ell^2}^2 \le \frac{4e^{2d}}{m} \norm{Ma}_{\ell^2}^2,
$$
which is finally equivalent to \eqref{eq: smin}.
\end{proof}

\begin{proposition}[Success with high probability]	\label{prop: success}
  If $m \ge 16 \g^{-2} e^{2d} \binom{p}{\le d}$, then Algorithm~\ref{mainalgorithm1} succeeds (i.e. does not return ``Failure'') with probability at least $1-\gamma$.
\end{proposition}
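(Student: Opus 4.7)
The plan is to reduce the proposition to Theorem~\ref{thm: transfer} via the duality already established in Lemma~\ref{lem: well conditioned}. In Algorithm~\ref{mainalgorithm1} the reduced space $S$ is drawn uniformly from $\{-1,1\}^p$, so one should instantiate Theorem~\ref{thm: transfer} with $\mu$ equal to the uniform distribution on the cube. The density of $\mu$ is exactly $1/2^p$, which means the regularity parameter in Theorem~\ref{thm: transfer} is $\alpha=1$, and the sample size condition $m \ge 16 (\a\g)^{-2} e^{2d}\binom{p}{\le d}$ specializes to precisely the hypothesis $m \ge 16 \g^{-2} e^{2d}\binom{p}{\le d}$ assumed here.

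First, I would apply Theorem~\ref{thm: transfer} with these choices to obtain, with probability at least $1-\g$, the following statement: for every $F:\{-1,1\}^p\to\R$ there exists $w \in W^{>d}$ with $F-w \in S_{\mu_m}$ and
$$
\norm{F-w}_\infty \le \frac{2e^d 2^p}{m}\norm{F^{\le d}}_{L^2}.
$$
Next, I would compare this $L^\infty$ bound to the $L^2(\mu_m)$ bound appearing in \eqref{eq: well conditioned}. For any function $g$ supported on $S=\{\theta_1,\ldots,\theta_m\}$ we have
$$
\norm{g}_{L^2(\mu_m)}^2 = \frac{1}{m}\sum_{i=1}^m |g(\theta_i)|^2 \le \max_i |g(\theta_i)|^2 = \norm{g}_\infty^2,
$$
so in particular $\norm{F-w}_{L^2(\mu_m)} \le \norm{F-w}_\infty$. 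Hence the infimum in \eqref{eq: well conditioned} is bounded by the right-hand side of Theorem~\ref{thm: transfer}.

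At this point the criterion of Lemma~\ref{lem: well conditioned} is verified with probability at least $1-\g$, and that lemma certifies $s_{\min}(M) \ge \sqrt{m}/(2e^d)$. This is exactly the well-conditionedness condition checked in step 2 of Algorithm~\ref{mainalgorithm1}, so the algorithm does not return ``Failure'' on this event.

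I do not anticipate a serious obstacle: the two technical inputs (Theorem~\ref{thm: transfer} and Lemma~\ref{lem: well conditioned}) are already set up so that they plug together. The only mild point to be careful about is the norm discrepancy between $L^\infty$ (which Theorem~\ref{thm: transfer} controls) and $L^2(\mu_m)$ (which Lemma~\ref{lem: well conditioned} requires), but this goes in the easy direction since $\norm{\cdot}_{L^2(\mu_m)} \le \norm{\cdot}_\infty$ for functions supported on $S$; no additional probabilistic argument is needed here.
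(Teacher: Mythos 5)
Your proof is correct and follows exactly the same route as the paper: apply Theorem~\ref{thm: transfer} with $\mu$ uniform (so $\alpha=1$), then pass from the $L^\infty$ bound to the $L^2(\mu_m)$ bound required by Lemma~\ref{lem: well conditioned} via $\norm{\cdot}_{L^2(\mu_m)} \le \norm{\cdot}_\infty$ for functions supported on $S$. The paper's own proof is the same argument stated more tersely.
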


\begin{proof}
By definition, Algorithm~\ref{mainalgorithm1} succeeds if the reduced space $S$ is well conditioned. Then the conclusion
immediately follows from Theorem~\ref{thm: transfer} for the uniform density $\mu$, Lemma~\ref{lem: well conditioned} and the fact that the $L^2(\mu_m)$ norm is bounded by the sup-norm.
\end{proof}

\subsection{All solution spaces are translates of each other}

First let us show that with high probability in $\mu_m$,
all solution spaces $H(u)$ are nonempty and are translates of each other.
The following elementary lemma will help us.

\begin{proposition}		\label{prop: translates of each other}
  If the reduced space $S$ is well conditioned,
  the solution spaces $H(u)$ for all $u:\{-1,1\}^p \to \R$
  are nonempty and are translates of each other.
\end{proposition}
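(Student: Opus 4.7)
The plan is to separate the two assertions. For the ``translates of each other'' statement, I would first identify the common direction subspace as $V \coloneqq H(0) = W^{>d} \cap S_{\mu_m}$, which is a linear subspace independent of $u$. Given any $h \in H(u)$, I claim $H(u) = h + V$: for $g \in S_{\mu_m}$, the condition $g^{\le d} = u^{\le d}$ is equivalent to $(g-h)^{\le d} = 0$ together with $g - h \in S_{\mu_m}$, i.e.\ $g - h \in V$. Thus every nonempty $H(u)$ is a coset of $V$, and consequently any two nonempty solution spaces $H(u_1), H(u_2)$ differ by the fixed vector $h_2 - h_1$ for any choice $h_i \in H(u_i)$.

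It therefore remains to verify that $H(u) \ne \emptyset$ for every $u:\{-1,1\}^p \to \R$. Equivalently, the low-degree projection
$$P \colon S_{\mu_m} \longrightarrow W^{\le d}, \qquad P(h) = h^{\le d},$$
must be surjective. I would argue this by a short duality computation with respect to the $L^2$ inner product: since $\langle P(h), f \rangle_{L^2} = \langle h^{\le d}, f \rangle_{L^2} = \langle h, f \rangle_{L^2}$ for $f \in W^{\le d}$, surjectivity of $P$ amounts to showing that the only $f \in W^{\le d}$ orthogonal to every $h \in S_{\mu_m}$ is $f = 0$. Testing against the indicators $h = \one_{\theta_i}$ reduces this to: the only $f \in W^{\le d}$ that vanishes on $S = \supp(\mu_m)$ is the zero function.

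At this point I would convert the last statement into a rank condition on the Walsh matrix $M$ from Algorithm~\ref{mainalgorithm1}. Writing $f = \sum_{|J| \le d} a_J w_J$, the vector of values $(f(\theta_i))_{i=1}^m$ is exactly $M a$, so the vanishing of $f$ on $S$ is equivalent to $Ma = 0$. Well-conditioning provides $s_{\min}(M) \ge \sqrt{m}/(2e^d) > 0$, so $M$ has trivial kernel, forcing $a = 0$ and $f = 0$. This yields surjectivity of $P$, and hence $H(u) \ne \emptyset$ for every $u$, completing the proposition.

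The only real obstacle is bookkeeping in the duality step, namely making sure the pairing between functions on the cube and vectors indexed by $S$ and by $\{J : |J| \le d\}$ is set up so that the relevant operator is represented by $M$ (rather than by some rescaled transpose). This is routine; notably, the quantitative form of well-conditioning is not used here, and mere positivity of $s_{\min}(M)$ suffices.
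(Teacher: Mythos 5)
Your proof is correct, and it takes a genuinely different route from the paper's. The paper invokes Lemma~\ref{lem: well conditioned} (the dual reformulation of well-conditioning, proved earlier via quotient-space duality) with $F=u$: the finiteness of the infimum there yields a decomposition $u = s + w$ with $s \in S_{\mu_m}$, $w \in W^{>d}$, and then $H(u) = (u - W^{>d}) \cap S_{\mu_m} = s - (W^{>d} \cap S_{\mu_m})$ in one line. You instead split the claim cleanly: the ``translates'' part is a routine coset argument with $V = W^{>d} \cap S_{\mu_m}$, and for nonemptiness you prove directly that the projection $P: S_{\mu_m} \to W^{\le d}$, $h \mapsto h^{\le d}$, is surjective by a short duality calculation reducing to ``$f \in W^{\le d}$ vanishes on $S$ $\Rightarrow$ $f=0$'', which is precisely $\ker M = \{0\}$ and hence $s_{\min}(M) > 0$. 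Your route is more elementary and self-contained (it bypasses Lemma~\ref{lem: well conditioned} entirely and works straight from the algorithm's definition of well-conditioning), and your observation that only the qualitative positivity of $s_{\min}(M)$ — not the quantitative bound $\sqrt{m}/2e^d$ — is used is a nice clarification that the paper's phrasing obscures. The tradeoff is that the paper's approach economically reuses Lemma~\ref{lem: well conditioned}, which it needs again for the sensitivity bounds, so within the paper's architecture the reuse is natural.
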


\begin{proof}
Let $F: \{-1,1\}^p \to \R$ be an arbitrary function. If $S$ is well conditioned,
Lemma~\ref{lem: well conditioned} for $F=u$
yields the existence of $w \in W^{>d}$ and $s \in S_{\mu_m}$ such that
$u=s+w$. This implies that $u-W^{>d}=s-W^{>d}$. Hence
$$
H(u)
= \left( u-W^{>d} \right) \cap S_{\mu_m}
= \left( s-W^{>d} \right) \cap S_{\mu_m}
= s-\left(W^{>d} \cap S_{\mu_m} \right).
$$
The linear subspace $W^{>d} \cap S_{\mu_m}$ is nonempty as it contains the origin.
Therefore, all solution spaces $H(u)$ are translates of this linear space, and thus of each other.
\end{proof}

\subsection{Sensitivity of the solution space}

Next, we will check that the map $u \mapsto H(u)$ is Lipschitz
in the Hausdorff metric. Recall that the Hausdorff distance between two subsets $A$ and $B$ of a normed space $X$ is defined as
$$
d_X(A,B) = \max \left\{ \sup_{a \in A} \inf_{b \in B} \norm{a-b}_X, \, \sup_{b \in B} \inf_{a \in A} \norm{a-b}_X \right\}.
$$
When $A$ and $B$ are affine subspaces that are translates of each other, we have
$$
d_X(A,B) = \inf_{b \in B} \norm{a-b}_X
= \dist_X(a,B)
\quad \text{for any } a \in A.
$$
When the norm is clear from the context, we skip the subscript $X$. When $X = L^q$ we simply write $d_q(A,B)$.

\begin{lemma}[Sensitivity of the solution space]				\label{lem: sensitivity of solution space}
  If the reduced space $S$ is well conditioned,
  then any pair of functions $u_1, u_2: \{-1,1\}^p \to \R$ satisfies
  \begin{equation}	\label{eq: sensitivity of solution space}
  d_\infty \left( H(u_1), \, H(u_2) \right)
  \le \frac{2e^d2^p}{\sqrt{m}} \, \norm[1]{(u_1-u_2)^{\le d}}_{L^2}.
  \end{equation}
\end{lemma}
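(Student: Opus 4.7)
The plan is to reduce the bound on the Hausdorff distance to a bound produced by the well-conditionedness criterion (Lemma~\ref{lem: well conditioned}), then upgrade an $L^2(\mu_m)$ estimate to an $L^\infty$ estimate using the fact that the relevant functions live on the $m$-point reduced space $S$.

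First, by Proposition~\ref{prop: translates of each other}, the assumption that $S$ is well conditioned guarantees that $H(u_1)$ and $H(u_2)$ are both nonempty and are translates of one another (as affine subsets of the linear space $S_{\mu_m}$). For affine subspaces that are translates, the Hausdorff distance reduces to the distance from a single point in one to the other: $d_\infty(H(u_1),H(u_2)) = \inf_{h_2\in H(u_2)} \norm{h_1-h_2}_\infty$ for any fixed $h_1 \in H(u_1)$. So I would fix such an $h_1$ and try to choose $h_2 \in H(u_2)$ so that the difference $g := h_1-h_2$ has small sup-norm. By the definition of $H(\cdot)$, the condition $h_2 \in H(u_2)$ is equivalent to requiring $g \in S_{\mu_m}$ together with $g^{\le d} = (u_1-u_2)^{\le d}$. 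In other words, I am looking for $w \in W^{>d}$ such that $F-w \in S_{\mu_m}$ with small sup-norm, where $F := u_1-u_2$; once such a $w$ is found, I set $g = F-w$ and $h_2 = h_1-g$.

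Second, this is precisely the situation handled by Lemma~\ref{lem: well conditioned}: well-conditionedness of $S$ gives the existence of such a $w \in W^{>d}$ with $F-w \in S_{\mu_m}$ and
\[
\norm{F-w}_{L^2(\mu_m)}
\le \frac{2e^d 2^p}{m}\, \norm{F^{\le d}}_{L^2}
= \frac{2e^d 2^p}{m}\, \norm[1]{(u_1-u_2)^{\le d}}_{L^2}.
\]

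Third, I need to convert this $L^2(\mu_m)$ bound into the $L^\infty$ bound required by \eqref{eq: sensitivity of solution space}. Since $g = F-w$ is supported on the reduced space $S = \{\theta_1,\ldots,\theta_m\}$, I have
\[
\norm{g}_\infty^2
= \max_{1\le i\le m} g(\theta_i)^2
\le \sum_{i=1}^m g(\theta_i)^2
= m\,\norm{g}_{L^2(\mu_m)}^2,
\]
so $\norm{g}_\infty \le \sqrt{m}\,\norm{g}_{L^2(\mu_m)}$. Combining this with the previous inequality gives
\[
\norm{h_1-h_2}_\infty
= \norm{g}_\infty
\le \sqrt{m}\cdot \frac{2e^d 2^p}{m}\, \norm[1]{(u_1-u_2)^{\le d}}_{L^2}
= \frac{2e^d 2^p}{\sqrt{m}}\, \norm[1]{(u_1-u_2)^{\le d}}_{L^2},
\]
which is exactly the claimed bound. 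There is no real obstacle here: the whole argument is a clean duality-plus-rescaling consequence of what was already proved; the one step that really \emph{uses} well-conditionedness (beyond ensuring nonemptiness) is the quantitative bound coming from Lemma~\ref{lem: well conditioned}, and the factor-$\sqrt{m}$ loss in the last step is optimal because $g$ is supported on $m$ points.
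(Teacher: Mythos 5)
Your proof is correct and follows essentially the same route as the paper: reduce via Proposition~\ref{prop: translates of each other} to bounding $\inf_{h_2 \in H(u_2)}\norm{h_1-h_2}_\infty$ for a single $h_1$, invoke the dual characterization of well-conditionedness in Lemma~\ref{lem: well conditioned}, and upgrade the $L^2(\mu_m)$ bound to $L^\infty$ by the factor $\sqrt{m}$ coming from the $m$-point support. The only cosmetic difference is that you apply Lemma~\ref{lem: well conditioned} directly to $F=u_1-u_2$ while the paper applies it to $F=s_1-u_2$; these are interchangeable since the two differ by an element of $W^{>d}$ and the bound depends only on $F^{\le d}$.
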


\begin{proof}
Since, by Proposition~\ref{prop: translates of each other},
the affine subspaces $H(u_1)$ and $H(u_2)$ are translates of each other,
it suffices to bound $\inf_{s_2 \in H(u_2)}\norm{s_1-s_2}_\infty$ for any $s_1 \in H(u_1)$.

Pick any $s_1 \in H(u_1)$. Since $H(u_1) = (u_1-W^{>d}) \cap S_{\mu_m}$, there exists $w_1 \in W^{>d}$ such that
$s_1=u_1-w_1 \in S_{\mu_m}$.
Apply the bound in Lemma~\ref{lem: well conditioned} for $F=s_1-u_2$.
There exists $w_2 \in W^{>d}$ such that $s_1-u_2-w_2 \in S_{\mu_m}$ and
\begin{equation}	\label{eq: s1-s2}
\norm{s_1-u_2-w_2}_\infty
\le \sqrt{m} \norm{s_1-u_2-w_2}_{L^2(\mu_m)}
\le \frac{2e^d2^p}{\sqrt{m}} \, \norm[1]{(s_1-u_2-w_2)^{\le d}}_{L^2}.
\end{equation}
Since both $s_1$ and $s_1-u_2-w_2$ lie in the linear subspace $S_{\mu_m}$, it must be that $s_2 \coloneqq u_2+w_2 \in S_{\mu_m}$ as well.
Since $w_2 \in W^{>d}$, it follows that $s_2 \in (u_2+W^{>d}) \cap S_{\mu_m} = H(u_2)$.

Furthermore,
$$
(s_1-u_2-w_2)^{\le d} = (u_1-w_1-u_2-w_2)^{\le d} = (u_1-u_2)^{\le d}.
$$
(In the last step, we used that $w_1$ and $w_2$ are in $W^{>d}$ and so $(w_1)^{\le d} = (w_2)^{\le d} = 0$.)

Therefore, we can rewrite \eqref{eq: s1-s2} as
$$
\norm{s_1-s_2}_\infty \le \frac{2e^d2^p}{\sqrt{m}} \, \norm[1]{(u_1-u_2)^{\le d}}_{L^2}.
$$
The proof is complete.
\end{proof}

\subsection{Changing a single data point}

The Sensitivity Lemma~\ref{lem: sensitivity of solution space} will be applied in the situation where $u_1$ and $u_2$ are the uniform densities on the two datasets $X_1$ and $X_2$ that are different by a single element. Let us specialize the bound \eqref{eq: sensitivity of solution space} to this case.

Suppose $X_1 = (x_1,\ldots,x_n)$ and $X_2 = (x_1,\ldots,x_n,x_{n+1})$. Here, in our discussion of privacy, we allow $x_i$ be arbitrary points drawn from $\{-1,1\}^p$; they do not need to be random. The corresponding densities are
$$
f_n = \frac{1}{n} \sum_{i=1}^n \one_{x_i}
\quad \text{and} \quad
f_{n+1} = \frac{1}{n+1} \sum_{i=1}^{n+1} \one_{x_i}.
$$

A direct calculation yields
$$
f_{n+1} - f_n = \frac{1}{n+1} \left( \one_{x_{n+1}}-f_n \right).
$$
Using triangle inequality and then \eqref{eq: 1x norm} and \eqref{eq: fn norm}, we get
\begin{equation}	\label{eq: fn+1 fn}
\norm[1]{(f_{n+1} - f_n)^{\le d}}_{L^2}
\le \frac{1}{n+1} \Big( \norm[1]{(\one_{x_{n+1}})^{\le d}}_{L^2} + \norm[1]{(f_n)^{\le d}}_{L^2}\Big)
\le \frac{2}{n} \binom{p}{\le d}^{1/2} \frac{1}{2^p}.
\end{equation}

\section{Selection rule}

Next, we want to extend sensitivity to the selection rule. Can we pick one point from a solution space in such a way that a small change in the solution space always leads to a small change in the selected point?

\subsection{$L^{2}$ sensitivity}

We do not know the best selection rule in the $L^{\infty}$ metric. The problem is simpler for the $L^2$ metric: the proximal point (to a given reference point) is a good selection rule.

\begin{lemma}[Sensitivity of the closest point in the Hilbert space]		\label{lem:proximalinL2}
  Consider a Hilbert space $X$ and a reference point $r \in X$.
  Let $x(K)$ denote a point in a nonempty closed set $K \subset X$ that is
  closest to $r$, i.e.
  $$
  x_r(K) = \argmin \left\{ \norm{x-r}:\; x \in K \right\}.
  $$
  Then, for any two nonempty closed convex sets $K_1, K_2 \subset X$, we have
  $$
  \norm{x_r(K_1)-x_r(K_2)}^2 \le 4\max \left( \dist(r,K_1), \dist(r,K_2) \right)
  	\cdot d(K_1,K_2).
  $$
\end{lemma}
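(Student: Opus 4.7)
The plan is to rely solely on the variational characterization of the metric projection in a Hilbert space: if $K$ is nonempty, closed, and convex, then $x = x_r(K)$ is uniquely characterized by
\[
\langle r - x, \, y - x \rangle \le 0 \qquad \text{for every } y \in K.
\]
Write $x_i = x_r(K_i)$ and $d_i = \dist(r, K_i) = \|r - x_i\|$, and set $D = d(K_1,K_2)$. The natural instinct is to test the variational inequalities at the points $x_2 \in K_1$ and $x_1 \in K_2$ respectively, but in general $x_2 \notin K_1$ and $x_1 \notin K_2$; the one nontrivial step is to handle this via Hausdorff approximation.

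First I would fix $\epsilon > 0$ and, using the definition of the Hausdorff distance, choose $y_1 \in K_1$ with $\|y_1 - x_2\| \le D + \epsilon$ and $y_2 \in K_2$ with $\|y_2 - x_1\| \le D + \epsilon$. Testing the $K_1$-variational inequality at $y_1$ and splitting $x_2 - x_1 = (y_1 - x_1) + (x_2 - y_1)$, I obtain
\[
\langle r - x_1, \, x_2 - x_1 \rangle
= \langle r - x_1, \, y_1 - x_1\rangle + \langle r - x_1, \, x_2 - y_1\rangle
\le 0 + d_1(D + \epsilon),
\]
where the first term is nonpositive by the variational inequality and the second is bounded by Cauchy–Schwarz. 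Symmetrically, testing the $K_2$-variational inequality at $y_2$ yields $\langle r - x_2, \, x_1 - x_2 \rangle \le d_2(D + \epsilon)$.

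The finish is a one-line identity: adding the two inequalities and regrouping,
\[
\|x_1 - x_2\|^2
= \langle (r - x_1) - (r - x_2), \, x_2 - x_1\rangle
\le (d_1 + d_2)(D + \epsilon).
\]
Letting $\epsilon \downarrow 0$ and using $d_1 + d_2 \le 2\max(d_1,d_2) \le 4\max(d_1,d_2)$ gives the claimed bound (in fact with the sharper constant $2$). No serious obstacle arises; the only subtlety is that $x_1$ and $x_2$ need not lie in the opposite sets, which is why the proof must approximate them by Hausdorff-optimal representatives $y_1, y_2$ rather than testing the variational inequalities directly.
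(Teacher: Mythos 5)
Your proof is correct, and it takes a genuinely different route from the paper's. The paper's argument rests on a strong-convexity inequality for the projection (its Lemma~\ref{1}, which gives $\|x_0-y\|^2\le 2(\|y-r\|^2-\|x_0-r\|^2)$ for all $y\in K$), and then proceeds by a case split on whether $d(K_1,K_2)$ is larger or smaller than $\dist(r,K_1)+\dist(r,K_2)$, followed by a somewhat delicate expansion of the difference of squares. You instead use the variational inequality $\langle r-x_r(K),\,y-x_r(K)\rangle\le 0$ for $y\in K$, test it at Hausdorff-near-optimal points in the opposite set, and add the two resulting inequalities; the cross terms collapse to $\|x_1-x_2\|^2$ by a one-line identity. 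Your approach is cleaner: it avoids the case split entirely, requires no auxiliary lemma, and actually delivers the \emph{sharper} constant $\|x_r(K_1)-x_r(K_2)\|^2\le(\dist(r,K_1)+\dist(r,K_2))\,d(K_1,K_2)$, i.e.\ $2\max(\cdot,\cdot)$ rather than $4\max(\cdot,\cdot)$. The one subtlety you correctly flagged — that $x_1\notin K_2$ and $x_2\notin K_1$ in general, so the variational inequality cannot be tested directly at the other projection — is exactly the place where the Hausdorff distance has to enter, and your $\epsilon$-approximation handles it rigorously. This would be a perfectly acceptable (and arguably preferable) replacement for the paper's proof.
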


In order to prove this lemma, we first observe:

\begin{lemma}\label{1}
  Suppose that $K$ is a nonempty closed convex subset of a Hilbert space $X$.
  Let $r \in X$.
  Let $x_0 = \argmin \left\{ \|x-r\|: \; x\in K \right\}$. Then
  $$
  \|x_0-y\|^{2} \le 2 \left( \|y-r\|^{2}-\|x_{0}-r\|^{2} \right)
  $$
  for all $y\in K$.
\end{lemma}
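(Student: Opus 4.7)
This is the standard Hilbert-space projection lemma, phrased with a cosmetic factor of $2$. The plan is to extract the variational characterization of the closest point $x_0$, namely that $\langle r - x_0, \, y - x_0 \rangle \le 0$ for all $y \in K$, and then expand the squared norm $\|y - r\|^2$ using Pythagorean-type algebra around $x_0$. No deep machinery is needed — only convexity of $K$ and the definition of $x_0$ as a minimizer.

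Concretely, first I would fix $y \in K$ and, by convexity of $K$, consider the segment $x_t \coloneqq (1-t) x_0 + t y \in K$ for $t \in [0,1]$. Since $x_0$ minimizes the distance to $r$, we have $\|x_t - r\|^2 \ge \|x_0 - r\|^2$. Expanding $\|x_t - r\|^2 = \|x_0 - r\|^2 + 2t \langle x_0 - r,\, y - x_0\rangle + t^2 \|y - x_0\|^2$, subtracting $\|x_0 - r\|^2$, dividing by $t > 0$, and sending $t \to 0^+$ gives the variational inequality
\[
\langle y - x_0,\, x_0 - r\rangle \ge 0.
\]

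Second, I would apply the identity
\[
\|y - r\|^2 = \|(y - x_0) + (x_0 - r)\|^2 = \|y - x_0\|^2 + 2\langle y - x_0,\, x_0 - r\rangle + \|x_0 - r\|^2.
\]
Combined with the variational inequality, this yields $\|y - r\|^2 - \|x_0 - r\|^2 \ge \|y - x_0\|^2$, which is strictly stronger than the claimed bound (the factor $2$ on the right-hand side is slack). Multiplying by $2$ on the right gives exactly the statement of the lemma.

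There is no real obstacle here; the lemma is essentially the acute-angle characterization of projections onto closed convex sets in a Hilbert space, and the only reason the factor $2$ appears in the statement is presumably to match a form convenient for invocation in the proof of Lemma~\ref{lem:proximalinL2}, where one must balance the contributions of two sets $K_1$ and $K_2$ against a common reference $r$.
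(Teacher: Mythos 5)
Your proof is correct and takes a genuinely different route from the paper's. The paper works with the single midpoint $\tfrac{x_0+y}{2} \in K$, invokes the parallelogram identity, and uses only the inequality $\bigl\|\tfrac{x_0+y}{2} - r\bigr\| \ge \|x_0 - r\|$; this produces the bound with the factor $2$ directly and needs nothing beyond the Hilbert norm identity. You instead extract the full variational (obtuse-angle) characterization $\langle y - x_0,\, x_0 - r\rangle \ge 0$ by taking $t \to 0^+$ along the segment $(1-t)x_0 + ty$, and then expand $\|y-r\|^2$ around $x_0$; this yields the sharper inequality $\|x_0 - y\|^2 \le \|y-r\|^2 - \|x_0-r\|^2$, of which the lemma's statement is a weakening by a factor $2$. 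What each buys: the paper's argument is shorter and uses only the midpoint (no limit, no inner-product computation), at the cost of the slack constant; yours requires a small limiting argument but delivers the standard, tight projection inequality, which makes it transparent why the factor $2$ in the lemma is not essential. Both are valid proofs of the stated lemma, and both suffice for the downstream application in Lemma~\ref{lem:proximalinL2}.
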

\begin{proof}
Without loss of generality, assume that $r=0$. Let $y\in K$. Since $\frac{x_{0}+y}{2}\in K$, we have $\left\|\frac{x_{0}+y}{2}\right\|\geq\|x_{0}\|$, so
\[\left\|\frac{x_{0}-y}{2}\right\|^{2}+\|x_{0}\|^{2}\leq\left\|\frac{x_{0}-y}{2}\right\|^{2}+\left\|\frac{x_{0}+y}{2}\right\|^{2}=\frac{1}{2}(\|x_{0}\|^{2}+\|y\|^{2}).\]
Thus, $\|x_{0}-y\|^{2}\leq 2(\|y\|^{2}-\|x_{0}\|^{2})$.
\end{proof}

\begin{proof}[Proof of Lemma~\ref{lem:proximalinL2}]
If $d(K_{1},K_{2})\geq d(r,K_{1})+d(r,K_{2})$, then we are done, since
\begin{eqnarray*}
\|x_{r}(K_{1})-x_{r}(K_{2})\|&\leq&\|x_{r}(K_{1})-r\|+\|x_{r}(K_{2})-r\|\\&=&
d(r,K_{1})+d(r,K_{2})\leq\sqrt{(d(r,K_{1})+d(r,K_{2}))d(K_{1},K_{2})}.
\end{eqnarray*}
Thus, we may assume that $d(K_{1},K_{2})\leq d(r,K_{1})+d(r,K_{2})$. Without loss of generality, we may also assume that $d(r,K_{2})\leq d(r,K_{1})$. By Lemma \ref{1},
\[\|x_{r}(K_{1})-y\|^{2}\leq 2(\|y-r\|^{2}-d(r,K_{1})^{2}),\]
for all $y\in K_{1}$. Note that we can write $x_{r}(K_{2})=y+d(K_{1},K_{2})z$ for some $y\in K_{1}$ and $z\in X$ with $\|z\|\leq 1$. Since
\[\|y-r\|\leq\|x_{r}(K_{2})-r\|+d(K_{1},K_{2})=d(r,K_{2})+d(K_{1},K_{2}),\]
it follows that
\begin{align*}
&\|x_{r}(K_{1})-y\|^{2}\\\leq&
2[(d(r,K_{2})+d(K_{1},K_{2}))^{2}-d(r,K_{1})^{2}]\\=&
2[d(r,K_{2})+d(K_{1},K_{2})+d(r,K_{1})][d(r,K_{2})+d(K_{1},K_{2})-d(r,K_{1})]\\\leq&
2[d(r,K_{2})+d(K_{1},K_{2})+d(r,K_{1})]d(K_{1},K_{2})\\\leq&
4(d(r,K_{1})+d(r,K_{2}))d(K_{1},K_{2}),
\end{align*}
where the second inequality follows from the assumption that $d(r,K_{2})\leq d(r,K_{1})$ and the last inequality follows from the assumption that $d(K_{1},K_{2})\leq d(r,K_{1})+d(r,K_{2})$.
\end{proof}

\subsection{Restriction onto the cube}

Functions that comprise the solution space $H(u)$ may take negative values, hence not all of $H(u)$ consists of densities. So, our next goal is to restrict the affine space $H(u)$ to the positive orthant $[0,\infty)^m$ and check that sensitivity still holds. Our Algorithm~\ref{mainalgorithm1} makes a more aggressive restriction onto the cube $[2\d/m, (\Delta-\d)/m]^m$.  This is what we will analyze now.

\begin{lemma}[Restriction onto a cube]			\label{lem: restriction}
  Let $H_1$ and $H_2$ be a pair of parallel affine subspaces of $\R^m$ with equal dimensions.
  Assume that for some scalars $a<b$, we have
  $$
  H_i \cap [a,b]^m \ne \emptyset, \quad i=1,2.
  $$
  Fix any $\l>0$ and consider the cube $Q = [a-\l, b+\l]^m$.
  Then
  $$
  d_\infty \left( H_1 \cap Q,\, H_2 \cap Q \right)
  \le \Big( \frac{b-a}{\l} + 2 \Big) \; d_\infty \left( H_1, H_2 \right).
  $$
\end{lemma}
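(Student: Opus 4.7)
The plan is to interpolate, for each $x_1 \in H_1 \cap Q$, between a nearby point in $H_2$ (obtained from the optimal translation) and some ``anchor'' point $z_2 \in H_2 \cap [a,b]^m$ that is guaranteed by hypothesis. Because $H_1$ and $H_2$ are parallel affine subspaces of the same dimension, they are translates of one another, and the Hausdorff distance $\eta \coloneqq d_\infty(H_1,H_2)$ equals the length of a shortest translation vector $v^\ast$, i.e.\ $H_2=H_1+v^\ast$ with $\|v^\ast\|_\infty=\eta$. Fix $x_1 \in H_1\cap Q$, set $y \coloneqq x_1+v^\ast \in H_2$, and pick any $z_2 \in H_2\cap[a,b]^m$; the segment $w(t)=(1-t)y+tz_2$ lies in the affine subspace $H_2$ for every $t\in[0,1]$.

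First I would treat the easy regime $\eta>\lambda$: here we may simply take $w=z_2$, since
$$
\|x_1-z_2\|_\infty \le (b-a)+\lambda \le 2\eta + \tfrac{\eta(b-a)}{\lambda} = \bigl(\tfrac{b-a}{\lambda}+2\bigr)\eta,
$$
using $\eta/\lambda\ge 1$ on both pieces. For the main regime $\eta\le\lambda$, the key computation is to bound how little of $z_2$ one must mix in to pull $y$ back into the enlarged cube $Q=[a-\lambda,b+\lambda]^m$. For each coordinate $i$ with $y_i>b+\lambda$, one has $y_i-x_{1,i}=v^\ast_i\le\eta$ and $x_{1,i}\le b+\lambda$, so $y_i-b-\lambda\le\eta$; meanwhile $y_i-(z_2)_i>\lambda$ because $(z_2)_i\le b$. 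Hence the threshold $t$-value needed on coordinate $i$ is
$$
\frac{y_i-b-\lambda}{y_i-(z_2)_i}\le \frac{\eta}{\lambda},
$$
and the symmetric lower-boundary analysis gives the same bound. Coordinates with $y_i\in[a-\lambda,b+\lambda]$ need no correction, since then $(1-t)y_i+t(z_2)_i$ stays in $[a-\lambda,b+\lambda]$ for all $t\in[0,1]$. Thus $t^\ast\coloneqq\eta/\lambda\in[0,1]$ suffices to make $w(t^\ast)\in H_2\cap Q$.

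The remaining step is a triangle-inequality estimate: using $\|y-x_1\|_\infty=\eta$ and $\|z_2-x_1\|_\infty\le b-a+\lambda$,
$$
\|x_1-w(t^\ast)\|_\infty \le (1-t^\ast)\eta + t^\ast(b-a+\lambda)
= \eta - \tfrac{\eta^2}{\lambda}+\tfrac{\eta(b-a)}{\lambda}+\eta
\le \Bigl(\tfrac{b-a}{\lambda}+2\Bigr)\eta.
$$
This shows $\inf_{w\in H_2\cap Q}\|x_1-w\|_\infty\le((b-a)/\lambda+2)\eta$ for every $x_1\in H_1\cap Q$; swapping the roles of $H_1$ and $H_2$ yields the matching bound, completing the proof.

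I do not anticipate a serious obstacle: the content is purely a convexity/linear-algebra argument, and the main care is bookkeeping the constant so that the factor comes out to $(b-a)/\lambda+2$ rather than $(b-a)/\lambda+3$ (which is what a naive estimate of $\|y-z_2\|_\infty\le\eta+(b-a+\lambda)$ produces). The cleaner bound is obtained by estimating $\|x_1-w(t^\ast)\|_\infty$ directly as the convex combination $(1-t^\ast)\|y-x_1\|_\infty+t^\ast\|z_2-x_1\|_\infty$ rather than first going through $\|y-z_2\|_\infty$.
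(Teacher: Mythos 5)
Your proof is correct and takes a closely related but somewhat different route from the paper's. Both split off the easy regime $\eta>\lambda$ via the diameter of $Q$ and produce the desired point in $H_2\cap Q$ as a convex combination with interpolation weight $\eta/\lambda$, but the interpolation paths differ. You translate the given point $x_1$ into $H_2$ via the optimal shift $v^*$ and then slide along the segment toward an anchor $z_2\in H_2\cap[a,b]^m$ — a path that manifestly stays inside $H_2$ — so the only thing left to verify is that the threshold weight $t^*=\eta/\lambda$ pulls the point into $Q$, which you confirm coordinate-by-coordinate. The paper instead anchors at a point $x_1\in H_1\cap[a,b]^m$, fabricates an auxiliary point $y=x_1+\tfrac{\lambda}{\delta}(x_2-x_1)\in Q$ via a scaled step, and interpolates the given point $h_1\in H_1$ directly with $y$; there the nontrivial step is the algebraic check that this interpolant actually lands in $H_2$, using the translate structure. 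Your route makes the origin of the constant $(b-a)/\lambda+2$ a bit more transparent, at the cost of some coordinate-wise bookkeeping; the paper's is more compact. One small point worth spelling out in your write-up: the existence of an optimal translation $v^*$ with $\|v^*\|_\infty=d_\infty(H_1,H_2)$, which you use at the outset, follows because closed affine subspaces in finite dimensions attain their mutual distance, and any minimizing pair $(h_1^*,h_2^*)$ gives $H_2=H_1+(h_2^*-h_1^*)$.
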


\begin{proof}
Due to symmetry, it is enough to bound the quantity
$$
\sup_{h_1 \in H_1 \cap Q} \; \inf_{h_2 \in H_2 \cap Q} \; \norm{h_1-h_2}_\infty.
$$
So let us fix any $h_1 \in H_1 \cap Q$ and find $h_2 \in H_2 \cap Q$ for which $\norm{h_1-h_2}_\infty$ is small.
To this end, fix a vector
\begin{equation}	\label{eq: x1}
x_1 \in H_1 \cap [a,b]^m,
\end{equation}
which exists by assumption. Due to the definition of Hausdorff distance, we can find  $x_2 \in H_2$ such that
\begin{equation}	\label{eq: x1-x2}
\norm{x_2-x_1}_\infty \le d_\infty(H_1,H_2) \eqqcolon \d.
\end{equation}
Consider the vector
$$
y \coloneqq x_1 + \frac{\l}{\d}(x_2-x_1)
$$
and set $h_2$ to be the following convex combination of $h_1$ and $y$:
$$
h_2 \coloneqq \Big( 1-\frac{\d}{\l} \Big) h_1 + \frac{\d}{\l} y.
$$
(Here we assume that $\d\leq\l$. Otherwise, the result follows immediately, since the diameter of $Q$ in $L^\infty$-norm is $b-a+2\l$.) Figure~\ref{fig: restriction} might help to visualize our construction.

\begin{figure}[htp]
  \centering
  \includegraphics[width=.45\textwidth]{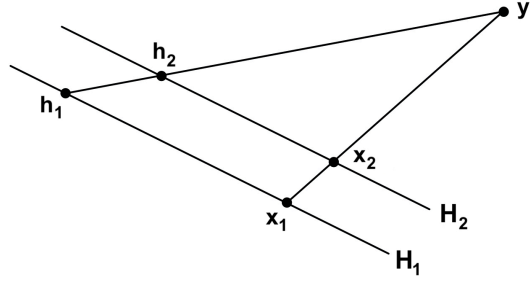}
  \caption{Construction in the proof of Lemma~\ref{lem: restriction}.}
  \label{fig: restriction}
\end{figure}

Let us check that the vector $h_2$ constructed this way satisfies all the required properties. First, we claim that
$$
y \in Q.
$$
Indeed, the definition of $y$ combined with \eqref{eq: x1} and \eqref{eq: x1-x2} yields
$$
y \in [a,b]^m + \frac{\l}{\d} [-\d,\d]^m
= [a-\l,b+\l]^m = Q.
$$
We claim that
$$
h_2 \in H_2.
$$
Indeed, substituting the definition of $y$ into the expression for $h_2$, we get
\begin{equation}	\label{eq: h2 equivalent}
h_2 = \Big( 1-\frac{\d}{\l} \Big) \Big( h_1-x_1 \Big) + x_2
\end{equation}
By the assumption, $H_1$ and $H_2$ are translates of the same linear subspace. This linear subspace can be expressed as $H_1-x_1$ or, equivalently, as $H_2-x_2$ since $x_1 \in H_1$ and $x_2 \in H_2$.
In particular,
we have $t(H_1-x_1) = H_2-x_2$ for any $t \in \R$, or equivalently
$H_2 = t(H_1-x_1) + x_2$.
Since $h_1 \in H_1$, it follows from \eqref{eq: h2 equivalent} that $h_2 \in H_2$ as claimed.

Next, since both $h_1$ and $y$ lie in $Q$, their convex combination must lie there, too, so
$$
h_2 \in Q.
$$

Finally, using the definition of $h_2$ and recalling that $h_1$ and $y$ lie in $Q$, we get
$$
h_1-h_2
= \frac{\d}{\l}(h_1-y)
\in \frac{\d}{\l}(Q-Q)
= \frac{\d}{\l} \left[ -(b-a+2\l), b-a+2\l \right]^m.
$$
Thus
$$
\norm{h_1-h_2}_\infty
\le \frac{\d}{\l} (b-a+2\l)
= \Big( \frac{b-a}{\l} + 2 \Big) \; \d.
$$
The proof if complete.
\end{proof}

\subsection{$L^\infty$ sensitivity of the selection rule}

We are ready to analyze the sensitivity of the $L^2$-proximal selection rule:

\begin{lemma}[$L^{\infty}$ sensitivity of the selection rule]			\label{lem:Linftysensitivity}
  Let $0< a < c < (a+b)/2$.
  Let $H_1$ and $H_2$ be a pair of parallel affine subspaces of $\R^m$ with equal dimensions. Assume that
  $$
  H_i \cap [a,b]^m \ne \emptyset, \quad i=1,2.
  $$
  Let
  $$
  h_i = \argmin \left\{ \norm{x - c \cdot \one_m}_2 :\; x \in H_i \cap [a-\l,b+\l]^m \right\}, \quad i=1,2.
  $$
  Then
  $$
  \norm{h_1-h_2}_\infty^2 \le 4m(b-c) \Big( \frac{b-a}{\l} + 2 \Big) \; d_\infty \left( H_1, H_2 \right).
  $$
  \end{lemma}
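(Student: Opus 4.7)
The plan is to combine three ingredients already available: the Hilbert-space proximal-point stability bound from Lemma~\ref{lem:proximalinL2}, the restriction stability bound from Lemma~\ref{lem: restriction}, and the standard comparisons $\|\cdot\|_\infty \le \|\cdot\|_2 \le \sqrt{m}\,\|\cdot\|_\infty$ on $\R^m$. The selection rule in the hypothesis is an $L^2$-proximal map, so one should work in the Hilbert space $(\R^m,\|\cdot\|_2)$ throughout and only convert to $L^\infty$ at the very end.

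Let $r = c\cdot\one_m$ and $K_i = H_i\cap[a-\lambda,b+\lambda]^m$, so that $h_i=\argmin\{\|x-r\|_2:x\in K_i\}$. Each $K_i$ is the intersection of an affine subspace with a cube, hence closed and convex; it is nonempty because $H_i\cap[a,b]^m\subset K_i$ is nonempty by hypothesis. Lemma~\ref{lem:proximalinL2} then yields
\[
\|h_1-h_2\|_2^2 \le 4\max\bigl(\dist_2(r,K_1),\dist_2(r,K_2)\bigr)\cdot d_2(K_1,K_2).
\]

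Next I would bound the two factors. For the distance to $r$, pick any $x_i\in H_i\cap[a,b]^m\subset K_i$; the assumption $a<c<(a+b)/2$ forces $b-c>c-a$, so each coordinate of $x_i-r$ lies in $[a-c,b-c]\subset[-(b-c),b-c]$, which gives $\|x_i-r\|_\infty\le b-c$ and hence $\dist_2(r,K_i)\le\sqrt{m}(b-c)$. For the Hausdorff distance between the restricted sets, apply Lemma~\ref{lem: restriction} (whose hypothesis $H_i\cap[a,b]^m\neq\emptyset$ is exactly what we have) to get
\[
d_\infty(K_1,K_2)\le\Bigl(\frac{b-a}{\lambda}+2\Bigr)d_\infty(H_1,H_2),
\]
and then pass to $L^2$ via $d_2\le\sqrt{m}\,d_\infty$.

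Substituting into the proximal bound,
\[
\|h_1-h_2\|_2^2 \le 4\cdot\sqrt{m}(b-c)\cdot\sqrt{m}\Bigl(\frac{b-a}{\lambda}+2\Bigr)d_\infty(H_1,H_2)=4m(b-c)\Bigl(\frac{b-a}{\lambda}+2\Bigr)d_\infty(H_1,H_2),
\]
and the conclusion follows from $\|h_1-h_2\|_\infty\le\|h_1-h_2\|_2$. There is no real obstacle here; the only points requiring a little care are verifying the hypotheses of the two lemmas (nonemptiness and convexity of $K_i$; the nonempty-intersection condition for Lemma~\ref{lem: restriction}) and tracking the two $\sqrt{m}$ factors — one from $\dist_2(r,K_i)$ and one from converting $d_\infty(K_1,K_2)$ to $d_2(K_1,K_2)$ — which together produce the single factor of $m$ in the final bound. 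The role of the hypothesis $c<(a+b)/2$ is specifically to make $b-c$ (rather than $\max(b-c,c-a)$) the correct sup-norm radius, matching the bound in the statement.
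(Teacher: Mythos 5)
Your proof is correct and follows essentially the same route as the paper: apply Lemma~\ref{lem: restriction} to the restricted sets $K_i=H_i\cap[a-\lambda,b+\lambda]^m$, apply Lemma~\ref{lem:proximalinL2} to the proximal points, bound $\dist(r,K_i)$ by $b-c$ up to normalization, and convert between $\ell^\infty$ and the Hilbert norm at the end. The only cosmetic difference is that you work with the unnormalized Euclidean norm (so the factor $m$ arrives as two $\sqrt m$'s, one from $\dist_2(r,K_i)$ and one from $d_2\le\sqrt m\,d_\infty$), whereas the paper uses the normalized $L^2(S)$ norm (so both those steps are lossless and the single factor $m$ comes from $\|\cdot\|_\infty^2\le m\|\cdot\|_{L^2}^2$ at the final step); the bookkeeping differs but the argument is the same.
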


\begin{proof}
Lemma~\ref{lem: restriction} gives
\begin{equation}	\label{eq: dinfty K1 K2}
d_\infty(K_1,K_2)
\le \Big( \frac{b-a}{\l} + 2 \Big) \; d_\infty \left( H_1, H_2 \right)
\end{equation}
where $K_i = H_i \cap [a-\lambda,b+\lambda]^m$.
Let us apply Lemma~\ref{lem:proximalinL2} for $r = c \cdot \one_m$ and the $L^2$ norm on $\R^m$. Note that
$$
\dist_{L^2}(r,K_i)
\le \max_{h \in [a,b]^m} \norm{r-h}_{L^2}
\le \max_{h \in [a,b]^m} \norm{r-h}_\infty
= \max \left\{ \abs{a-c}, \abs{c-b} \right\} = b-c.
$$
Thus, Lemma~\ref{lem:proximalinL2} yields
$$
\norm{h_1-h_2}_{L^2}^2
\le 4 (b-c) \cdot d_{L^2}(K_1,K_2)
\le 4 (b-c) \cdot d_\infty(K_1,K_2).
$$
To complete the proof, use \eqref{eq: dinfty K1 K2} and note that
$\norm{h_1-h_2}_\infty^2 \le m \norm{h_1-h_2}_{L^2}^2$.
\end{proof}

\section{Shrinkage}

Another step of Algorithm~\ref{mainalgorithm1} we need to control is shrinkage.
We will check here that shrinkage onto a cube is Lipschitz in the $L^\infty$-Hausdorff metric. Let us start with a general observation:

\begin{lemma}[Shrinkage]				\label{lem: shrinkage}
  Let $X$ be a normed space and $z \in X$ be a point such that
  $\norm{z} \le 1-\b$ for some $\beta \in (0,1)$.
  Let $r:X \to X$ be the retraction map onto the unit ball of $X$ toward $z$, i.e.
  $$
  r(x) = (1-\l) x + \l z
  $$
  where $\l = \l(x)$ is the minimal number in $[0,1]$ such that $\norm{r(x)} \le 1$.
  Then the Lipschitz norm of the map $\l(\cdot)$ is at most $1/\b$,
  and the Lipschitz norm of the map $r(\cdot)$ is at most $2/\b$.
\end{lemma}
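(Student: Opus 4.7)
The plan is to reduce both Lipschitz estimates to a one-dimensional convexity argument applied to the scalar function $\phi_i(\lambda) := \|(1-\lambda) x_i + \lambda z\|$ on $[0,1]$, which is convex and continuous, with $\phi_i(1) = \|z\| \le 1-\beta$. The value $\lambda_i := \lambda(x_i)$ is the smallest $\lambda \in [0,1]$ with $\phi_i(\lambda) \le 1$; since $\phi_i(1) \le 1-\beta < 1$ and $\phi_i$ is continuous, we have $\lambda_i < 1$, and by minimality $\phi_i(\lambda_i) \le 1$, with equality whenever $\lambda_i > 0$.

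The first step is a quantitative decay estimate: for any $\mu \in [\lambda_i, 1]$, convexity of $\phi_i$ combined with $\phi_i(\lambda_i) \le 1$ and $\phi_i(1) \le 1-\beta$ yields
\[ \phi_i(\mu) \;\le\; 1 - \frac{\beta(\mu - \lambda_i)}{1-\lambda_i}. \]
Coupling this with the triangle inequality $\phi_2(\mu) \le \phi_1(\mu) + (1-\mu)\|x_1 - x_2\|$ gives an explicit upper bound for $\phi_2(\mu)$ valid for $\mu \ge \lambda_1$. Assuming WLOG $\lambda_1 \le \lambda_2$, I will solve the inequality $\mathrm{RHS} \le 1$ for the smallest admissible $\mu$: such $\mu$ satisfies $\phi_2(\mu) \le 1$ and hence bounds $\lambda_2$ from above. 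A direct calculation (essentially solving a linear inequality in $\mu$) produces the refined divided-difference bound
\[ \frac{\lambda_2 - \lambda_1}{1 - \lambda_2} \;\le\; \frac{(1-\lambda_1)\,\|x_1 - x_2\|}{\beta}, \]
which since $1-\lambda_1 \le 1$ and $1-\lambda_2 \le 1$ implies $|\lambda_1 - \lambda_2| \le \|x_1 - x_2\|/\beta$ (together with the symmetric case), proving part~(1).

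For part~(2), still assuming $\lambda_1 \le \lambda_2$, I will use the algebraic identity
\[ r(x_1) - r(x_2) \;=\; (1-\lambda_2)(x_1 - x_2) + (\lambda_1 - \lambda_2)(z - x_1). \]
The first term has norm at most $\|x_1 - x_2\|$. The goal is to show the second has norm at most $\frac{2-\beta}{\beta}\|x_1 - x_2\|$, so that the total is $(1 + (2-\beta)/\beta)\|x_1 - x_2\| = (2/\beta)\|x_1 - x_2\|$. I split on whether $\lambda_1 = 0$. If $\lambda_1 = 0$, then $\|x_1\| \le 1$, so $\|z - x_1\| \le 2-\beta$ and part~(1) finishes. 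If $\lambda_1 > 0$, then $\|r(x_1)\| = 1$ and the identity $x_1 = (r(x_1) - \lambda_1 z)/(1-\lambda_1)$ gives $z - x_1 = (z - r(x_1))/(1-\lambda_1)$, hence $\|z - x_1\| \le (2-\beta)/(1-\lambda_1)$; now the refined ratio bound above, rearranged, yields $(\lambda_2 - \lambda_1)/(1-\lambda_1) \le \|x_1 - x_2\|/\beta$, and multiplying gives the target.

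The main obstacle is precisely the regime where both $\|x_1\|$ and $\|x_2\|$ are large, so $\lambda_1, \lambda_2$ sit close to $1$ and $\|z - x_i\|$ is unbounded. Here the naive Lipschitz bound $|\lambda_1 - \lambda_2| \le \|x_1 - x_2\|/\beta$ paired with any generic estimate on $\|z - x_i\|$ fails to control the product $|\lambda_1 - \lambda_2|\cdot\|z - x_1\|$. The saving grace is the refined ratio bound with denominator $1-\lambda_1$, which exactly cancels the $1/(1-\lambda_1)$ blow-up in $\|z - x_1\|$; crucially, this refined bound drops out of the convexity decay estimate essentially for free, because the natural object produced by convex interpolation between $\lambda_i$ and $1$ is a divided difference, not an absolute difference.
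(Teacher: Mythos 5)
Your proof is correct, and it takes a route that is close in spirit but noticeably cleaner in the second half than the paper's. For the Lipschitz bound on $\lambda$, the paper also exploits convexity of $\mu \mapsto \|(1-\mu)x_i + \mu z\|$, though implicitly: it writes the candidate point $A=(1-a)[(1-\lambda_2)x_2+\lambda_2 z]+az$ with $a=\mu/(1-\lambda_2)$ and bounds $\|A\|\le 1-a\beta$, which is precisely the convexity decay estimate you isolate as a named step and then minimize over $\mu$. So part (1) is essentially the same argument, with yours phrased as ``solve the linear inequality for $\mu^*$'' rather than ``plug in $\mu=\|x_1-x_2\|/\beta$ and check.''

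For the Lipschitz bound on $r$, the two proofs genuinely diverge. The paper uses the split $r(x_1)-r(x_2)=[\mu_1 x_1-\mu_2 x_2]+[\lambda_1-\lambda_2]z$ (with $\mu_i=1-\lambda_i$), which forces a case analysis on the norms of the $x_i$: it first proves the bound assuming $\|x_1\|\le 1$, and then handles $\|x_1\|,\|x_2\|>1$ by passing to $\tilde x_i=(1-\lambda_1)x_i+\lambda_1 z$ and reusing the first case. You instead use the decomposition $r(x_1)-r(x_2)=(1-\lambda_2)(x_1-x_2)+(\lambda_1-\lambda_2)(z-x_1)$ together with the sharper divided-difference estimate $(\lambda_2-\lambda_1)/(1-\lambda_1)\le \|x_1-x_2\|/\beta$, which falls out for free from the same convexity computation and exactly cancels the $1/(1-\lambda_1)$ growth in $\|z-x_1\|=\|z-r(x_1)\|/(1-\lambda_1)$. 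This removes the paper's reduction step and collapses the case analysis to the trivial split $\lambda_1=0$ versus $\lambda_1>0$, at the cost of a slightly more careful bookkeeping in part (1). Both arguments yield the same constant $2/\beta$.
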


\begin{proof}
Fix any pair of vectors $x_1, x_2 \in X$ and denote
$$
\l_1 = \l(x_1), \quad
\l_2=\l(x_2), \quad
\mu=\norm{x_1-x_2}/\b.
$$
The claim about the Lipschitz norm of $\l(\cdot)$ can be stated as
$\abs{\l_1-\l_2} \le \mu$.
By symmetry, it suffices to show that
\begin{equation}	\label{eq: l1 l2 mu}
\l_1 \le \l_2+\mu.
\end{equation}
This bound is trivial if $\l_2+\mu > 1$ since we always have $\l_1 \le 1$.
So we can assume from now on that $\l_2+\mu \in [0,1]$.

Due to the minimality property in the definition of $\l_1 = \l(x_1)$, in order
to prove \eqref{eq: l1 l2 mu} it suffices to show that
\begin{equation}	\label{eq: stricter convex combination}
\norm{(1-\l_2-\mu)x_1+(\l_2+\mu)z} \le 1.
\end{equation}
By triangle inequality, the left hand side is bounded by $\norm{A} + \norm{B}$
where
$$
A = (1-\l_2-\mu)x_2+(\l_2+\mu)z, \quad
B = (1-\l_2-\mu)(x_1-x_2).
$$

Rearranging the terms, we can rewrite
$$
A = (1-a) \left[ (1-\l_2)x_2+\l_2 z \right] + az
\quad \text{where} \quad
a = \frac{\mu}{1-\l_2}.
$$
By assumption, $a \in [0,1]$. Then $A$ is a convex combination of the vector
$(1-\l_2)x_2+\l_2 z$ whose norm is bounded by $1$ by definition of $\l_2=\l(x_2)$
and the vector $z$ whose norm is bounded by $1-\b$ by assumption.
Hence, by triangle inequality and definition of $a$ and $\mu$, we have
$$
\norm{A} \le (1-a) \cdot 1 + a \cdot (1-\b)
= 1-a\b
\le 1 - \mu \b
= 1 - \norm{x_1-x_2}.
$$
Furthermore, the assumption $1-\l_2-\mu \in [0,1]$ yields
$$
\norm{B} \le \norm{x_1-x_2}.
$$
Hence we showed that $\norm{A} + \norm{B} \le 1$,
establishing \eqref{eq: stricter convex combination} and completing
the first part of the proof (about the Lipschitz norm of $\l$).

To prove the second part of the lemma, we need to show that
\begin{equation}	\label{eq: claim}
\norm{r(x_1)-r(x_2)} \le (2/\b) \norm{x_1-x_2}.
\end{equation}

Let us first prove this inequality assuming that $\norm{x_1} \le 1$ or $\norm{x_2} \le 1$.
Without loss of generality, assume $\norm{x_1} \le 1$.
Denoting $\mu_1=1-\l_1$ and $\mu_2=1-\l_2$ and using triangle inequality, we obtain
\begin{equation}	\label{eq: r1 r2}
\norm{r(x_1)-r(x_2)}
= \norm{\mu_1 x_1 + \l_1 z - \mu_2 x_2 - \l_2 z}
\le \norm{\mu_1 x_1 - \mu_2 x_2} + \abs{\l_1-\l_2} \norm{z}
\end{equation}
By the first part of the lemma and since $\norm{z} \le 1-\b$, we have
\begin{equation}	\label{eq: l1 l2 z}
\abs{\l_1-\l_2} \norm{z} \le \frac{1}{\b} \norm{x_1-x_2} (1-\b) = (1/\b-1) \norm{x_1-x_2}.
\end{equation}
Furthermore, adding and subtracting the cross term $\mu_2 x_1$ and using triangle inequality, we get
$$
\norm{\mu_1 x_1 - \mu_2 x_2}
\le \abs{\mu_1-\mu_2} \, \norm{x_1} + \mu_2 \, \norm{x_1-x_2}.
$$
Now, $\abs{\mu_1-\mu_2} = \abs{\l_1-\l_2} \le \norm{x_1-x_2}/\b$ by the first part of the lemma; $\norm{x_1} \le 1$ by the standing assumption, and $\mu_2 \le 1$. Hence
\begin{equation}	\label{eq: m1x1 m2x2}
\norm{\mu_1 x_1 - \mu_2 x_2}
\le (1/\b+1) \norm{x_1-x_2}.
\end{equation}
Substitute \eqref{eq: l1 l2 z} and \eqref{eq: m1x1 m2x2} into \eqref{eq: r1 r2}, we
conclude the claim \eqref{eq: claim}.

Finally, consider the remaining case where both $\norm{x_1} \ge 1$ and $\norm{x_2} \ge 1$. Without loss of generality, $\l_1 \le \l_2$, so the vectors
$$
\tilde{x}_1 \coloneqq (1-\l_1)x_1 + \l_1 z
\quad \text{and} \quad
\tilde{x}_2 \coloneqq (1-\l_1)x_2 + \l_1 z
$$
satisfy
$$
\norm{\tilde{x}_1} = 1
\quad \text{and} \quad
\norm{\tilde{x}_2} \ge 1.
$$
Definition of retraction yields $r(\tilde{x}_1) = r(x_1)$ and $r(\tilde{x}_2) = r(x_2)$.
Thus, applying \eqref{eq: claim} for $\tilde{x}_1$ and $\tilde{x}_2$, we get
\begin{equation*}
\begin{split}
\norm{r(x_1)-r(x_2)}
= \norm{r(\tilde{x}_1)-r(\tilde{x}_2)}
\le (2/\b) \norm{\tilde{x}_1-\tilde{x}_2} \\
= (2/\b) (1-\l_1) \norm{x_1-x_2}
\le (2/\b) \norm{x_1-x_2}.
\end{split}
\end{equation*}
The lemma is proved.
\end{proof}

Now we extend our analysis of shrinkage for affine subspaces:

\begin{lemma}[Shrinkage for subspaces]		\label{lem: shrinkage for subspaces}
  Let $K$ be the unit ball of a finite dimensional normed space $X$.
  Let $z, z_0 \in X$ be points such that
  $z \in z_0 + (1-\b)K$ for some $\beta \in (0,1)$.
  Given an affine subspace $H$ in $X$, define the affine subspace $\tilde{H}$
  by moving $H$ toward $z$ until it intersects the ball $z_0+K$, i.e.
  $$
  \tilde{H} = (1-\l) H + \l z
  $$
  where $\l = \l(H)$ is the minimal number in $[0,1]$ such that
  $\tilde{H} \cap (z_0+K) \ne \emptyset$.
  Then for any two affine subspaces $H_1$ and $H_2$ that are translates of each other,
  the Hausdorff distance satisfies
  $$
  d_X(\tilde{H}_1, \tilde{H}_2) \le \frac{2}{\beta} \, d_X(H_1, H_2).
  $$
\end{lemma}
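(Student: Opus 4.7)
The plan is to reduce this subspace-level statement to the point-level Lemma~\ref{lem: shrinkage} by passing to the quotient $X/L$, where $L$ is the common direction of $H_1$ and $H_2$. First I would translate so that $z_0=0$, so that $K$ is the unit ball of $X$ and $\norm{z}\le 1-\beta$. Write $H_i = h_i^0 + L$ for some $h_i^0 \in X$, and let $q:X\to X/L$ be the quotient map equipped with the quotient norm $\norm{[x]}_{X/L}=\inf_{\ell\in L}\norm{x+\ell}_X$. The standard fact that the unit ball of $X/L$ is exactly $q(K)$ gives $\norm{q(z)}_{X/L} \le \norm{z}_X \le 1-\beta$, so the hypothesis of Lemma~\ref{lem: shrinkage} is satisfied in $X/L$ with the same $\beta$.

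Since every translate of $L$ maps to a single point under $q$, for any such $H$ the set $(1-\lambda)H+\lambda z$ is itself a translate of $L$ whose image in $X/L$ is the point $(1-\lambda)q(H)+\lambda q(z)$. Moreover, $(1-\lambda)H+\lambda z$ meets $K$ if and only if this single image point lies in $q(K) = B_{X/L}$, i.e.\ iff $\norm{(1-\lambda)q(H)+\lambda q(z)}_{X/L} \le 1$. It follows that the subspace-level parameter $\lambda(H)$ coincides with the point-level retraction parameter for $q(H)$ in $X/L$, and $q(\tilde{H}) = r_{X/L}(q(H))$, where $r_{X/L}$ denotes the point retraction in $X/L$ toward $q(z)$.

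Finally I would use that for any two parallel translates $A = a+L$ and $B = b+L$, the Hausdorff distance satisfies $d_X(A,B) = \norm{q(A)-q(B)}_{X/L}$ directly from the definition of the quotient norm. Applying this to both $(H_1,H_2)$ and $(\tilde{H}_1,\tilde{H}_2)$ and invoking Lemma~\ref{lem: shrinkage} in the finite-dimensional normed space $X/L$ gives
\[
d_X(\tilde{H}_1,\tilde{H}_2)= \norm{r_{X/L}(q(H_1))-r_{X/L}(q(H_2))}_{X/L} \le \frac{2}{\beta}\,\norm{q(H_1)-q(H_2)}_{X/L} = \frac{2}{\beta}\,d_X(H_1,H_2).
\]
The only substantive point to verify is the equivalence between subspace-level intersection with $z_0+K$ in $X$ and point-level membership in $q(K)$ in $X/L$; this rests on the observation that $K+L$ contains a translate $a+L$ exactly when $q(a)\in q(K)$. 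Once this lifting is in hand, the lemma reduces to a one-line application of the point-level Lipschitz bound.
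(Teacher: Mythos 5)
Your proof is correct and follows essentially the same route as the paper: translate so that $z_0=0$, pass to the quotient $X/L$ (the paper writes $X/H_0$), verify that the image of $z$ lies in $(1-\beta)$ times the quotient unit ball, identify the subspace-level shrinkage with the point-level retraction $r$ of Lemma~\ref{lem: shrinkage} applied in the quotient, and transfer the Lipschitz bound back via the identity $d_X(A,B)=\norm{q(A)-q(B)}_{X/L}$ for parallel translates. The paper phrases the quotient reduction in terms of cosets ($H_z=z+H_0$) rather than an explicit quotient map $q$, but the argument is the same.
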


\begin{proof}
By translation, we can assume without loss of generality that $z_0=0$.
The affine subspaces $H_1$ and $H_2$ are translates of some common linear subspace $H_0$. Apply Lemma~\ref{lem: shrinkage} for the quotient space $X/H_0$ instead of $X$ and for $H_z \coloneqq z+H_0$ instead of $z$.

The requirement of that lemma is satisfied since
\begin{equation}	\label{eq: quotient small}
\norm{H_z}_{X/H_0}
= \inf_{h \in H_z} \norm{h}_X
\le \norm{z}_X
\le 1-\b.
\end{equation}
Indeed, the equality here is the definition of the norm in the quotient space, the first inequality holds since $z \in H_z$, and the last inequality is an equivalent form of the assumption $z \in (1-\b)K$.

We claim that the retraction map $r(\cdot)$ in Lemma~\ref{lem: shrinkage} satisfies
$$
r(H) = \tilde{H}
\quad \text{for any translate $H$ of $H_0$}.
$$
Indeed, by definition we have
$$
r(H) = (1-\l)H + \l H_z
$$
where $\l$ is the minimal number in $[0,1]$ such that
$\norm{r(H)}_{X/H_0} \le 1$. Since $\norm{H_z}_{X/H_0} < 1$ by \eqref{eq: quotient small}, continuity shows that $\l<1$ and hence
$$
r(H) = (1-\l)H + \l z.
$$
Moreover, the condition that $\norm{r(H)}_{X/H_0} \le 1$ is equivalent to
$r(H) \cap K \ne \emptyset$. Hence the definitions of $r(H)$ and $\tilde{H}$ are equivalent as we claimed.

Lemma~\ref{lem: shrinkage} yields
$$
\norm[1]{\tilde{H}_1-\tilde{H}_2}_{X/H_0} \le \frac{2}{\beta} \, \norm{H_1-H_2}_{X/H_0}.
$$
It remains to note that, by definition,
$$
\norm{H_1-H_2}_{X/H_0}
= \inf_{h_1 \in H_1, \, h_2 \in H_2} \norm{h_1-h_2}_X
= d_X(H_1, H_2),
$$
and similarly for the distance between $\tilde{H}_1$ and $\tilde{H}_2$.
The proof is complete.
\end{proof}

Finally, we specialize our analysis to the shrinkage onto the cube:

\begin{lemma}[Shrinkage onto a cube]			\label{lem: shrinkage onto cube}
  Let $0< a < c < (a+b)/2$.
  Given an affine subspace $H$ in $\R^m$, define the affine subspace $\tilde{H}$
  by moving $H$ toward $d \one_m$ until it intersects the cube $[a,b]^m$, i.e.
  $$
  \tilde{H} = (1-\l) H + \l \cdot c \one_m
  $$
  where $\l = \l(H)$ is the minimal number in $[0,1]$ such that
  $\tilde{H} \cap [a,b]^m \ne \emptyset$.
  Then for any two affine subspaces $H_1$ and $H_2$ that are translates of each other,
  the Hausdorff distance in the $L^\infty$ norm satisfies
  $$
  d_\infty(\tilde{H}_1, \tilde{H}_2) \le \frac{b-a}{c-a} \, d_\infty(H_1, H_2).
  $$
\end{lemma}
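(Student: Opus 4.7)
The plan is to derive this lemma as a specialization of Lemma~\ref{lem: shrinkage for subspaces} by a suitable choice of norm, unit ball, center, and reference point on $\R^m$. The cube $[a,b]^m$ is the translate of a scaled $\ell^\infty$ ball, so it can be realized as $z_0+K$ for a normed space whose unit ball is a cube centered at the origin.

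Concretely, I would equip $\R^m$ with the norm $\|x\|_X = \tfrac{2}{b-a}\|x\|_\infty$, so that its unit ball is $K = \bigl[-\tfrac{b-a}{2},\tfrac{b-a}{2}\bigr]^m$. Choose $z_0 = \tfrac{a+b}{2}\one_m$; then $z_0+K = [a,b]^m$, which is exactly the target region in the statement. Take the reference point $z = c\,\one_m$. Using $a < c < (a+b)/2$, I would compute
\[
\|z-z_0\|_X \;=\; \tfrac{2}{b-a}\left|c-\tfrac{a+b}{2}\right| \;=\; \tfrac{a+b-2c}{b-a},
\]
which is strictly less than $1$. Hence $z \in z_0+(1-\beta)K$ with
\[
\beta \;=\; 1-\tfrac{a+b-2c}{b-a} \;=\; \tfrac{2(c-a)}{b-a} \;\in\;(0,1),
\]
verifying the hypothesis of Lemma~\ref{lem: shrinkage for subspaces}.

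Under this identification, moving $H$ toward $z=c\one_m$ until it meets the cube $[a,b]^m = z_0+K$ is precisely the retraction described in Lemma~\ref{lem: shrinkage for subspaces}, so $\tilde H$ coincides in both lemmas. Applying that lemma to $H_1$ and $H_2$ yields
\[
d_X(\tilde H_1,\tilde H_2) \;\le\; \tfrac{2}{\beta}\, d_X(H_1,H_2) \;=\; \tfrac{b-a}{c-a}\, d_X(H_1,H_2).
\]
Because $\|\cdot\|_X$ is a positive multiple of $\|\cdot\|_\infty$, the Hausdorff distance on sets rescales by the same factor, and that factor cancels on both sides. This gives the desired bound $d_\infty(\tilde H_1,\tilde H_2) \le \tfrac{b-a}{c-a}\, d_\infty(H_1,H_2)$.

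There is no real obstacle here beyond bookkeeping: the only thing to check carefully is that the hypotheses $z \in z_0+(1-\beta)K$ and $\beta\in(0,1)$ translate cleanly to the two conditions $a < c$ and $c < (a+b)/2$ in the lemma statement, and that the $X$-norm and $\ell^\infty$-norm differ only by a scalar so the resulting inequality is norm-free. I would close the proof with a one-line remark noting that any $c$ in the relevant range produces a valid $\beta$, and that the extreme case $c\to a$ recovers the trivial blow-up of the constant, consistent with the fact that shrinkage toward the boundary of $[a,b]^m$ becomes less stable.
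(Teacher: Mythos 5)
Your proof is correct and is essentially identical to the paper's: both instantiate Lemma~\ref{lem: shrinkage for subspaces} with $K=\bigl[-\tfrac{b-a}{2},\tfrac{b-a}{2}\bigr]^m$, $z_0=\tfrac{a+b}{2}\one_m$, $z=c\,\one_m$, compute $\beta=\tfrac{2(c-a)}{b-a}$, and then observe that the $X$-norm is a scalar multiple of the $L^\infty$-norm so the Hausdorff bound transfers. No meaningful difference in route.
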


\begin{proof}
Apply Lemma~\ref{lem: shrinkage for subspaces} for
$$
z = c \one_m, \quad
z_0 = \frac{a+b}{2} \, \one_m, \quad
K = \Big[ -\frac{b-a}{2}, \, \frac{b-a}{2} \Big]^m.
$$
so that $z_0$ is the center of the cube $[a,b]^m$, $K$ is the centered cube, and
$z_0+K = [a,b]^m$.

Now,
$$
z-z_0 = \Big( c - \frac{a+b}{2} \Big) \, \one_m
$$
and
$$
0 \le \frac{a+b}{2} - c = (1-\b) \frac{b-a}{2}
\quad \text{for } \beta = \frac{2(c-a)}{b-a},
$$
so
$z - z_0 \in (1-\b) K$
as required in Lemma~\ref{lem: shrinkage for subspaces}.
The conclusion of this lemma is that
$$
d_X(\tilde{H}_1, \tilde{H}_2) \le \frac{2}{\beta} \, d_X(H_1, H_2).
$$
Since the unit ball $K$ of $X$ is the cube $[-1,1]^m$
scaled by the factor $(b-a)/2$, the norm in $X$
is the $L^\infty$-norm scaled by that factor. Therefore, the conclusion
holds for the $L^\infty$ norm as well.
\end{proof}

\section{Privacy and accuracy of the algorithm}

We are ready to analyze the privacy and accuracy of Algorithm~\ref{mainalgorithm1}.

\subsection{Algorithm}

For convenience we  rewrite Algorithm~\ref{mainalgorithm1}, see Algorithm~\ref{mainalgorithm2} below.
Also note that in Step 5 of Algorithm~\ref{mainalgorithm2}, the $L^2(S)$-norm is defined as $\norm{h}_{L^2(S)}^2 = \frac{1}{m} \sum_{i=1}^m h(s_i)^2$.

\begin{algorithm}[h!]
\caption{Private sampling synthetic data algorithm}
\label{mainalgorithm2}
\begin{algorithmic}

\State {\bf Input: } a sequence $X$ of $n$ points in  $\{-1,1\}^p$ (true data); $m$:  cardinality of $S$;
$d$: the degree of the marginals to be matched; parameters $\delta, \Delta$ with $\Delta>  \delta > 0$.

\begin{enumerate}[1.]

\item Draw a sequence $S = (\theta_1,\ldots,\theta_m)$ of $m$ points in the cube independently and uniformly (reduced space).

\item Form the $m \times \binom{p}{\le d}$ matrix $M$ with entries $w_J(\theta_i)$,
i.e. the matrix whose rows are indexed by the points of the reduced space $S$
and whose columns are indexed by the Walsh functions of degree at most $d$.
If the smallest singular value of $M$ is bounded below by $\sqrt{m}/2e^d$, call $S$ well conditioned and proceed. Otherwise return ``Failure'' and stop.

\item Let $f_n$ be the uniform density on true data: $f_n = \frac{1}{n} \sum_{i=1}^n \one_{x_i}$.
Consider the solution space
$$
H = H(f_n) = \left\{ h:\{-1,1\}^p \to \R:\; \supp(h) \subset S, \; h^{\le d} = (f_n)^{\le d} \right\},
$$

\item Shrink $H$ toward the uniform density $u_m = \frac{1}{m} \sum_{i=1}^m \one_{s_i}$ on $S$: let
$$
\tilde{H} = (1-\l)H + \l u_m
$$
where $\l \in [0,1]$ is the minimal number such that $\tilde{H} \cap [2\d/m, (\Delta-\d)/m]^S \ne \emptyset$.

\item Pick a proximal point

$$
h^* = \argmin \left\{ \norm[1]{\tilde{h}-u_m}_{L^2(S)}
	:\; \tilde{h} \in \tilde{H} \cap [\d/m, \Delta/m]^S \right\}.
$$

\end{enumerate}

\State {\bf Output:} a sequence $Y = (y_1,\ldots,y_k)$ of $k$ independent points
drawn from $S$ according to density $h^*$.

\end{algorithmic}
\end{algorithm}

\medskip

The standing assumption in this section is that the reduced space $S = (s_1,\ldots,s_m)$ is random, and consists of points $s_i$ drawn independently and uniformly from the cube. We would like to show that with high probability over $S$, the algorithm is differentially private.

\subsection{Sensitivity of density}

The privacy guarantee will be achieved via Private Sampling Lemma~\ref{lem: private sampling}. To apply it, we need to bound the sensitivity of the density $h^*$ computed by the algorithm.

\begin{lemma}			\label{lem: sensitivity of density}
  Suppose the reduced space $S$ is well conditioned.
  Then, for any pair of input datasets $X_1$ and $X_2$ that consist
  of at least $n$ elements each and differ from each other by a single element,
  the densities $h^*_1$ and $h^*_2$ computed by the algorithm satisfy
  $$
  \norm{h^*_1 - h^*_2}_\infty
  \le \frac{4\sqrt{2} \Delta^{3/2} e^{d/2}}{\sqrt{\d n} \, m^{1/4}} \binom{p}{\le d}^{1/4}.
  $$
\end{lemma}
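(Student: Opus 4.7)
The plan is to chain four results established earlier, each controlling one step of Algorithm~\ref{mainalgorithm2} in terms of the previous: the single-point perturbation bound \eqref{eq: fn+1 fn}, the Sensitivity of the Solution Space (Lemma~\ref{lem: sensitivity of solution space}), the Shrinkage Lipschitz estimate (Lemma~\ref{lem: shrinkage onto cube}), and the $L^\infty$ sensitivity of the proximal selection rule (Lemma~\ref{lem:Linftysensitivity}). Composing them and tracking the constants produces exactly the stated bound.

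Concretely, let $H_1, H_2$ denote the solution spaces associated with $X_1, X_2$ and $\tilde H_1, \tilde H_2$ their shrunk versions. First I would combine \eqref{eq: fn+1 fn} with Lemma~\ref{lem: sensitivity of solution space} to obtain
$$
d_\infty(H_1, H_2) \le \frac{2 e^d 2^p}{\sqrt{m}} \cdot \frac{2 \binom{p}{\le d}^{1/2}}{n\, 2^p} = \frac{4 e^d \binom{p}{\le d}^{1/2}}{n \sqrt{m}}.
$$
Next, Lemma~\ref{lem: shrinkage onto cube} applied with $a = 2\d/m$, $b = (\Delta-\d)/m$, $c = 1/m$ contributes the shrinkage ratio $(b-a)/(c-a) = (\Delta - 3\d)/(1 - 2\d) \le 2\Delta$ (under the mild standing constraint $\d \le 1/4$), yielding
$$
d_\infty(\tilde H_1, \tilde H_2) \le 2\Delta \cdot d_\infty(H_1, H_2).
$$
Finally, Lemma~\ref{lem:Linftysensitivity} applied to $\tilde H_1, \tilde H_2$ with expansion $\lambda = \d/m$ (so that $[a-\lambda, b+\lambda] = [\d/m, \Delta/m]$ is exactly the feasibility set used in the algorithm to define $h_i^*$) gives, after the elementary estimates $b - c \le \Delta/m$ and $(b-a)/\lambda + 2 \le \Delta/\d$,
$$
\|h_1^* - h_2^*\|_\infty^2 \le 4m(b-c)\Big(\frac{b-a}{\lambda}+2\Big)\, d_\infty(\tilde H_1, \tilde H_2) \le \frac{4\Delta^2}{\d}\, d_\infty(\tilde H_1, \tilde H_2).
$$
Multiplying the three estimates and taking a square root, using $\sqrt{32} = 4\sqrt{2}$, produces the claimed bound.

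The one delicate point is that Lemmas~\ref{lem: shrinkage onto cube} and~\ref{lem:Linftysensitivity} require parallel affine subspaces, whereas the shrinkage parameters $\lambda_1, \lambda_2$ depend on the data and can differ. By Proposition~\ref{prop: translates of each other} (using well-conditionedness of $S$), $H_1$ and $H_2$ are translates of the common \emph{linear} subspace $L := W^{>d} \cap S_{\mu_m}$. Because $L$ is linear, $(1-\lambda_i) L = L$ for every $\lambda_i \in [0,1)$, so each $\tilde H_i$ is again a translate of the same $L$; hence $\tilde H_1, \tilde H_2$ are parallel and the two lemmas genuinely apply. After this observation, the rest of the proof is a mechanical composition of previously established Lipschitz bounds; the main bookkeeping task is matching the parameters $a, b, c, \lambda$ to the shrinkage geometry of the algorithm so that each invocation is valid.
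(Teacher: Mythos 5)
Your proof is correct and takes essentially the same approach as the paper's: chaining Lemma~\ref{lem:Linftysensitivity}, Lemma~\ref{lem: shrinkage onto cube}, Lemma~\ref{lem: sensitivity of solution space}, and the single-point bound~\eqref{eq: fn+1 fn}, with the same parameter choices and the same constant bookkeeping. The observation that $\tilde H_1, \tilde H_2$ remain translates of the common linear subspace (since $(1-\lambda_i)L = L$) is a worthwhile extra sanity check that the paper leaves implicit.
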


\begin{proof}
By Proposition~\ref{prop: translates of each other}, the solution subspaces
$$
H_1 = H(f_n)
\quad \text{and} \quad
H_2 = H(f_{n+1})
$$
are translates of each other. The ambient space consists of all functions supported on an $m$-element set $S$, and thus can be identified with $\R^m$.
Let $\tilde{H}_i$ be the result of shrinkage of the subspaces $H_i$ toward the uniform distribution as specified in the algorithm, i.e. the shrinkage onto the cube $[\d/m, \Delta/m]^m$ and toward the uniform distribution $u_m$.
The selection rule for $h^*$ specified in the algorithm is stable in the $L^\infty$ metric.  Indeed, Lemma~\ref{lem:Linftysensitivity} applied for the subspaces $\tilde{H}_i$ and for
$$
a = \frac{2\d}{m}, \quad
b = \frac{\Delta-\d}{m}, \quad
c = \frac{1}{m}, \quad
\l = \frac{\d}{m}
$$
yields
$$
\norm{h^*_1 - h^*_2}_\infty^2
\le \frac{4\Delta^2}{\d} \cdot d_\infty(\tilde{H}_1,\tilde{H}_2).
$$
Next, recall that the shrinkage map is stable.
Indeed, Lemma~\ref{lem: shrinkage onto cube} applied for the same $a,b,c$ yields
$$
d_\infty(\tilde{H}_1,\tilde{H}_2) \le 2\Delta \cdot d_\infty(H_1,H_2).
$$
Furthermore, the solution space is stable. Indeed, Lemma~\ref{lem: sensitivity of solution space} for the uniform density $\mu$ on the cube yields
$$
d_\infty(H_1,H_2)
\le \frac{2e^d2^p}{\sqrt{m}} \, \norm[1]{(f_n-f_{n+1})^{\le d}}_{L^2}.
$$
Finally, recall from \eqref{eq: fn+1 fn} that
$$
\norm[1]{(f_{n+1} - f_n)^{\le d}}_{L^2}
\le \frac{2}{n} \binom{p}{\le d}^{1/2} \frac{1}{2^p}.
$$
Combining all these bounds, we conclude that
$$
\norm{h^*_1 - h^*_2}_\infty^2
\le \frac{4\Delta^2}{\d} \cdot 2\Delta \cdot \frac{2e^d2^p}{\sqrt{m}} \cdot \frac{2}{n} \binom{p}{\le d}^{1/2} \frac{1}{2^p}
\le \frac{32\Delta^3 e^d}{\d n \sqrt{m}} \binom{p}{\le d}^{1/2}.
$$
The proof is complete.
\end{proof}

\subsection{Privacy guarantee}

Finally, we are ready to give the privacy guarantee of our algorithm:

\begin{theorem}[Privacy]				\label{thm: privacy}
  If $k \le \frac{1}{4\sqrt{2}} \e (\d/\Delta)^{3/2} e^{-d/2} \binom{p}{\le d}^{-1/4} \sqrt{n}/m^{3/4}$,
  then Algorithm~\ref{mainalgorithm2} is $\e$-differentially private.
\end{theorem}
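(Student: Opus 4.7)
The plan is to invoke the Private Sampling Lemma (Lemma~\ref{lem: private sampling}) with the output density $h^*$ of Algorithm~\ref{mainalgorithm2} in the role of $f(X)$. That reduces the $\e$-differential privacy guarantee to the uniform ratio bound
\[
\norm{h^*_1/h^*_2}_\infty \le \exp(\e/k)
\]
for every pair of adjacent datasets $X_1, X_2$ differing in a single element, so the whole theorem comes down to establishing this bound.

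First, I would convert the additive $L^\infty$-sensitivity statement of Lemma~\ref{lem: sensitivity of density} into a multiplicative one. The key observation is that the final selection step of Algorithm~\ref{mainalgorithm2} constrains $h^*$ to lie in $[\d/m, \Delta/m]^S$, so $h^*_i(s) \ge \d/m > 0$ for every $s \in S$ and for $i=1,2$. Consequently, pointwise on $S$,
\[
\frac{h^*_1(s)}{h^*_2(s)} \le 1 + \frac{\abs{h^*_1(s) - h^*_2(s)}}{h^*_2(s)} \le 1 + \frac{m}{\d}\, \norm{h^*_1 - h^*_2}_\infty,
\]
and outside $S$ both densities vanish, so the ratio bound need only be checked on $S$.

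Conditional on $S$ being well conditioned---which is exactly the event on which the algorithm does not return ``Failure''---Lemma~\ref{lem: sensitivity of density} applies and yields
\[
\norm{h^*_1 - h^*_2}_\infty \le \frac{4\sqrt{2}\,\Delta^{3/2} e^{d/2}}{\sqrt{\d n}\, m^{1/4}} \binom{p}{\le d}^{1/4}.
\]
On the complementary event the algorithm returns ``Failure'' deterministically for every input, so the privacy condition is trivial there. Combining the two displays and using $1+x \le e^x$, it suffices to require
\[
\frac{m}{\d} \cdot \frac{4\sqrt{2}\,\Delta^{3/2} e^{d/2}}{\sqrt{\d n}\, m^{1/4}} \binom{p}{\le d}^{1/4} \le \frac{\e}{k},
\]
and a routine algebraic rearrangement of this inequality reproduces exactly the hypothesis $k \le \tfrac{1}{4\sqrt{2}}\e(\d/\Delta)^{3/2} e^{-d/2} \binom{p}{\le d}^{-1/4} \sqrt{n}/m^{3/4}$.

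The main obstacle is essentially bookkeeping rather than new analysis: all the hard work---propagating a one-datum change through the solution space (Lemma~\ref{lem: sensitivity of solution space}), the shrinkage (Lemma~\ref{lem: shrinkage onto cube}), and the $L^2$-proximal selection (Lemma~\ref{lem:Linftysensitivity})---has already been packaged into Lemma~\ref{lem: sensitivity of density}. What remains is the additive-to-multiplicative conversion, which is made painless by the built-in lower bound $h^*_i \ge \d/m$ from the cube constraint, and a clean matching of the numerical constants. One small point I would be careful about is the role of the randomness in $S$: because the conditional analysis above yields the ratio bound for \emph{every} well-conditioned realization of $S$, and because the ``Failure'' branch is input-independent, the overall mechanism is $\e$-differentially private with respect to the joint randomness.
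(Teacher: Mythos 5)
Your proposal is correct and follows essentially the same route as the paper: condition on the reduced space $S$, handle the ill-conditioned case by noting failure is input-independent, apply Lemma~\ref{lem: sensitivity of density} in the well-conditioned case, use the lower bound $h^* \ge \d/m$ to pass from additive to multiplicative sensitivity via $1+x \le e^x$, and invoke the Private Sampling Lemma. The algebraic rearrangement you indicate does indeed reproduce the stated bound on $k$ exactly.
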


\begin{proof}
Since the reduced space $S$ is drawn independently of the input data $X$,
we can condition on $S$. If $S$ is ill conditioned, the algorithm returns ``Failure'' regardless of the input data, so the privacy holds trivially. Suppose $S$ is well conditioned.

Let $X_1$ and $X_2$ be a pair of datasets that consist
of at least $n$ elements each and differ from each other by a single element.
By the choice made in the algorithm and by sensitivity of density (Lemma~\ref{lem: sensitivity of density}), we have
$$
h^*_2 \ge \frac{\d}{m}
\quad \text{and} \quad
\abs{h^*_1 - h^*_2}
\le \frac{4\sqrt{2} \Delta^{3/2} e^{d/2}}{\sqrt{\d n} \, m^{1/4}} \binom{p}{\le d}^{1/4}
\eqqcolon \eta
$$
pointwise. Therefore
$$
\abs{h^*_1/h^*_2}
\le 1 + \frac{\eta m}{\d}
\le \exp \Big( \frac{\eta m}{\d} \Big)
\le \exp \Big( \frac{\e}{k} \Big)
$$
pointwise, where the last inequality indeed holds due to our assumption on $k$.
Private Sampling Lemma~\ref{lem: private sampling} completes the proof.
\end{proof}

\begin{remark}		\label{rem: DP simplified}
  Suppose we chose the size $m$ of the reduced space $S$ so that
  $m \asymp e^{2d} \binom{p}{\le d}$. Simplifying the condition in Theorem~\ref{thm: privacy},  we conclude that if $k \ll \sqrt{n} /m$, then Algorithm~\ref{mainalgorithm2} is $o(1)$-differentially private.
  \end{remark}


\subsection{Accuracy guarantee}

The following is the accuracy guarantee of our algorithm:

\begin{theorem}[Accuracy]				\label{thm: accuracy}
  Assume the true data $X=(x_1,\ldots,x_n)$ is drawn independently
  from the cube according to some density $f$, which satisfies
  $\norm{f}_\infty \le \Delta/2^p$.
  Assume that $n \ge 16 \d^{-2}\g^{-1} e^{2d} \binom{p}{\le d}$,
  $16 \d^{-2}\g^{-1} \Delta^{2} e^{2d} \binom{p}{\le d} \le m \le 2^{p/4}$,
  and
  $k \ge 4 \d^{-2} (\log(2/\gamma) + \log \binom{p}{\le d})$.
  Then, with probability at least $1-4\g - \frac{1}{\sqrt{2^{p}}}$,
  the algorithm succeeds, and
  all marginals of the synthetic data $Y$ up to dimension $d$ are
  within $4\d$ from the corresponding marginals of the true data $X$.
\end{theorem}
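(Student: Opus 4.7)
The plan is to bound the total marginal error between the synthetic data $Y$ and the true data $X$ by triangulating through two intermediate densities: $h^*$ (the density produced by Algorithm~\ref{mainalgorithm2}) and $f_n = \frac{1}{n}\sum_{i=1}^n \one_{x_i}$ (the uniform density on $X$). Writing $g_k = \frac{1}{k}\sum_{i=1}^k \one_{y_i}$ for the empirical density of the synthetic sample, every degree-$\le d$ marginal satisfies
\[
\abs{\mathrm{marg}(g_k) - \mathrm{marg}(f_n)} \le \abs{\mathrm{marg}(g_k) - \mathrm{marg}(h^*)} + \abs{\mathrm{marg}(h^*) - \mathrm{marg}(f_n)},
\]
and I aim to bound the two pieces by $\delta$ (Monte Carlo sampling error) and $3\delta$ (shrinkage-induced bias) respectively.

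First, I would verify well-conditioning of $S$ by Proposition~\ref{prop: success} (probability $\ge 1-\gamma$) and simultaneously apply Theorem~\ref{thm: master} with $f$ the true density and $g = 1/2^p$ the uniform density on the cube; the hypothesis $\norm{f}_\infty \le \Delta/2^p$ gives $\alpha = 1$ and $\kappa = \Delta$. The sample-size assumptions meet (up to constants) those of Theorem~\ref{thm: master}, producing with probability $\ge 1-2\gamma$ a function $h_0$ with $\supp(h_0) \subset S$, $h_0^{\le d} = f_n^{\le d}$, and $\norm{h_0 - (f/g) g_m}_\infty \le \delta/m$. Under distinctness of the points of $S$ (which holds with probability $\ge 1 - m^2/2^{p+1} \ge 1 - 1/\sqrt{2^p}$ when $m \le 2^{p/4}$), the identity $(f/g)(s)\, g_m(s) = 2^p f(s)/m$ gives the pointwise envelope $(f/g)g_m(s) \in [0, \Delta/m]$ on $S$, and hence $h_0(s) \in [-\delta/m,\,(\Delta+\delta)/m]$.

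Next is the shrinkage step, which I expect to be the main technical obstacle. A direct check should show that $\lambda_0 = \max\{3\delta/(1+\delta),\, 2\delta/(\Delta + \delta - 1)\}$ is enough to pull $(1-\lambda_0)h_0 + \lambda_0 u_m$ into the target cube $[2\delta/m,\,(\Delta-\delta)/m]^S$, so the algorithm's minimal $\lambda$ satisfies $\lambda \le \lambda_0 \le 3\delta$ (the second term being $O(\delta)$ provided $\Delta$ is bounded sufficiently above $1$, which is needed just for the target cube to be non-degenerate). Since $h^* \in \tilde H$, write $h^* = (1-\lambda) h' + \lambda u_m$ for some $h' \in H$, so $(h^*)^{\le d} - f_n^{\le d} = \lambda\bigl((u_m)^{\le d} - f_n^{\le d}\bigr)$. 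Because every degree-$\le d$ marginal of a probability density takes values in $[0,1]$ and is a linear functional of the degree-$\le d$ part, this yields $\abs{\mathrm{marg}(h^*) - \mathrm{marg}(f_n)} \le \lambda \le 3\delta$.

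Finally, for the Monte Carlo step, the points $y_1,\ldots,y_k$ are i.i.d.\ from $h^*$, so each Walsh value $w_J(y_i) \in \{-1,+1\}$. Hoeffding's inequality gives $\mathbb{P}\bigl(\abs{\tfrac{1}{k}\sum_i w_J(y_i) - \E_{y \sim h^*} w_J(y)} > \delta\bigr) \le 2 e^{-k\delta^2/2}$, and a union bound over the $\binom{p}{\le d}$ indices $\abs{J} \le d$, combined with the assumed lower bound on $k$, yields the uniform deviation $\le \delta$ with probability $\ge 1-\gamma$. Expanding any degree-$d'$ marginal indicator as $2^{-d'}\sum_J \pm w_J$ (a sum of at most $2^{d'}$ signed low-degree Walsh functions) and applying the triangle inequality then gives $\abs{\mathrm{marg}(g_k) - \mathrm{marg}(h^*)} \le \delta$. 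Union-bounding the four events (success, existence of $h_0$, distinctness, concentration) produces the stated probability $\ge 1-4\gamma-1/\sqrt{2^p}$, and the triangle inequality then yields the final bound $4\delta$. The hard part, as noted, is the shrinkage bound: it requires marrying the $L^\infty$-approximation from Theorem~\ref{thm: master} with the two-sided pointwise envelope $(f/g)g_m \in [0,\Delta/m]$ in such a way that only an $O(\delta)$ convex combination with $u_m$ is needed to land in the target cube.
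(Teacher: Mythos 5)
Your proof is correct and follows essentially the same route as the paper: condition on the success of Proposition~\ref{prop: success}, invoke Theorem~\ref{thm: master} with $g=2^{-p}$ and $\kappa=\Delta$ to get a solution $h_0\in H(f_n)$ close to $(f/g)g_m$, use the no-repetition event to bound $(f/g)g_m\in[0,\Delta/m]$, deduce $\lambda\le 3\delta$ from the pointwise envelope on $h_0$, pass the $3\delta$ bias to all degree-$\le d$ marginals via linearity and the fact that marginals of densities lie in $[0,1]$, and finish with concentration of the empirical marginals plus a union bound. Your explicit threshold $\lambda_0 = \max\{3\delta/(1+\delta),\,2\delta/(\Delta+\delta-1)\}$ is a slightly sharper version of the paper's inequality $\frac{2\delta}{m}\le(1-3\delta)h+\frac{3\delta}{m}\le\frac{\Delta-\delta}{m}$, and both reduce to the same requirement $\Delta+\delta\ge 5/3$.

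The one place you genuinely deviate is the Monte-Carlo step, and your version is arguably cleaner: you apply Hoeffding to the $\binom{p}{\le d}$ empirical Walsh averages $\frac1k\sum_i w_J(y_i)$ and then push the bound to marginals through the identity $\one_{\{x(j)=\theta_j,\,j\in J_0\}}=2^{-|J_0|}\sum_{J\subset J_0}\pm w_J$. Because there are exactly $\binom{p}{\le d}$ Walsh functions of degree $\le d$, the union bound matches the $\log\binom{p}{\le d}$ term in the hypothesis on $k$ exactly. The paper instead applies Bernstein to the indicator functions $v(Y_i)$ of each marginal directly; since the number of such marginals of degree $\le d$ is $\sum_{d'\le d}\binom{p}{d'}2^{d'}$ rather than $\binom{p}{\le d}$, the paper's union bound is slightly informal there, and your variant repairs this at no extra cost.
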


\begin{proof}
Proposition~\ref{prop: success} and the choice of $m$ guarantee
that the algorithm succeeds with probability at least $1-\g$.

Furthermore, the uniform density on the cube $g=2^{-p}$ satisfies
$\norm{f/g}_{L^2} \le \norm{f/g}_\infty = \norm{f}_\infty \cdot 2^p \le \Delta$.
Therefore, Theorem~\ref{thm: master} implies that with probability at least $1-2\g$, there exists $h \in H = H(f_n)$ such that
\begin{equation}	\label{eq: h approximation}
\norm{h - (f/g) g_m}_\infty \le \frac{\d}{m}.
\end{equation}
Since $(f/g) g_m$ is a nonnegative function, it follows that
$$
h \ge -\frac{\d}{m}
\quad \text{pointwise}.
$$
The assumption $m \le 2^{p/4}$ implies that with probability $1- \frac{1}{\sqrt{2^{p}}}$ there are no repetitions in $y_{1},\ldots,y_{m}$, which in turn implies
that with probability $1- \frac{1}{\sqrt{2^{p}}}$ we have $\norm{g_m}_\infty \le 1/m$  (otherwise $\norm{g_m}_\infty$ would scale with the number of repetitions in 
$y_{1},\ldots,y_{m}$).

In the following we condition on the event that there are no repetitions in $y_{1},\ldots,y_{m}$.
Since $\norm{f/g}_\infty \le \Delta$ by above and $\norm{g_m}_\infty \le 1/m$, we have 
$\norm{(f/g) g_m}_\infty \le \Delta/m$, so
$$
h \le \frac{\Delta+\d}{m}
\quad \text{pointwise}.
$$
A combination of these two bounds on $h$ implies that
$$
\frac{2\d}{m} \le (1-3\d)h + \frac{3\d}{m}
\le \frac{\Delta-\d}{m}
\quad \text{pointwise},
$$
as long as $\Delta \ge 5/3$.
Since $h \in H$, it follows that the affine subspace $(1-3\d)H + 3\d u_m$ has a nonempty intersection with $[2\d/m, (\Delta-\d)/m]^m$. The minimality property of $\l$ in the algorithm yields
\begin{equation}	\label{eq: lambda small}
\l \le 3\d.
\end{equation}

Recall that a marginal of a function $f: \{-1,1\}^p \to \R$
that corresponds to a subset $J \subset [p]$ of parameters and
values $\theta_j \in \{-1,1\}$ for $j \in J$, is defined as
$$
P(f) = \sum_{x \in \{-1,1\}^p} f(x) v(x)
$$
where $v(x) = \one_{\{x(j) = \theta_j \; \forall j \in J\}}$.

Recall that the solution $h^*$ of the algorithm satisfies
$$
h^* \in \tilde{H} = (1-\l)H + \l u_m
$$
and, by definition of $H$, all members of $H$ have the same marginals up to dimension $d$ as $f_n$.
This and linearity implies that for any marginal up to dimension $d$,
$$
P(h^*) = (1-\l) P(f_n) + \l P(u_m)
$$
Hence
$$
\abs{P(h^*) - P(f_n)} \le \l \abs{P(u_m) - P(f_n)}
$$
Since $u_m$ and $f_n$ are densities, all of their marginals must be within $[0,1]$, so
$\abs{P(u_m) - P(f_n)} \le 1$. Combining this with \eqref{eq: lambda small}, we get
\begin{equation}	\label{eq: Phstar Pfn}
\abs{P(h^*) - P(f_n)} \le 3\d,
\end{equation}
for all marginals up to dimension $d$, with probability at least $1-2\gamma$.

Now we compare the marginals of the density $h^*$ and its empirical counterpart $h^*_k$. We can express
$$
P(h^*_k) - P(h^*) = \frac{1}{k} \sum_{i=1}^k \left( v(Y_i) - \E v(Y_i) \right)
$$
where $Y_i$ are i.i.d. random variables drawn according to the density $h^*$.
Thus, we have a normalized and centered sum of i.i.d. Bernoulli random variables, so Bernstein's inequality (see e.g. \cite[Theorem~2.8.4]{vershyninbook}) yields
$$
\Pr{\abs{P(h^*_k) - P(h^*)} > \delta} \le 2 \exp(- \d^2 k/4) \le \g \binom{p}{\le d}^{-1}
$$
if $k \ge 4 \d^{-2} (\log(2/\gamma) + \log \binom{p}{\le d})$.
Thus, by a union bound, we have
$$
\abs{P(h^*_k) - P(h^*)} \le \d,
$$
simultaneously for all marginals up to dimension $d$, with probability at least $1-\gamma$.

Combining this with \eqref{eq: Phstar Pfn} via the triangle inequality, we conclude that
$$
\abs{P(h^*_k) - P(f_n)} \le 4\d,
$$
for all marginals up to dimension $d$, with probability at least $1-3\gamma$. Recalling that we conditioned on an event with probability
$1- 1/\sqrt{p}$ and applying the union bound completes the proof.
\end{proof}

\begin{remark}[No shrinkage for regular densities]		\label{rem: no shrinkage}
  If the density $f$ from which the true data $X$ is drawn is regular, specifically if $3\d/2^{p} \le f \le (\Delta-2\d)/2^p$ pointwise for some positive numbers $\delta$ and $\Delta$, {\em the algorithm does not apply any shrinkage}. Indeed, in this case
  we have $3\d/m \le (f/g)g_m \le (\Delta-2\d)m$, so it follows from \eqref{eq: h approximation} that $2\d/m \le h \le (\Delta-\d)m$, and thus $H$ has a nonempty intersection with $[2\d/m, (\Delta-\d)m]^S$, hence $\l=0$.
\end{remark}

\section*{Acknowledgement}

M.B. acknowledges support from NSF DMS-2140592. T.S. acknowledges support from NSF-DMS-1737943, NSF DMS-2027248, NSF CCF-1934568 and a CeDAR Seed grant.
 R.V. acknowledges support from NSF DMS-1954233, NSF DMS-2027299, U.S. Army 76649-CS, and NSF+Simons Research Collaborations on the Mathematical and Scientific Foundations of Deep Learning.

\end{document}